\newcommand{\R}{\mathbb{R}}
\newcommand{\QLS}{\textsc{QuLS}}
\newcommand{\SimLS}{\textsc{SimLS}}
\newcommand{\SumLS}{\textsc{SumLS}}
\newcommand{\ValLS}{\textsc{VetoLS}}
\newcommand{\ra}{\rightarrow}
\newcommand{\0}{\textbf{0}}
\newcommand{\1}{\mathds{1}}
\newcommand{\Hyp}{\textsf{Hyp}}
\newcommand{\Odd}{\textsf{Odd}}
\newcommand{\Grid}{\textsf{Grid}}
\newcommand{\opi}{\overline{\pi}}
\newcommand{\upi}{\underline{\pi}}
\newcommand{\poly}{\textsf{poly}}
\newcommand{\polylog}{\textsf{polylog}}
\newtheorem{theorem}{Theorem}[section]
\newtheorem{theo}{Theorem}
\newtheorem*{maintheo*}{Main Theorem}
\newtheorem{corollary}[theorem]{Corollary}
\newtheorem{lemma}[theorem]{Lemma}
\newtheorem{proposition}[theorem]{Proposition}
\newtheorem{definition}[theorem]{Definition}
\begin{document}

\title{The Communication Complexity of Local Search}
\author{Yakov Babichenko \and Shahar Dobzinski \and Noam Nisan}

\date{}

\maketitle

\begin{abstract}
We study the following communication variant of local search. There is some fixed, commonly known graph $G$. Alice holds $f_A$ and Bob holds $f_B$, both are functions that specify a value for each vertex. The goal is to find a local maximum of $f_A+f_B$ with respect to $G$, i.e., a vertex $v$ for which $(f_A+f_B)(v)\geq (f_A+f_B)(u)$ for every neighbor $u$ of $v$.

Our main result is that finding a local maximum requires polynomial (in the number of vertices) bits of communication. The result holds for the following families of graphs: three dimensional grids, hypercubes, odd graphs, and degree 4 graphs. Moreover, we provide an \emph{optimal} communication bound of $\Omega(\sqrt{N})$ for the hypercube, and for a constant dimensional greed, where $N$ is the number of vertices in the graph.

We provide applications of our main result in two domains, exact potential games and combinatorial auctions. First, we show that finding a pure Nash equilibrium in $2$-player $N$-action exact potential games requires polynomial (in $N$) communication. We also show that finding a pure Nash equilibrium in $n$-player $2$-action exact potential games requires exponential (in $n$) communication.

The second domain that we consider is combinatorial auctions, in which we prove that finding a local maximum in combinatorial auctions requires exponential (in the number of items) communication even when the valuations are submodular.

Each one of the results demonstrates an exponential separation between the non-deterministic communication complexity and the randomized communication complexity of a total search problem.
\end{abstract}

\thispagestyle{empty}
\newpage
\setcounter{page}{1}

%
%
%
%

\section{Introduction}

This paper deals with the communication complexity of local search problems.  The general problem involves a search over some ``universe'' $V$, for an element $v^* \in V$ that maximizes, {\em at least ``locally''}, some objective function $f:V\rightarrow \R$.  The notion of ``locality'' is formalized by putting a fixed, known, neighbourhood structure $E$ on the set of elements, so the requirement of local optimality is that for all $u \in V$ such that $(v^*,u) \in E$ we have that $f(v^*) \ge f(u)$.  The notion of local optimality is interesting from two points of view: first, it captures the outcome of a wide range of ``gradual-improvement'' heuristics where the neighbourhood structure represents the types of gradual improvements allowed, and second, locally-optimal solutions provide a notion of stability, where the neighborhood structure models the possible ``deviations'' from stability.  

In the context of computational complexity, local search problems are captured by the complexity class PLS \citep{JPY} which is a subset of the well studied class TFNP (defined in \citep{MP} and studied, e.g., in \citep{PSY, BCEI, DGP, hubacek2017journey}): search problems for which a witness always exists (``total search  problems'') and can by efficiently verified (``in NP'').  The problem has also been widely studied in the model of {\em query complexity} where the cost of an algorithm is the number of black-box queries to the objective function $f$, from the pioneering work of \citep{Ald} on the Boolean hypercube, to a rather complete characterization of not only the deterministic query complexity but also the randomized and even quantum complexities on any graph \citep{SS,Aar,SY}. 

The interest in analyzing local search from a communication complexity point of view is clear: in essentially any application, the objective function $f$ is not really given as a ``black box'' but is somehow determined by the problem structure. When this structure has any element of distributed content then communication may become an important bottleneck. The question of how the information is distributed is key: in the simplest imaginable scenario, the search space $V$ is split somehow between the (say, two) parties, where each party holds the values $f(v)$ for its subset of $v \in V$ (the fixed commonly known neighbourhood structure still involves all of $V$).  However, in this scenario even a global maximum (which is certainly also a local one) can be easily found with a small amount of communication by each player finding the maximum among his subset, and only communicating and comparing the maxima of the parties. Thus, for the problem to be interesting we must split the information $f(v)$ of each vertex between the parties. There are various ways to do this and the most natural one, conceptually and in terms of applications, is probably to split $f$ as the {\em sum} of two functions $f_A:V\rightarrow \R$ and $f_B:V\rightarrow \R$ held by Alice and Bob. So we consider the following problem:

\vspace{0.1in}
\noindent
{\bf Definition:} For a fixed, commonly known graph $G=(V,E)$, the $\SumLS(G)$ communication problem is the following:
Alice holds a function $f_A:V\rightarrow \{1,...,W\}$,
Bob holds a function $f_B:V\rightarrow \{1,...,W\}$, and their goal is to find a vertex $v^* \in V$ such that $f_A(v^*)+f_B(v^*) \ge f_A(u)+f_B(u)$ for all $u \in V$ with $(v^*,u) \in E$.
\vspace{0.1in}

\noindent Determining the communication complexity of $\SumLS$ on certain families of graphs is easy. For example, a simple reduction from disjointness shows that the communication complexity of $\SumLS$ on the clique with $n$ vertices is $\Omega(n)$. Our main theorem proves optimal lower bounds for several important families of graphs, all have small degree. The technical challenge is that the non-deterministic communication complexity of the problem on small degree graphs is clearly low: to {\em verify} that $v^*$ is a local optimum, Alice and Bob need only communicate the values $f(u)$ and $g(u)$ for the small number of $v^*$'s neighbours in the graph (note that the degree of all graphs that we consider is indeed small: $\log N$ or even constant).  There are only a few results in the communication complexity literature that manage to prove good lower bounds for total problems where verification is easy, most notably for Karchmer-Wigderson games \citep{KW,KRW,RM} and for PPAD-like communication problems \citep{BR,GoosRub}.

\begin{maintheo*}
\ 
\begin{enumerate} 
\item The communication complexity of local search on the $n$-dimensional hypercube with $N=2^n$ vertices is $\Omega(\sqrt{N})$.
\item The communication complexity of local search on a constant-dimension grid with $N$ vertices is $\Omega(\sqrt{N})$. 
\item The communication complexity of local search on a specific family of constant degree graphs with $N$ vertices is $\Omega(\sqrt{N})$.
\item The communication complexity of local search on the odd graph with $N$ vertices is $\Omega(\sqrt[4]{N})$.
\end{enumerate} 
\end{maintheo*}
We note that all our bounds hold for \emph{randomized} communication complexity.  Interestingly, the first three bounds are optimal: first, since for these families of graphs an algorithm by \cite{Ald} finds a local optimum with $O(\sqrt N)$ queries in expectation, which clearly implies an analogous communication algorithm with the same efficiency. 



Our proof starts from considering the communication variant of a pebbling game \cite{GP14}. $D=(V,E)$ is a known directed acyclic graph. The input is a boolean assignment for the vertices $b:V \ra \{0,1\}$ such that every source is true ($b(v)=1$) and every sink is false ($b(v)=0$). The output is a false vertex whose all predecessors are true (i.e., $v\in V$ such that $b(v)=0$ and $b(u)=1$ for all $u\in V$, $(u,v)\in E$). \citep{GP14} consider the communication variant of the game which is obtained by distributing the information $b(v)\in \{0,1\}$ of every vertex by a constant size index-gadget $\{0,1\}^3\times [3] \ra \{0,1\}$. They show that for some constant-degree graph $D$ with $N$ vertices the communication complexity of the problem is $\Theta(\sqrt{N})$, which is optimal. 

Our proof is composed of three steps. The first step shows how to reduce the pebbling game to a variant of local search on a graph $G$ ($\ValLS$) where Alice holds 
the function $f$ and Bob holds a set of valid vertices. The goal is to find a local maximum in the subgraph that is composed of the valid vertices. 

The second step is the most technically challenging one. We first define a notion of embedding one graph to the other, and show that if a graph $G$ can be embedded into $H$ then the communication of $\ValLS(H)$ is at least that of $\ValLS(H)$. We then show that the graph $G$ obtained in the previous step can be embedded into each of the families considered in the theorem. This embedding is quite delicate and uses specifics properties of the graph $G$, since the number of vertices of $G$ and $H$ must be almost the same, in order to obtain an optimal bound of $\Omega(\sqrt N)$ for $\ValLS(H)$, where $N$ is the number of vertices of $H$.

Finally, in the third step we show that the communication complexity of $\ValLS$ on any graph is at least that of local search, thus establishing the theorem.

The constants that are obtained in our theorem are quite big (the dimension of the grid has to be at least $119$, and the degree of the constant degree graph is $36$). Thus, we also provide an alternative proof that obtains better constants, at the cost of a worse communication bound. Specifically, we show that there exists a specific family of $4$-degree graphs for which the communication complexity of local search is $\Omega(N^c)$ for some constant $c>0$. We also show a lower bound of the form $\Omega(N^c)$ for the three dimensional grid $N\times N\times 2$. The alternative proof uses the more recent and more generic ``simulation'' lemmas that ``lift'' lower bounds from the query complexity setting to the communication complexity setting \citep{GPW,GPW15,RM}, instead of the ``simulation'' lemma of \citep{GP14} that was developed for specific settings like the pebbling game.  The main technical difficulty that we overcome is that the ``combination gadgets'' used in these lemmas (specifically the index function) are very different from the simple sum that we desire.

We now describe two applications of our basic lower bound. In both applications we study communication variants of problems that are known to be PLS complete, have low non-deterministic complexity and, as we show, high communication complexity.

\subsection{Potential Games}

The communication requirements for reaching various types of equilibria in different types of games have received a significant amount of recent interest (\cite{BR,GoosRub}) as they essentially capture the convergence time of arbitrary dynamics in scenarios where each player only knows his own utilities (``uncoupled dynamics'' \citep{HMas,HMan}) and must ``learn'' information about the others. Of particular importance here is the class of potential games \citep{MS}.  

\vspace{0.1in}
\noindent
{\bf Definition:} An $n$-player game with strategy sets $A_1,...,A_n$ and utility functions $u_1,...,u_n$
is an \emph{exact potential game} if there exists a single potential function 
$\phi : A_1 \times \cdots \times A_n \rightarrow \R$ so that for every player $i$, every two strategies $a_i,a'_i \in A_i$ and every tuple of strategies $a_{-i} \in A_{-i}$ we have that $u_i(a_i,a_{-i})-u_i(a'_i,a_{-i})=\phi(a_i,a_{-i})-\phi(a'_i,a_{-i})$.

The game is an \emph{ordinal potential function}  if there exists a single potential function 
$\phi : A_1 \times \cdots \times A_n \rightarrow \R$ so that for every player $i$, every two strategies $a_i,a'_i \in A_i$ and every tuple of strategies $a_{-i} \in A_{-i}$ we have that $sign(u_i(a_i,a_{-i})-u_i(a'_i,a_{-i}))=sign(\phi(a_i,a_{-i})-\phi(a'_i,a_{-i}))$, i.e., the value of the potential function increases if and only if the player improves his utility.
\vspace{0.1in}

The class of exact potential games includes, in particular, all congestion games. A key property of potential games (exact or ordinal) is that every sequence of better responses converges to an equilibrium and therefore every potential game always has a pure Nash equilibrium.

\citep{HMan} study the communication complexity of pure Nash equilibrium in \emph{ordinal} potential games. They consider $n$-player games where each player has four actions and show (by a reduction from disjointness) that exponential communication is required to distinguish between the case where the game is an ordinal potential game (and thus has a Nash equilibrium) and the case where the game is not a potential game and does not admit any Nash equilibrium. This immediately implies that finding an equilibrium in games that are guaranteed to have one takes $exp(n)$ bits of communication.

Does finding an equilibrium become any easier for \emph{exact} potential games? In \citep{Noam-blog-2} it was shown that exponentially many \emph{queries} are needed to find an equilibrium, but maybe in the communication model the problem becomes much easier. The technical challenge is again that the non-deterministic communication complexity of the problem is low, i.e, verifying that a certain profile is a Nash equilibrium does not require much communication (each player only has to make sure that he plays his best response). Nevertheless, we provide a ray of hope and show that in contrast to ordinal potential games, there is a randomized protocol that uses only $\polylog(|A|)$ (when $|A|=|A_1|\cdot ... \cdot |A_n|$ is the game size) bits of communication and determines whether the game is an exact potential game or not.

We then show that although it is easy to recognize whether a game is an exact potential game or not, finding an equilibrium requires polynomial (in the size of the game) communication (and in particular exponential in the number of players). These results provide a negative answer to an open question posed in \citep{Noam-blog}.

\begin{theo}
For some constant $c>0$, the following problem requires at least $N^c$ communication (even randomized):  Alice gets an $N \times N$ matrix $u_A$ and Bob gets an
$N \times N$ matrix $u_B$, they are promised that the game defined by these matrices is an (exact) potential game and they must output a pure Nash equilibrium of the game.
\end{theo}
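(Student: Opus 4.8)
The plan is to reduce $\SumLS(G)$, for a graph $G$ covered by the Main Theorem (or by the constant-degree / low-dimensional variants discussed above, so that $\SumLS(G)$ requires $\Omega(|V(G)|^{c'})$ communication for some constant $c'>0$), to the problem of computing a pure Nash equilibrium of a two-player exact potential game. The starting observation is the classical correspondence between the two notions: if $\Gamma$ is a two-player game with strategy sets $A$ and $B$ that is an exact potential game with potential $\phi\colon A\times B\to\R$, then the pure Nash equilibria of $\Gamma$ are exactly the local maxima of $\phi$ on the ``unilateral-deviation'' graph on $A\times B$, i.e.\ the rook's graph whose edges join profiles that differ in exactly one coordinate. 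Moreover the information split is the natural one: Alice's payoff matrix equals $\phi$ plus a function of Bob's strategy alone, so Alice can check the player-$1$ equilibrium condition at a candidate profile by herself, and symmetrically for Bob --- which is also the reason the non-deterministic complexity of the problem is tiny.

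It remains to embed a hard $\SumLS(G)$ instance into such a game. Writing Alice's and Bob's payoff matrices as $u_A=\phi+r(b)$ and $u_B=\phi+s(a)$ for the (automatically existing) ``shear'' functions $r,s$, Alice can compute $u_A$ from $f_A$ alone and Bob can compute $u_B$ from $f_B$ alone exactly when $\phi$ decomposes as $\phi(a,b)=F(a,b)+p_A(a)+p_B(b)$ with $F$ independent of the inputs, $p_A$ a function of $f_A$ and $a$ only, and $p_B$ a function of $f_B$ and $b$ only (one then takes $r=-p_B$ and $s=-p_A$). I would set $p_A(a)=\lambda\,f_A(\gamma_A(a))$ and $p_B(b)=\lambda\,f_B(\gamma_B(b))$ for suitable decoding maps $\gamma_A\colon A\to V(G)$, $\gamma_B\colon B\to V(G)$ and a large scaling $\lambda$, and then design the input-independent ``gadget'' $F$ so that every local maximum of the resulting $\phi$ on the rook's graph can be decoded, with no communication, into a local maximum of $f_A+f_B$ on $G$. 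Concretely I would give each player a ``position'' component in $V(G)$ together with a ``direction''/``phase'' component, use $F$ to impose a steep hierarchy of penalties that forces every \emph{improving} unilateral deviation to follow a canonical path that simulates one local-search step along an edge of $G$ (so that the many spurious edges of the rook's graph either do not raise $\phi$ or lead back to a legal configuration), and arrange that the input-dependent terms $p_A,p_B$ are ``activated'' only on the configurations representing genuine coordinated vertices, so that there they behave like $f_A+f_B$ evaluated at a common vertex of $G$. Finally I would verify that $\phi$ is an exact potential for $(u_A,u_B)$, that $|A|,|B|$ are polynomial (indeed, with a position/direction gadget, linear) in $|V(G)|$, and conclude: since $\SumLS(G)$ requires $\Omega(|V(G)|^{c'})$ communication and $|V(G)|\geq N^{\delta}$ for some constant $\delta>0$ for the $N\times N$ game, finding a pure Nash equilibrium requires $\Omega(N^{c})$ communication for some $c>0$; the bound is for randomized protocols because the Main Theorem's lower bounds are.

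I expect the gadget --- the construction of $F$ --- to be the main obstacle. The unilateral-deviation graph on $A\times B$ is vastly denser than any embedding of the sparse graph $G$, so one cannot simply place $G$ on a single coordinate: a player could always deviate to an arbitrary strategy rather than to a $G$-neighbour. The construction must use the auxiliary components and a carefully layered potential to route \emph{all} profitable deviations through legal local-search moves, while keeping the dependence on $f_A$ confined to Alice's coordinate and on $f_B$ to Bob's coordinate, so that the reduction is a genuine communication reduction (not merely a many-one reduction of search problems) and the size blow-up stays polynomial. This is the same kind of bookkeeping that underlies the PLS-completeness proofs for potential and congestion games, adapted here to respect the information split.
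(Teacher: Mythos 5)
Your framework matches the paper's almost exactly: you reduce from $\SumLS(G)$ on a bounded-degree graph with a strong lower bound (the paper uses Theorem~\ref{theo:grid}(\ref{theo:deg-bounded})), you observe that an exact potential game respects the natural information split precisely when $\phi(a,b)=F(a,b)+p_A(a)+p_B(b)$ with $F$ input-independent, $p_A$ depending only on $f_A$ and $a$, and $p_B$ only on $f_B$ and $b$, and you plan a position component plus auxiliary components policed by a steep hierarchy of rewards. This is exactly the structure of the paper's construction (strategies $V\times[W]^5$, coordination bonus $4W\cdot\1_{d(v,w)\le 1}$, truthfulness bonus, and decomposition into an identical-interest game plus an opponent-independent game).

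However, there is a genuine gap in the gadget, and the ``direction/phase'' idea you sketch for the auxiliary component will not close it. Fix a coordinated profile where both positions decode to $v$, and consider Alice deviating toward a $G$-neighbor $u$. With $p_A(a)=\lambda f_A(\gamma_A(a))$ and $F$ input-independent, the change in Alice's utility $u_A=F+p_A$ is $\bigl(F(a',b)-F(a,b)\bigr)+\lambda\bigl(f_A(u)-f_A(v)\bigr)$. For this to be positive iff $(f_A+f_B)(u)>(f_A+f_B)(v)$, the first summand must encode $\lambda\bigl(f_B(u)-f_B(v)\bigr)$. Since $F$ is input-independent, the only way this can happen is if Bob's strategy $b$ itself carries the values $f_B(v),f_B(u)$, i.e.\ each player's strategy must include a \emph{report} of their own $f$-values at the current vertex and its neighbors, not merely a direction or phase flag. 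A direction/phase bit changes nothing about $p_A,p_B$ and therefore changes the potential only through $F$, so the improving/non-improving status of a unilateral step could not track the (input-dependent) comparison $(f_A+f_B)(u)\gtrless(f_A+f_B)(v)$. The paper's gadget supplies exactly this missing piece: a strategy is $(v,\overrightarrow x)\in V\times[W]^5$, the opponent-independent term includes a ``truthfulness'' bonus $4W\cdot\1_{\overrightarrow x=n(v)}$ (note this is a function of $f_A$ and $a$, so $p_A$ need not have your restricted form $\lambda f_A(\gamma_A(a))$), and $F$ contains the cross terms $val^{(w,\overrightarrow y)}(v)+val^{(v,\overrightarrow x)}(w)$ which read the opponent's reported values. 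With these in place, at equilibrium both players are truthful and co-located, and a single-coordinate deviation by Alice to a neighbor $u$ changes her utility by exactly $(f_A+f_B)(u)-(f_A+f_B)(v)$. Your sketch correctly identifies where the difficulty lies, but the specific auxiliary gadget you propose would not suffice; the value-report mechanism is the essential ingredient.
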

\begin{theo}
For some constant $c>0$, the following problem requires at least $2^{cn}$ communication (even randomized):  Alice gets the utility functions of the first $n$ players in a $2n$-player $2$-action game. Bob gets the utility functions of the last $n$ players. They are promised that the game defined by these matrices is an (exact) potential game and they must output a pure Nash equilibrium of the game.
\end{theo}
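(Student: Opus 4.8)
The plan is to reduce the local search problem of the Main Theorem to the pure-equilibrium problem while losing only $\poly$ factors. I would start from $\SumLS$ on the $t$-dimensional hypercube --- which by the Main Theorem has randomized communication complexity $\Omega(2^{t/2})$, or equivalently (and more conveniently) from the intermediate problem $\ValLS$ on the hypercube, which is at least as hard --- and produce an instance of a $2n$-player $2$-action game with $n=\Theta(t)$: a map taking Alice's input to the utilities $u_1,\dots,u_n$ of the first $n$ players, a map taking Bob's input to $u_{n+1},\dots,u_{2n}$, a guarantee that the resulting game is an exact potential game, and a decoding that turns any pure Nash equilibrium into a solution of the original instance using $\poly(t)$ further communication. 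Since $\poly(t)=\polylog(2^t)$, this yields the claimed $2^{cn}$ lower bound, and it is a randomized lower bound because the Main Theorem's bound is.

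Two structural facts drive the reduction. First, a $2n$-player $2$-action game has strategy space $\{0,1\}^{2n}$ and its pure Nash equilibria are exactly the local maxima, with respect to the $2n$-dimensional hypercube, of its potential $\phi:\{0,1\}^{2n}\to\R$. Second, the communication split is rigid: if Alice produces $u_1,\dots,u_n$ from her input alone and Bob produces the rest from his, then analyzing the cycle conditions that characterize exact potential games, via a mixed-difference computation, forces $\phi$ to decompose as
\[
\phi(x)=\Phi_0(x)+F_A(x_1,\dots,x_n)+F_B(x_{n+1},\dots,x_{2n}),
\]
where $\Phi_0$ depends only on the commonly known construction, $F_A$ only on Alice's input, and $F_B$ only on Bob's; conversely, every such $\phi$ is realized by the exact potential game $u_i=\Phi_0+F_A$ ($i\le n$), $u_j=\Phi_0+F_B$ ($j>n$), which has precisely the required split. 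Hence everything reduces to choosing the ``coupling'' $\Phi_0$ on the $2n$-cube together with encodings $f_A\mapsto F_A$, $f_B\mapsto F_B$ --- each $F$ a function of only $n$ of the coordinates --- so that a local maximum of $\phi$ determines, with little extra communication, a solution of the starting instance.

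I expect the design of $\Phi_0$ to be the main obstacle, and a genuinely delicate one. The tempting approach --- let the first $n$ coordinates $y$ and the last $n$ coordinates $z$ be two copies of a hypercube vertex, put $F_A$ on the $y$-copy and $F_B$ on the $z$-copy, and let $\Phi_0$ penalize $y\neq z$ --- fails: a single player flips a bit of $y$ or of $z$ but never of both, so along every edge of the induced structure only one of $F_A,F_B$ changes, and a static penalty then makes the entire ``agreement'' diagonal a set of spurious equilibria (one cannot simultaneously make off-diagonal configurations escapable and keep the diagonal from being a potential well). The fix is to implement a ``coordinated flip'' not by an edge but by a short, forced sequence of single-player moves, controlled by auxiliary ``phase'' coordinates built into $\Phi_0$: the two halves are permitted to disagree only in one designated coordinate at a time and only in a prescribed order, every disagreeing configuration has a unique improving move that advances the synchronization by a $\Phi_0$-amount exceeding the variation of $F_A$ or $F_B$, and the only stuck configurations are those at a synchronized vertex where completing a coordinated flip would not help. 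Equivalently, one embeds the hard graph $G$ of the Main Theorem (suitably gadgetized) into the $2n$-cube as an induced subgraph on which the split objective $F_A+F_B$ faithfully simulates the search, with the $z$-coordinates doubling as the phase index that gates transitions. Verifying that this $\Phi_0$ is consistent with the decomposition above, that it creates no other local maxima anywhere (the ``out-of-structure'' region being killed by large barriers), and that decoding a local maximum is cheap, is the technical heart --- and is exactly where the delicate, specifics-of-$G$ embedding of the Main Theorem's second step is needed.

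Finally, composing Alice's and Bob's maps with any protocol that outputs a pure Nash equilibrium and then with the $\poly(t)$ decoding gives a protocol for $\SumLS$ on the hypercube whose cost exceeds that protocol's cost by only $\poly(t)$; hence that cost is $\Omega(2^{t/2})-\poly(t)$, i.e. $2^{\Omega(n)}$, proving the theorem.
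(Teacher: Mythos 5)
Your structural observation is correct and is in fact a clean way to formalize what the paper's construction achieves implicitly: writing the potential as $\phi(y,z)=\Phi_0(y,z)+F_A(y)+F_B(z)$ with $\Phi_0$ input-independent, $F_A$ depending only on Alice's $n$ coordinates and $f_A$, and $F_B$ only on Bob's $n$ coordinates and $f_B$, and realizing this by $u_i=\Phi_0+F_A$ for $i\le n$ and $u_j=\Phi_0+F_B$ for $j>n$ (which is exactly an identical-interest game plus an opponent-independent game). The mixed-difference argument showing this decomposition is forced is also right, given the exact-potential promise over a product input space. And you correctly pinpoint the same obstacle the paper isolates: a single bit-flip touches only $F_A$ or only $F_B$, so a static ``agreement penalty'' in $\Phi_0$ either traps every synchronized vertex or traps none.

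The gap is in the proposed fix. A ``phase gate'' built into $\Phi_0$ alone cannot work, and the issue is precisely the one your own obstacle analysis raises, now one level down: whether $(v,v)$ should be stuck must depend on the sign of $(f_A(v')-f_A(v))+(f_B(v')-f_B(v))$, but when Alice initiates the coordinated flip by changing a single bit of $y$, the change in her utility is $\bigl(F_A(y')-F_A(y)\bigr)+\bigl(\Phi_0(y',z)-\Phi_0(y,z)\bigr)$, and neither term sees $f_B$. So either $\Phi_0$ always makes the initiating step improving (and then no synchronized vertex is ever a local maximum) or it never does (and then every synchronized vertex is one). The only escape is for $z$ to already carry $f_B$-information that $\Phi_0$ can read off --- i.e.\ Bob's coordinates must encode a \emph{report} of $f_B$-values near his current vertex, and symmetrically for Alice. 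Your description of $z$ as ``doubling as the phase index'' does not say this, and once you do include a report you hit the secondary obstacle you did not address: after Alice moves her vertex, her own report is stale, and restoring it requires further single-bit moves that $\Phi_0$ cannot a priori make improving. The paper's concrete gadget that resolves both issues at once is to report the $f$-values on the radius-$2$ ball around the current vertex (so a radius-$1$ move leaves the report still truthful for the new neighborhood), together with a strict prioritization of the penalty magnitudes; none of this is visible in your sketch. Finally, even granting the gadget, reporting a radius-$2$ ball naively costs $\Theta(t^2)$ vertices and hence $\Theta(t^3)$ bits, which only yields $2^{\Omega(n^{1/3})}$; to reach the claimed $2^{\Omega(n)}$ you additionally need the sparse embedding of $G$ into $\Hyp_n$ (Lemma~\ref{lem:hyp-const}) ensuring that every radius-$2$ ball meets only $O(1)$ vertices of the image, so the report has constant size. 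This second ingredient is essential to the stated bound and is absent from your proposal.
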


\noindent Our proofs are via reductions from local search on (certain) degree 4 graphs
in the two-player $N$-action case, and from local search on the hypercube in
the $2n$-player 2-action case.  While the relation between equilibria of potential games and local maxima is well known and very simple, the reduction is actually quite subtle.
First the neighbourhood structures do not naturally match (in the two-player case), but more crucially
the input to the players here is very limited: only very specifically related matrices $u_A$ and $u_B$ give an (exact) potential game, while the lower bounds for local search were for arbitrary inputs.

We also show that the search for a pure Nash equilibrium in exact potential games can be formulated as a \emph{total search problem}: Either find a pure Nash equilibrium (that is guaranteed to exist in exact potential games) or provide a succinct evidence that the game is not an exact potential game. Interestingly such a succinct evidence of violation of exact potential property is guaranteed to exist by \citep{MS}. As an immediate corollary from our results we deduce hardness of this total search problem.

\subsection{Local Optima in Combinatorial Auctions}

Our second application concerns attempts to weaken the global optimality constraints in market allocations. Consider a combinatorial auction of $m$ indivisible items among $n$ players, each with his own valuation function $v_i$ that gives a real value to every subset of the items.  The usual goal of optimizing social welfare aims to globally maximize $\sum_i v_i(S_i)$ over all allocations $(S_1,...,S_n)$ of the items.

A corresponding notion of equilibrium is the Walrasian equilibrium, which includes also a vector of  prices $p_1,...,p_m$ such that every player receives his globally-optimal set of items at these prices.  
While these notions provide very strong guarantees, they are usually ``too good to be true'': Walreasian equilibria only rarely exist and optimizing social welfare is 
usually infeasible, in essentially any sense of the word, and in particular in the sense of requiring exponential communication \citep{nisan2006communication}.

Several papers have tried to relax the notion of a Walrasian equilibrium or similarly view the allocation problem as a game and analyze the equilibria in this game. In particular, in the model of simultaneous second price auctions \citep{christodoulou2008bayesian} it is easy to see that when the valuations are submodular every allocation that is \emph{locally optimal} can be part of an equilibrium in the game, and the same goes for the endowed equilibrium of \citep{BDO18}. Recall that a locally optimal allocation in a combinatorial auction is an allocation of the items $(S_1,\ldots, S_n)$ such that transferring any single item $j\in S_i$ to some other player $i'$ does not improve the welfare.

Since local optima play a central role in various relaxed notions of equilibria, an obvious question is whether they are easy to find. In \citep{BDO18} it is shown that  for some succinctly represented submodular valuations it is PLS hard to compute a locally optimal allocation in combinatorial auction. Furthermore, in the query model it is shown that finding a locally optimal allocation is as hard as finding a local maximum in the odd graph. Combining the same reduction with our communication hardness of local search on the odd graph, we get that: 

\begin{theo} 
The communication complexity of finding a locally optimal allocation between two players with submodular valuations is $2^{\Omega(n)}$.
\end{theo}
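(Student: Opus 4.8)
The plan is to compose part (4) of the Main Theorem with the query-model reduction of \citep{BDO18}, after checking that this reduction is communication-preserving. Recall that the odd graph $O_k$ has as vertex set the $(k-1)$-subsets of a ground set of size $2k-1$, with edges joining disjoint subsets, so $N:=|V(O_k)|=\binom{2k-1}{k-1}=2^{\Theta(k)}$ and hence $\sqrt[4]{N}=2^{\Omega(k)}$. The reduction of \citep{BDO18} turns an instance of local search on $O_k$ into a two-player combinatorial auction over $n=\Theta(k)$ items with \emph{submodular} valuations $v_1,v_2$, such that every locally optimal allocation of the auction yields a local maximum of the $O_k$-instance. Writing the $O_k$-objective as $f_A+f_B$ --- which is exactly the input of $\SumLS(O_k)$, with Alice holding $f_A$ and Bob holding $f_B$ --- the reduction builds $v_1$ from $f_A$ alone (together with the fixed graph $O_k$, which is common knowledge) and $v_2$ from $f_B$ alone, and the welfare $v_1(S_1)+v_2(S_2)$ of an allocation $(S_1,S_2)$ tracks $f_A+f_B$ on the corresponding vertex. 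Since each party can therefore set up its own valuation with no communication, any randomized protocol that outputs a locally optimal allocation also solves $\SumLS(O_k)$ with no overhead.

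The bound then follows by bookkeeping: by the Main Theorem (4), $\SumLS(O_k)$ needs $\Omega(\sqrt[4]{N})=2^{\Omega(k)}$ communication even for randomized protocols, and since $n=\Theta(k)$ this is $2^{\Omega(n)}$.

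What has to be verified is exactly: (i) that the ``skeleton'' part of $v_1$ and $v_2$ in \citep{BDO18} --- the part that pins locally optimal allocations to the relevant layer of item-sets and makes the single-item-transfer neighborhood of those allocations mirror the disjointness adjacency of $O_k$ --- is independent of the search function, so the $f_A/f_B$ split is respected and the two reward terms live on opposite sides of the partition; (ii) that $v_1$ and $v_2$ remain submodular once the suitably scaled, polynomially bounded reward terms carrying $f_A,f_B$ are added in; and (iii) the quantifier count $n=\Theta(k)$ used above, so that $\sqrt[4]{N}$ is exponential in $n$, not merely in $\log N$. Points (ii) and (iii) are inherited verbatim from \citep{BDO18}, because the communication reduction is syntactically the same construction; the only genuinely new point --- and the main (indeed essentially the only) obstacle --- is (i), namely confirming that the reduction never forces a party to look at the other party's function. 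This is immediate once one observes that in \citep{BDO18} the gadget structure is determined the moment $O_k$ is fixed, and the two halves $f_A,f_B$ of the search objective enter $v_1$ and $v_2$ additively and disjointly.
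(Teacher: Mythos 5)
Your proposal is correct and is essentially the paper's argument: the paper simply states that the \citep{BDO18} query-model reduction (from local search on the odd graph to finding a locally optimal allocation with submodular valuations) composes with the $\Omega(2^{n/2})$ lower bound of Theorem~\ref{theo:opt}(\ref{theo:odd}), and the ground set of $2n-1$ items makes the bound exponential in the number of items. Your more explicit checklist --- that Alice builds $v_1$ from $f_A$ and the fixed $\Odd$-gadget, Bob builds $v_2$ from $f_B$ and the same gadget, so welfare tracks $f_A+f_B$ and the reduction costs no communication --- is exactly the (unstated) verification the paper is invoking.
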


\section{Local Search over Graphs}

In this section we provide communication lower bounds on the communication complexity of local search over several families of graphs. 

\begin{theorem}\label{theo:opt}
The following bound holds for the randomized communication complexity of $\SumLS$:
\begin{enumerate}
\item\label{theo:opt-bounded} $CC(\SumLS(G))= \Omega(\sqrt{N})$, when $G$ is a specific constant-degree (36) graph with $N$ vertices. 
\item\label{theo:opt-hypercube} $CC(\SumLS(\Hyp_n)) = \Omega(\sqrt{N}) = \Omega(2^{n/2})$ where $N=2^n$ is the number of vertices.
\item\label{theo:opt-grid} $CC(\SumLS(H)= \Omega(\sqrt{N})$, when $H$ is a grid with a constant dimension (119) grid with $N$ vertices.
\item\label{theo:odd} $CC(\SumLS(\Odd_n)) = \Omega(2^{n/2})$.
\end{enumerate} 
\end{theorem}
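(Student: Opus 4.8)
The plan is to obtain all four lower bounds from a single reduction chain anchored at the communication pebbling game of \citep{GP14}, which has randomized communication complexity $\Theta(\sqrt M)$ on a suitable constant-degree DAG $D$ with $M$ vertices. All the reductions will be deterministic, hence they transfer lower bounds for randomized protocols. Writing $N$ for the number of vertices of the target graph, the chain is
\[
\text{pebbling}(D)\ \le\ \ValLS(G)\ \le\ \ValLS(H)\ \le\ \SumLS(H),
\]
where ``$\le$'' means ``reduces to'' (the right-hand problem is at least as hard), $G$ is an auxiliary graph built from $D$, and $H$ ranges over the four target families (the degree-$36$ graph, $\Hyp_n$, the $119$-dimensional grid, $\Odd_n$). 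Here $\ValLS$ is the intermediate problem in which Alice holds a value function $f$ on the vertices, Bob holds a set $S$ of ``valid'' vertices, and the output must be a local maximum of $f$ restricted to the subgraph induced by $S$. Since the pebbling bound is $\Theta(\sqrt M)$, the whole point of the chain is to keep $N = |V(H)|$ within a constant factor of $M$ (a polynomial factor will suffice for $\Odd_n$), so that the resulting bound reads $\Omega(\sqrt N)$ (resp.\ $\Omega(N^{1/4}) = \Omega(2^{n/2})$ for the odd graph).

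\textbf{Step 1 (pebbling $\to\ValLS$).} In the communication pebbling game each vertex $v$ of $D$ carries an index gadget: Alice holds a short string $x^v$, Bob holds an index $i^v$, and the bit $b(v)$ equals the indexed coordinate $x^v_{i^v}$. I would build $G$ by replacing each $v$ with a constant number of copies, one per possible index value, with wiring between gadgets mirroring the arcs of $D$. Alice, who knows all of every $x^v$, can label every copy $(v,i)$ with its would-be bit $x^v_i$ and define $f$ accordingly; Bob marks as valid exactly the copies $(v,i^v)$ selected by his indices, so that on the valid subgraph each vertex of $D$ survives through a single copy carrying precisely $b(v)$. The construction is arranged so that the pebbling solution (a false vertex all of whose predecessors are true) corresponds to a local maximum of $f$ on the valid subgraph, and conversely. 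Since $D$ has constant degree and the gadgets have constant size, $G$ has $O(M)$ vertices and a simple, highly structured shape, which Step 2 exploits.

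\textbf{Step 2 (embedding $G$ into $H$): the crux.} I would first isolate a notion of \emph{embedding} of one graph into another --- an injective vertex map $\varphi\colon V(G)\to V(H)$ together with a prescription for the remaining vertices of $H$ --- with the property that any $\ValLS$ instance on $G$ pushes forward to a $\ValLS$ instance on $H$ of the same communication cost: Alice transports $f$ along $\varphi$ and assigns the non-image vertices of $H$ a fixed ``gradient'' leading back into $\varphi(V(G))$, while Bob transports $S$ and also validates all non-image vertices; one then checks that local maxima correspond in both directions, so that $CC(\ValLS(H)) \ge CC(\ValLS(G))$ whenever $G$ embeds into $H$. The heart of the argument is to exhibit such embeddings of the specific $G$ of Step 1 into $\Hyp_n$, into a constant-dimensional grid, into a degree-$36$ graph, and into $\Odd_n$. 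I expect this to be the main obstacle. It is delicate precisely because the bound must be \emph{optimal}: for the first three families we need $N = |V(H)| = O(|V(G)|) = O(M)$, so the embedding cannot afford to pad, and this is where the combinatorial features engineered into $G$ in Step 1 (gadget layering, bounded degree, routing room) have to be matched against the rigid metric structure of hypercubes, grids, and Kneser graphs --- and where the concrete constants $119$ and $36$ are forced. For $\Odd_n$ a coarser embedding that tolerates a polynomial blow-up suffices, which is exactly why it only yields $\Omega(2^{n/2}) = \Omega(N^{1/4})$.

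\textbf{Step 3 ($\ValLS(H)\to\SumLS(H)$).} Finally I would reduce $\ValLS$ to $\SumLS$ on the same graph. Given a $\ValLS$ instance $(f,S)$ on $H$, set $f_A := f$ and let Bob define $f_B(v) := C$ for $v\in S$ and $f_B(v) := C - K\cdot\mathrm{dist}_H(v,S)$ for $v\notin S$, where $K$ exceeds the range of $f$ and $C$ is a shift large enough that all values lie in $\{1,\dots,W\}$ for a suitable polynomial $W$. Then every invalid vertex has a neighbour strictly closer to $S$, hence of strictly larger $f_A+f_B$, so it is never a local maximum; while for a valid vertex $v$ every invalid neighbour has far smaller $f_A+f_B$ and every valid neighbour $w$ satisfies $(f_A+f_B)(v)\ge(f_A+f_B)(w)$ iff $f(v)\ge f(w)$. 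Hence the local maxima of $f_A+f_B$ on $H$ are exactly the $\ValLS$ solutions, so $CC(\SumLS(H))\ge CC(\ValLS(H))$. Composing the three steps yields $CC(\SumLS(H)) = \Omega(\sqrt N)$ for the degree-$36$ graph, $\Hyp_n$ and the $119$-dimensional grid, and $CC(\SumLS(\Odd_n)) = \Omega(2^{n/2})$, which is the theorem.
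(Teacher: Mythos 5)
Your three-step architecture (pebbling $\to \ValLS(G) \to \ValLS(H) \to \SumLS(H)$) matches the paper's exactly, and Steps 1 and 3 are right in spirit: Step~3 in particular is essentially the paper's reduction (you use $\mathrm{dist}_H(v,S)$ where the paper uses $\mathrm{dist}_H(v,v^*)$ for a fixed $v^*\in S$; both give a correct gradient pulling invalid vertices toward the valid set). Step~1 is thin on the details that actually force the local maxima to be exactly the false-with-all-true-predecessors vertices (the paper uses a topological numbering plus a large additive bonus for false vertices, and also wires every source of $D$ to a sink so sources cannot become spurious local maxima), but the replicate-per-index idea is the right one.

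The genuine gap is in the mechanism you propose for Step~2. You define an embedding as merely an injective vertex map $\varphi\colon V(G)\to V(H)$, give the non-image vertices of $H$ a \emph{fixed} gradient pointing back into $\varphi(V(G))$, and let Bob validate \emph{all} non-image vertices. This does not preserve the local-max structure. With $|V(H)|=O(|V(G)|)$ and $\varphi$ only injective, a $G$-edge $\{v,w\}$ will generically \emph{not} map to an $H$-edge; so if $f(w)>f(v)$ for a valid $G$-neighbor $w$ of $v$, the ``threat'' from $w$ is invisible at $\varphi(v)$ and $\varphi(v)$ can become a spurious local maximum. Conversely $\varphi(v)$'s actual $H$-neighbors among $\varphi(V(G))$ can be images of $G$-nonneighbors, forcing $v$ to ``dominate'' the wrong vertices. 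A fixed gradient on non-image vertices cannot repair this because it does not depend on Alice's instance and hence carries no information between $\varphi(v)$ and $\varphi(w)$. The correct notion (the paper's VIED embedding, Definition~\ref{def:vied} and Lemma~\ref{lem:emb}) additionally maps every edge $\{v,w\}\in E_G$ to a simple path $\chi(\{v,w\})$ in $H$ with two structural constraints (edge disjointness and vertex isolation of $\varphi(v)$'s $H$-neighborhood from unrelated paths); Alice's potential on a path interior is the \emph{instance-dependent} linear interpolation $\frac{k}{l}f_G(v)+\frac{l-k}{l}f_G(w)$, so the potential is strictly monotone along each path; and Bob validates only $\varphi(S)\cup\chi(E_G(S))$, i.e.\ the embedded valid vertices and the paths of \emph{valid internal} edges, not all non-image vertices. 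With those three ingredients, path interiors can never be local maxima (strict monotonicity), $\varphi(v)$ for $v\in S$ is a local max iff $v$ is (the interpolated path values at $\varphi(v)$'s neighbors are weighted toward $f_G(v)$, and vertex isolation rules out interference from unrelated paths), and the whole thing is well-defined via edge disjointness. This is also where the concrete constants come from: the hypercube/grid embeddings need Hamiltonian paths in each coordinate block plus $108$ edge-color bits, $3$ counter bits, and a parity bit, which is what produces dimension $\log N + 111$ (and $119$ for the grid).
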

We note that results \ref{theo:opt-bounded}, \ref{theo:opt-hypercube}, and \ref{theo:opt-grid} are optimal since \cite{Ald} provides a randomized algorithm that finds a local maximum in these graph using $O(\sqrt N)$ value queries. 
Result \ref{theo:odd}, on the other hand, is not necessarily optimal because the odd graph has $N\approx 4^{n}$ vertices, so in terms of the number of vertices our lower bound is $\Omega(\sqrt[4]{N})$.

Result \ref{theo:opt-grid} proves an optimal bound for a grid with a constant dimension, but this dimension  is quite large (119). We are able to show that finding a local optimum in the three-dimensional grid is hard, but our lower bound in this case is only $\Omega(N^c)$, for some constant $c>0$ (in contrast to an optimal bound of $\Omega(\sqrt N)$ for the $119$-dimensional grid). To prove this, we first show that finding a local maximum is hard even for degree $4$ graphs.
\begin{theorem}\label{theo:grid}
There exists a constant $c>0$ such that the randomized communication complexity of $\SumLS$ satisfies:
\begin{enumerate}
\item\label{theo:deg-bounded} $CC(\SumLS(G))\geq N^c$, when $G$ is a specific degree 4 graph with $N$ vertices. 
\item\label{theo:dim-grid} $CC(\SumLS(\Grid_{N\times N \times [2]})= N^c$.
\end{enumerate} 
\end{theorem}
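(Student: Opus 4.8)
The plan is to prove part~\ref{theo:deg-bounded} (the degree~$4$ graph) by a query‑to‑communication lifting argument, and then to derive part~\ref{theo:dim-grid} (the $N\times N\times[2]$ grid) from it by routing bounded‑degree graphs into the two‑layer grid. I would deliberately \emph{not} route through the constant‑size index gadget of \citep{GP14} used in Theorem~\ref{theo:opt}, but through the generic simulation lemmas of \citep{GPW,GPW15,RM}, which is what forces the weaker $N^c$ bound but buys the small degree/dimension.

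\textbf{Lifting a query lower bound.} Start from a randomized \emph{query} lower bound for local search: on a suitable base graph $G_0$ with $n_0$ vertices (e.g.\ a constant‑dimensional grid, a ``hidden monotone path'' construction, or the query version of the pebbling game), finding a local maximum of a function $F\colon V(G_0)\to[W]$ with $W=\poly(n_0)$ requires $n_0^{\Omega(1)}$ value queries, indeed $\tilde\Omega(\sqrt{n_0})$ by the bounds cited in the introduction. Compose this search problem with the index gadget $\mathrm{Ind}_m\colon [m]\times\{0,1\}^m\to\{0,1\}$ with $m=\poly(n_0)$: Alice holds tuples of pointers, Bob holds tuples of strings, and the ``value'' of each vertex $v$ is $F(v)$ reconstructed bit by bit as $\log W$ parallel index evaluations. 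A query‑to‑communication simulation lemma of \citep{GPW,GPW15,RM} then gives that this composed search problem has randomized communication complexity $\tilde\Omega(\sqrt{n_0}\cdot\log m)=n_0^{\Omega(1)}$; since the whole instance has $\poly(n_0)$ bits, this is $N^{\Omega(1)}$ in the instance size.

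\textbf{Replacing the index gadget by a sum (the crux).} The composed problem is not yet $\SumLS$: the combined value at each block is an index, not a sum $f_A+f_B$. The key step is to build, for each vertex $v$ of $G_0$, a constant‑degree ``sum‑gadget'' $\Gamma_v$ carrying values of the form $f_A(\cdot)+f_B(\cdot)$, with $f_A$ depending only on Alice's pointers and $f_B$ only on Bob's strings, such that (i) the unique local maximum of $f_A+f_B$ inside $\Gamma_v$ encodes $F(v)$ — for instance via a ``tent'' in $f_A$ centered at the pointer location together with a monotone encoding of the indexed string in $f_B$, so that following the gradient forces a party to ``read $y$ at location $x$'' — and (ii) no spurious local maxima are created. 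One then wires the $\Gamma_v$'s together along the edges of $G_0$ through monotone ``comparison ramps'' that represent, on each edge $(u,v)$, whether $F(v)\ge F(u)$, so that local maxima of the glued graph $H$ occur precisely at the gadgets of local maxima of $F$ on $G_0$. A $\SumLS(H)$ protocol then solves the composed problem with only $\polylog$ overhead, and one checks that $|V(H)|=\poly(n_0\cdot m)=\poly(n_0)$ while $H$ has maximum degree $4$, giving $CC(\SumLS(H))=|V(H)|^{\Omega(1)}=N^c$. For part~\ref{theo:dim-grid}, embed the degree‑$4$ graph $H$ into $\Grid_{N'\times N'\times[2]}$ with $N'=\poly(|V(H)|)$: place the vertices of $H$ as small clusters in one layer, realize edges as vertex‑disjoint monotone ramp paths, and use the second layer $[2]$ to route around crossings (two‑layer channel routing suffices for bounded‑degree graphs); choosing the ramps so that local maxima of the grid correspond bijectively to those of $H$ yields $CC(\SumLS(\Grid_{N\times N\times[2]}))\ge CC(\SumLS(H))=N^c$.

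\textbf{Main obstacle.} The hard part is exactly the sum‑gadget of the middle step — simulating the index function by a bounded‑degree ``sum'' structure so that it faithfully reproduces $F(v)$, creates no extra local maxima inside $\Gamma_v$ or along the inter‑gadget comparison ramps, and blows up the vertex count only polynomially so that the exponent $c>0$ survives; this is precisely the ``combination gadgets versus simple sum'' difficulty flagged in the introduction. A secondary difficulty is making the degree‑$4$ graph $H$ ``grid‑routable'' in only two layers for part~\ref{theo:dim-grid}, which may require interleaving a preliminary embedding step into the construction of $H$ rather than treating the grid embedding as a black box.
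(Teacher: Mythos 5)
Your high-level plan matches the paper's: start from a query lower bound for local search, lift via the generic simulation lemmas (rather than the GP14 pebbling machinery used for Theorem~\ref{theo:opt}), massage the lifted problem into a $\SumLS$ instance on a degree-$4$ graph, and then route that graph into $\Grid_{N\times N\times[2]}$ using the second layer to resolve crossings. You also correctly locate the main difficulty: the index gadget of the lifting theorem is not a sum, and bridging that gap without blowing up the degree is the whole ballgame.

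The gap is exactly there, and it is genuine. Your ``sum-gadget'' $\Gamma_v$ is specified only at the level of intuition (``tent in $f_A$ centered at the pointer, monotone encoding in $f_B$, gradient forces reading $y$ at $x$''), and as stated it does not obviously avoid spurious local maxima or keep the degree bounded: a tent peaked at an arbitrary location among $M=\poly(N)$ choices naively needs either degree $\Omega(M)$ or a chain of length $\Omega(M)$, and the latter makes the ``comparison ramps'' between adjacent $\Gamma_u,\Gamma_v$ depend jointly on both parties' inputs in a way that is no longer a sum. What the paper does differently — and what resolves this — is to not aim directly at $\SumLS$. It first passes through the intermediate problem $\ValLS(G)$, in which Alice holds the whole potential and Bob only designates a set of \emph{valid} vertices. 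In this formulation Bob's private index information can be encoded purely combinatorially (as which vertices are valid), which decouples cleanly from Alice's values. Concretely, each vertex $v$ of the base degree-$3$ expander is replaced by $M$ pairs of binary trees $T^{\mathrm{in}}(v,i)\cup T^{\mathrm{out}}(v,i)$ of depth $3\log M$; Bob's validity set selects the tree with $i=I(v)$ and, along $T^{\mathrm{out}}$, the root-to-leaf path whose label is the binary encoding of $(I(w_1),I(w_2),I(w_3))$ — this is the ``read $y$ at location $x$'' step, implemented by tree descent with degree $\le 4$ throughout. Alice's potential is simply $7a\cdot f(v,i)$ plus a depth term, strictly increasing from the $T^{\mathrm{in}}$ leaves to the $T^{\mathrm{out}}$ leaves, so the only valid local maxima are the correctly decoded $T^{\mathrm{out}}$ leaves, and comparing two such leaves across an edge reduces to comparing $f(v,I(v))$ and $f(w,I(w))$. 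The reduction from $\ValLS$ back to $\SumLS$ is then a separate, generic step (Bob sets $f_B$ to a steep negative distance-to-$S$ penalty), done once at the end and independent of the gadget. Your proposal, by skipping the $\ValLS$ abstraction and trying to build the sum structure directly into the gadget, is attempting to solve a strictly harder combined problem, and it is this combined gadget that you (rightly) flag as missing. For part~\ref{theo:dim-grid}, your two-layer channel-routing idea is essentially the paper's Lemma~\ref{lem:grid}, but note that the correctness argument there also runs through the VIED/$\ValLS$ framework (vertex-isolation and weighted-average potentials on paths) rather than through ad hoc ``monotone ramps'' on a raw $\SumLS$ instance, again because the $\ValLS$ intermediate makes the bookkeeping tractable.
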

The overall structure of the proofs of Theorems \ref{theo:opt} and \ref{theo:grid} is similar, but the proofs use different techniques.
The proof of Theorem \ref{theo:opt} appears in Section \ref{sec:main-pr} and the proof of Theorem \ref{theo:grid} appears in Section \ref{sec:grid-pr}. 

\section{Proof of Theorem \ref{theo:opt}}\label{sec:main-pr}

Our starting point is a communication variant of a pebbling game. In this problem, 
$D=(V,E)$ is a known directed acyclic graph. The input is a boolean assignment for the vertices $b:V \ra \{0,1\}$ such that every source is true ($b(v)=1$) and every sink is false ($b(v)=0$). The output is a false vertex whose all predecessors are true (i.e., $v\in V$ such that $b(v)=0$ and $b(u)=1$ for all $u\in V$, $(u,v)\in E$). Note that the problem is total.

The communication variant of the pebbling game $\textsc{Pebb(D)}$ is defined by distributing the information $b(v)\in \{0,1\}$ of every vertex by a 
constant size index-gadget $\{0,1\}^3\times [3] \ra \{0,1\}$. 

In \citep{GP14} it is shown that there exists a constant degree graph $D$ with $N$ vertices where both the randomized communication complexity of the problem is $\Theta(\sqrt{N})$. The proof is done in three steps.

\paragraph{Step 1} We introduce an intermediate communication problem $\ValLS(G)$ where Alice holds the potential function and Bob holds a subset of valid vertices (equivalently, Bob vetoes the vertices that are not in the set that he holds). The goal is to find a valid local maximum: a valid vertex whose \emph{valid} neighbours have (weakly) lower potential. Given a graph $D$ as above, we construct a constant degree graph $G$ with $O(N)$ vertices and reduce $\textsc{Peb(D)}$ to $\ValLS(G)$. This gives us an optimal communication lower bound for $\ValLS$ for a concrete graph $G$.

\paragraph{Step 2} We define a certain notion of ``embedding'' of one graph into the other. We show that if $G'$ can be embedded in $G$, then $CC(\ValLS(G'))\geq CC(\ValLS(G))$. We use this observation to prove optimal hardness bound of $\ValLS$ over the hypercube by embedding $G$ into an hypercube of dimension $\log(N)+c$ for a constant $c$. 

\paragraph{Step 3} For every graph $G$, we show that $CC(\ValLS(G))\approx CC(\SumLS(G))$.
 
\vspace{0.1in}\noindent We now provide a detailed description of each step.

\subsection{Starting Point: Pebbling Games}\label{sec:step0}
In Section \ref{sec:emb} we use the concrete structure of the constant degree graph $D$ for which the hardness of the pebbling game is proved. Hence, we start with providing an explicit description of the graph $D$.

The vertices of $D$ are given by $V=[M^3]\times [M]\times [M] \times [M]$. Each vertex $v=(k_1,k_2,k_3,k_4)$ has six successors:
$$\{v+(1,\pm 1,0,0),v+(1,0,\pm 1,0),v+(1,0,0,\pm 1)\}$$
where the $\pm 1$ addition in the last three coordinates is done modulo $M$. The addition in the first coordinate is the standard addition. Thus, each vertex has six predecessors:
$$\{v+(-1,\pm 1,0,0),v+(-1,0,\pm 1,0),v+(-1,0,0,\pm 1)\}$$
The sources of the graph are $\{(1,\cdot,\cdot,\cdot)\}$ and its sinks are $\{(M^3,\cdot,\cdot,\cdot)\}$. 

In \citep{GP14} an optimal bound on the communication complexity is proved the following optimal bound on the communication of the following variant of pebbling games. In the communication problem $\textsc{Pebb}(D)$, Alice's input is an assignment $b:V\times [3] \ra \{0,1\}$. Bob's input is an index for each vertex $I:V \ra [3]$. The input satisfies $b(v,I(v))=1$ for every source $v$, and $b(v,I(v))=0$ for every sink $v$. The output is a vertex $v\in S$ such that $b(v,I(v))=0$ and for every predecessor $u$ of $v$ holds $b(u,I(u))=1$.

\begin{theorem}[\citep{GP14}]\label{theo:gp}
$CC(\textsc{Pebb}(D))=\Omega(M^3)$. This bound also holds for randomized protocols.
\end{theorem}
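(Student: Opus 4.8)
The plan is to prove this in two stages, following the ``critical block sensitivity'' framework: first establish a query-complexity lower bound for the ``plain'' pebbling search problem on $D$, and then apply a simulation theorem to lift it to communication complexity through the constant-size index gadget that is already built into the definition of $\textsc{Pebb}(D)$.

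First I would isolate the plain search problem $\textsc{Search}(D)$: given oracle access to a labeling $\beta\colon V\to\{0,1\}$ that is $1$ on sources and $0$ on sinks, output a false vertex all of whose predecessors are true. This is total, since a backward walk from a sink along false vertices must leave the false set at a source. View $D$ as $M^3$ consecutive layers indexed by the first coordinate, each layer a torus $[M]^3$ (last three coordinates, mod $M$), with forward edges $v\mapsto v+(1,\pm e_j)$. I would take as hard inputs ``staircase'' labelings: $\beta$ equal to $1$ on a predecessor-closed set $T$ whose frontier is a one-step-thick monotone surface winding forward through all $M^3$ layers, and $0$ elsewhere; for such a $\beta$ the solution is pinned near the ``tip'' of the frontier. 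Advancing the staircase one layer by flipping a single vertex moves the solution forward, and $\Omega(M^3)$ such advancing moves can be carried out on pairwise-disjoint blocks, so the solution is block-sensitive to $\Omega(M^3)$ disjoint blocks at this critical input. Equivalently, an adversary answering queries can keep the solution's location ambiguous as long as $o(M^3)$ vertices have been queried, using the room inside each torus layer to reroute the frontier away from the queried vertices. This gives $\mathrm{cbs}(\textsc{Search}(D))=\Omega(M^3)$ and, via Yao's principle, randomized query complexity $\Omega(M^3)$.

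Next I would invoke the lifting step. In $\textsc{Pebb}(D)$ the effective bit of vertex $v$ is $b(v,I(v))=g(z_v,I(v))$ with $g\colon\{0,1\}^3\times[3]\to\{0,1\}$ the index gadget, $z_v=(b(v,1),b(v,2),b(v,3))$ Alice's part and $I(v)$ Bob's part; thus $\textsc{Pebb}(D)=\textsc{Search}(D)\circ g^V$. The simulation lemma of \citep{GP14}, tailored to this constant-size gadget and to ``structured'' search problems such as pebbling, turns any (randomized) protocol of cost $C$ into a (randomized) decision tree for $\textsc{Search}(D)$ of depth $O(C)$; because $g$ has constant size the loss is only a constant factor --- which is exactly why one uses this specialized lemma rather than the generic lifting theorems, whose gadgets are large. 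Composing with the query lower bound gives $CC(\textsc{Pebb}(D))=\Omega(M^3)$, and since the argument only tracks decision-tree depth it applies to randomized protocols as claimed. (As $D$ has $N=M^6$ vertices this reads $\Omega(\sqrt N)$, and it is optimal via the $O(\sqrt N)$-query algorithm of \cite{Ald}, which we do not need here.)

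The main obstacle is the query lower bound for $\textsc{Search}(D)$: the family of staircase labelings must be arranged so that (i) the source/sink constraints hold throughout, (ii) advancing the staircase one layer is a genuinely local move that shifts the solution, and (iii) no depth-$o(M^3)$ decision tree can succeed with constant probability --- and this is precisely where the layered-torus geometry of $D$ (enough freedom inside each of the $M^3$ layers to hide the frontier) is used essentially, explaining why $D$ is taken to be this particular constant-degree DAG rather than an arbitrary one. A secondary technical point is verifying that the $3$-bit index gadget is ``hard enough'' for the specialized simulation lemma to incur only constant-factor loss on this structured problem.
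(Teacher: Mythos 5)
The paper does not prove this theorem at all---it is stated as a citation to G\"o\"os and Pitassi \citep{GP14}, and the surrounding machinery (Steps~1--3 in Section~3) takes it as a black box. So there is no proof here for your attempt to be compared against; what you have written is a reconstruction of the \emph{cited} argument. Your reconstruction captures the right scaffolding: prove hardness of the plain search problem on $D$ and lift it through the $\{0,1\}^3\times[3]\to\{0,1\}$ index gadget using a simulation lemma specialized to constant-size gadgets.

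Two places where your account is substantively off, however. First, the simulation lemma in \citep{GP14} does \emph{not} convert a communication protocol of cost $C$ into a randomized decision tree of depth $O(C)$ for $\textsc{Search}(D)$; a query-to-communication lift of that strength is precisely what is \emph{not} known for constant-size gadgets (this is why the generic GPW-style lifting used in the proof of Theorem~\ref{theo:grid} needs $\poly(N)$-size gadgets). What \citep{GP14} actually prove is a direct bound $CC(\textsc{Search}(D)\circ g) = \Omega(\mathrm{cbs}(\textsc{Search}(D)))$, where critical block sensitivity is a quantity potentially much smaller than randomized decision-tree depth---and this is exactly why the whole cbs framework exists rather than ``just lift the query bound.'' So the first half of your sketch should be aiming at a cbs lower bound as an end in itself, not as a stepping stone to a query bound that then gets lifted. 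Second, the cbs lower bound---which is the entire technical content of the theorem---is where your sketch is vaguest, and the specific picture you propose has a problem: you want $\beta$ to be the indicator of a predecessor-closed set $T$ whose frontier is ``one-step-thick.'' But in this DAG every vertex has six predecessors, so any nonempty predecessor-closed $T$ reaching a late layer must contain a full backward cone that wraps around the $[M]^3$ torus after $O(M)$ layers and thereafter occupies entire layers; the resulting frontier is not a thin pinned tip. Getting a critical input with small answer set and $\Omega(M^3)$ disjoint sensitive blocks genuinely requires the more careful construction in \citep{GP14}, and you are right to flag it as the main obstacle---it is the obstacle, and it is not resolved by the staircase intuition as stated.
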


\subsection{Step 1: From Pebbling to $\ValLS$}\label{sec:veto}

Given a graph $G$, the communication problem $\ValLS(G)$ is defined as follows. Alice's input is a function $f:V\ra [W]$. Bob's input is a non-empty subset $S\subset V$. The output is a vertex $v\in S$ such that $f(v)\geq f(w)$ for every $w\in S$ such that $\{v,w\}\in E$ (i.e., for every valid neighbour). We show that the communication complexity of $\ValLS(G)$ is at least that of $\textsc{Pebb}(D)$, for some $G$ that is related to $D$. Next we show how to obtain the graph $G$ from $D$.

We construct the graph $G$ in two stages. First, given a graph $D$ of the pebbling game, let $G'$ be an undirected version of $D$ which additionally has an edge from every source of $D$ to some sink of $D$. Let $G$ be the graph that is obtained from $G'$ by replacing each vertex in $G$ with three new vertices and duplicating the edges so that each new vertex is connected to all the copies of its neighbors in $G'$. We call the graph $G$ the \emph{replication graph} of $D$.  

\begin{proposition}\label{pro:peb-ls}
Let $D$ be a graph and $G$ be its replication graph. The communication complexity of $\ValLS(G)$ is at least that of $\textsc{Pebb}(D)$.
\end{proposition}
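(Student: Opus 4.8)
The plan is to reduce $\textsc{Pebb}(D)$ to $\ValLS(G)$, i.e.\ to show that any protocol solving $\ValLS(G)$ yields a protocol for $\textsc{Pebb}(D)$ with only $O(1)$ additive overhead. Given Alice's input $b:V\times[3]\ra\{0,1\}$ and Bob's input $I:V\ra[3]$ for the pebbling instance, we must manufacture Alice's potential function $f$ on $V(G)$ and Bob's valid set $S\subseteq V(G)$. Recall $G$ is obtained by first taking $G'$, the undirected version of $D$ augmented with an edge from each source of $D$ to some sink, and then tripling every vertex (the three copies of $v$ are connected to all three copies of each $G'$-neighbour of $v$). The natural idea is: the three copies of a vertex $v$ correspond to the three coordinates of the index gadget at $v$, and Bob uses his veto power to keep only the copy $I(v)$ valid, so that the valid subgraph of $G$ is an isomorphic copy of $G'$ with a boolean label $\tilde b(v):=b(v,I(v))$ on each vertex. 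Thus $S=\{\,v_{I(v)} : v\in V\,\}$ where $v_j$ denotes the $j$-th copy of $v$.

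Next I would define the potential $f$ so that valid local maxima of $G$ correspond exactly to pebbling solutions. The directed acyclic structure of $D$ gives a natural height/layer function; since the first coordinate $k_1\in[M^3]$ of a vertex strictly increases along every edge of $D$, set a base potential that increases with $k_1$ — say assign to a copy of $v=(k_1,k_2,k_3,k_4)$ something like $2k_1$ if the underlying label is $1$ (``true/pebbled'') and $2k_1-1$ if the label is $0$ (``false''), i.e.\ $f(v_j) = 2k_1 - \mathbf{1}[b(v,j)=0]$. The point of this choice: among the valid vertices, a vertex $v$ with $\tilde b(v)=0$ beats a predecessor $u$ (which has $k_1$ smaller by exactly one, and valid potential $\le 2(k_1-1)=2k_1-2 < 2k_1-1$) automatically, and beats a successor $w$ (which has $k_1$ larger by one) only if $w$ is false-labelled, $f(w_{I(w)})=2k_1+1$ — no wait, that is larger. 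Let me instead invert: we want a false vertex all of whose predecessors are true to be a local max. Predecessors have smaller $k_1$, so their potential must be $\le f(v)$; successors have larger $k_1$, so to make $v$ a local max we need the edge from $v$ to its successor $w$ to ``not count'' — but in $G'$ (undirected) it does count. The fix is to make $f$ \emph{decrease} with $k_1$: set $f(v_j) = -2k_1 + \mathbf{1}[b(v,j)=1]$. Then among valid vertices: $v$ is a local max iff every valid neighbour $w$ has $f(w)\le f(v)$. A valid successor $w$ of $v$ has $k_1$ larger by one, so $f(w)\le -2k_1+1+ ... $ — concretely $f(w_{I(w)}) \le -2(k_1{+}1)+1 = -2k_1-1 < f(v)$ always, so successor-edges never violate local maximality. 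A valid predecessor $u$ has $f(u_{I(u)}) \ge -2(k_1{-}1)+0 = -2k_1+2$; this is $\le f(v)$ iff $f(v)\ge -2k_1+2$ iff $b(v,I(v))=1$ and even then only barely, so actually we need $\tilde b(v)=0$ to force... I will tune the gap (use weights like $3k_1$ versus $3k_1\pm1$) so that the clean statement holds: \emph{a valid vertex $v_{I(v)}$ is a valid local maximum of $G$ under $f$ if and only if $\tilde b(v)=0$ and $\tilde b(u)=1$ for every predecessor $u$ of $v$ in $D$}. The extra source-to-sink edges of $G'$ are harmless here because sources are true and have large-magnitude potential while sinks are false; one checks such an edge can never create a spurious local max (the sink, being false and of small $k_1$ in the decreasing convention, is never a local max, and the source is true so it loses to at least one of its $D$-predecessors — except sources have no predecessors; this is exactly why the source-to-sink edge was added, to guarantee every source has some neighbour beating it, namely a true-labelled vertex of appropriate potential). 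This edge-case bookkeeping around sources and sinks is the part that needs the augmentation in the definition of $G'$, and getting the potential values to simultaneously (i) kill all successor-edge violations, (ii) make ``all predecessors true'' the exact local-max condition, and (iii) neutralize the source–sink edges, is the main obstacle.

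Finally I would verify the reduction is faithful in both directions and has negligible communication cost. Alice can compute $f$ from $b$ alone (the potential depends only on $b(v,j)$ and the known coordinate $k_1$); Bob can compute $S$ from $I$ alone. Totality of $\textsc{Pebb}(D)$ guarantees $S$ is non-empty and that a solution exists, matching the promise of $\ValLS$. Given any $v_{I(v)}\in S$ returned by the $\ValLS(G)$ protocol as a valid local maximum, both players recover $v\in V$ with no communication, and by the equivalence above $v$ is a valid pebbling output: $b(v,I(v))=0$ and $b(u,I(u))=1$ for all predecessors $u$. Hence $CC(\textsc{Pebb}(D)) \le CC(\ValLS(G)) + O(1)$, which is the claim. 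The one genuinely delicate step is the case analysis for the potential function, and I would present it as: fix the convention that $f$ strictly decreases (by a controlled amount) with the first coordinate, add a small $\{0,1\}$-perturbation encoding the pebbling label, and check the three item list above on each edge type of $G'$ (a $D$-edge, and a source–sink edge), noting that tripling vertices changes nothing since all copies of a vertex get the same $f$-value and the valid set picks exactly one copy per vertex, so the valid induced subgraph of $G$ is literally isomorphic to $G'$.
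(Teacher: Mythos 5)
Your reduction framework is right and matches the paper: Bob vetoes all but the copy $(v,I(v))$, so the valid subgraph of $G$ is isomorphic to $G'$ with boolean labels $\tilde b(v)=b(v,I(v))$, and Alice must encode $b$ into a potential so that valid local maxima are exactly the false vertices with all-true predecessors. But the actual potential you construct does not work, and the hand-waving at the critical step (``I will tune the gap'') papers over a structural obstruction rather than a calculational one. You restrict yourself to a small $\pm 1$ label perturbation sitting on top of a term that is monotone in the layer $k_1$. In the undirected graph $G'$, a vertex $v$ at layer $k_1$ has neighbours at layer $k_1-1$ (predecessors) \emph{and} at layer $k_1+1$ (successors), and a local maximum must dominate both sides. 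If the layer term dominates, then regardless of sign convention some neighbour on the higher-potential side always beats $v$, so no interior vertex is ever a local maximum. You notice this yourself (``no wait, that is larger''), flip the sign, run into the symmetric failure on the predecessor side, and then claim tuning $3k_1$ vs.\ $3k_1\pm 1$ will fix it --- it cannot, because both choices keep the layer term dominant.

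The missing idea, and the one the paper uses, is to invert the dominance: make the \emph{label} term overwhelm the entire range of the topological term. The paper sets $f(v) = t(v) + 6N\cdot\mathbf{1}[b(v)=0]$ where $t$ is a (reverse) topological numbering and $6N$ exceeds the spread of $t$. Then a false vertex categorically beats every true neighbour, so its true predecessors never threaten it; false vs.\ false and true vs.\ true comparisons fall back on $t$, and the direction of $t$ is chosen so that a false vertex beats its false successors, a true non-source loses to its predecessor, and (via the extra source--sink edge) a true source loses to its false sink neighbour. Your source--sink discussion is also garbled: you describe the sink as ``a true-labelled vertex'', but sinks are false by the promise, and it is precisely the large false-bonus that makes the source lose to it; in your decreasing-$k_1$ numerology the source would in fact \emph{beat} the sink and remain a spurious local maximum. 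So the gap is not in the bookkeeping but in the choice of scale: you need the label bonus to dominate the topological term, not the other way around.
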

\begin{proof}
Let $V$ denote that set of vertices of $D$ and $V'=V\times [3]$ be the set of vertices of $G$. Let $t:V'\ra \mathbb R$ be a topological numbering of the vertices of $G$. I.e., $t((u,i))>t((v,j))$ if there exists a directed edge $(u,v)$ in $D$.

Alice's input in $\textsc{Pebb}(D)$ is an assignment $b:V\times [3] \ra \{0,1\}$. We use this to define the potential function $f$ that Alice holds in $\ValLS(G)$: for vertex $v\in V'$ let $f(v)=t(v)+6N\1_{b(v)=0}$. Bob's input in $\textsc{Pebb}(D)$ is the function $I:V \ra [3]$. Bob defines the set of valid vertices in $\ValLS(G)$ to be $S=\{(v,I(v)):v\in V\}$. Namely, among the three copies of $v$ only the one with correct index is valid. This choice of valid vertices has the desirable property that the subgraph of \emph{valid} vertices is precisely $G'$ and the assignment $b(v,i)$ over the vertices of $G'$ is precisely the decomposed assignment $b(v,I(v))$.

We argue that the local maxima of $f$ are precisely all false vertices whose all incoming neighbours are true. Those are indeed local maxima, because their ``predecessors" do not have the bonus of $6N$ and their ``successors" have lower topological number. The source cannot be a local maximum because it is a ``true'' vertex and it is connected to a sink that is a ``false'' vertex. A true vertex (other than source) is not local maximum because its predecessor has higher topological number. Similarly, a false vertex with false predecessor is not local maximum. This leaves us only with false vertices whose predecessors are true. 
\end{proof}

\subsection{Step 2: Embedding the Bounded Degree Graph}\label{sec:emb}
In this step we define a certain notion of embedding of one graph into another. We will see that if a graph $G$ can be embedded into $H$ then the communication complexity of local search on $H$ is essentially at least as large as the communication complexity of local search on $G$. We will then see how to embed the graph $G$ of the previous steps into the three dimensional grid, the hypercube, and the odd graph.

\begin{definition}\label{def:vied}
A \emph{vertex-isolated edge-disjoint (VIED) embedding} of a graph $G=(V_G,E_G)$ in a graph $H=(V_H,E_H)$ is a pair of mappings $\varphi : V_G \ra V_H$ and $\chi: E_G \ra P(H)$, where $P(H)$ is the set of \emph{simple paths} on $H$, such that:
\begin{itemize}
\item $\varphi$ is injective.
\item For every edge $\{v,w\}\in E_G$, the path $\chi(\{v,w\})$ connects $\varphi(v)$ to $\varphi(w)$.
\item The interior vertices of the paths $\chi(\{v,w\})$ and $\chi(\{v',w'\})$ are disjoint (edge disjointness).
\item For every $v\in V_G$ and every $\{w,w'\}\in E_G$ such that $v\neq w,w'$ holds $d(\varphi(v),\chi(\{w,w'\}))\geq 2$, where $d$ denotes the distance in $H$ of the vertex from the path (vertex isolation).
\end{itemize}
\end{definition}
That is, in a VIED embedding every edge of $G$ is replaced by a path in $H$ that connects the corresponding vertices such that these paths do not share a vertex. Moreover, for every $v\in V_G$, $\varphi(v)$ is isolated in the sense that no path passes through the neighbours of $\varphi(v)$.  

\begin{lemma}\label{lem:emb}
Let $G$ be a graph and suppose it can be VIED embedded into some other graph $H$. Then $CC(\ValLS(G))\leq CC(\ValLS(H))$.
\end{lemma}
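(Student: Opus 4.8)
The plan is to show that a protocol for $\ValLS(H)$ can be used to solve $\ValLS(G)$ with no (or negligible) overhead, by having Alice and Bob simulate the $H$-instance induced by the VIED embedding. Given an instance of $\ValLS(G)$ where Alice holds $f: V_G \ra [W]$ and Bob holds a nonempty set $S \subseteq V_G$, I would construct an instance of $\ValLS(H)$ as follows. Alice defines a potential $\tilde f$ on $V_H$; Bob defines a valid set $\tilde S \subseteq V_H$. The intuition is: the embedded vertices $\varphi(v)$ carry (a scaled copy of) the original values $f(v)$, and along each path $\chi(\{v,w\})$ the potential interpolates monotonically between the endpoint values, so that any local maximum of the $H$-instance is forced to sit on some $\varphi(v)$, and that $v$ must be a valid local maximum in $G$.

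\textbf{Key steps.} First, I would fix the quantitative scheme for $\tilde f$. Set $\tilde f(\varphi(v)) = C \cdot f(v)$ for a large constant/parameter $C$ (say $C$ larger than the maximal path length, which is at most $|V_H|$), and for an interior vertex $u$ of a path $\chi(\{v,w\})$ at distance $k$ from $\varphi(v)$ along that path, set $\tilde f(u)$ to be $\max(C f(v), C f(w))$ minus a small penalty that increases with distance from the higher endpoint — concretely, make $\tilde f$ strictly ``uphill'' toward whichever endpoint has the larger $f$-value, with all interior values strictly below $\max(Cf(v),Cf(w))$ but above $\max_{x}Cf(x) - C$ is not needed; what is needed is that along the path the unique local maxima are at the endpoints. (Edge-disjointness of the paths is what makes $\tilde f$ well-defined on interior vertices; a vertex of $H$ not used by $\varphi$ or any $\chi$-path is assigned an arbitrarily low value or simply excluded from $\tilde S$.) Second, Bob sets $\tilde S = \{\varphi(v) : v \in S\} \cup \{\text{all interior vertices of } \chi(e) : e = \{v,w\} \in E_G,\ v,w \in S\}$ — i.e., a path is ``switched on'' exactly when both its endpoints are valid in $G$. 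Note Alice can compute $\tilde f$ from $f$ alone and Bob can compute $\tilde S$ from $S$ alone, since $\varphi,\chi$ depend only on the fixed graphs. Third, I would verify the reduction is correct: run the $\ValLS(H)$ protocol on $(\tilde f, \tilde S)$ to get a valid local maximum $x^\star$. Case analysis: if $x^\star = \varphi(v)$ for some $v$, then because of vertex isolation the only $H$-neighbors of $\varphi(v)$ that can lie in $\tilde S$ are the first interior vertices of paths $\chi(\{v,w\})$ with $v,w \in S$ (no path through a neighbor of $\varphi(v)$ belongs to an edge not incident to $v$), and for $x^\star$ to dominate those neighbors one checks — using the ``uphill'' design of $\tilde f$ — that $f(v) \geq f(w)$ for every such valid neighbor $w$, so $v$ is a valid local maximum of the $G$-instance; output $v$. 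If instead $x^\star$ is an interior vertex of some path $\chi(\{v,w\})$ with $v,w\in S$, then by construction $\tilde f$ strictly increases toward one endpoint, so $x^\star$ has a strictly higher valid neighbor on the path — contradicting that $x^\star$ is a local maximum. Hence this case cannot occur, and the reduction always returns a correct answer. This gives $CC(\ValLS(G)) \le CC(\ValLS(H))$ (with randomized protocols the success probability is preserved verbatim, since the reduction is deterministic and local).

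\textbf{Main obstacle.} The delicate point — and the one I expect to spend the most care on — is designing $\tilde f$ on the path interiors so that (i) along each path the local maxima are exactly the two endpoints (forcing the no-interior-maximum case), and (ii) at an embedded vertex $\varphi(v)$ the comparison with its valid path-neighbors faithfully reproduces the comparison $f(v)$ vs.\ $f(w)$ in $G$, rather than being corrupted by the penalty terms. Condition (ii) is why the scaling factor $C$ must dominate the total penalty accumulated along any path; and vertex isolation (distance $\ge 2$ from $\varphi(v)$ to every non-incident path) is exactly the hypothesis that guarantees the neighborhood of $\varphi(v)$ in the ``switched-on'' subgraph of $H$ contains nothing except the immediate path-steps toward the relevant $\varphi(w)$'s, so that the local-maximum condition at $\varphi(v)$ in $H$ is equivalent to the valid-local-maximum condition at $v$ in $G$. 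Once the potential is set up to respect these constraints, the rest is a routine check.
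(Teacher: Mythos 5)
Your proposal is correct and follows essentially the same approach as the paper: Alice extends $f$ to $H$ by carrying $f$ over at $\varphi(V_G)$ and placing monotone values along each path, Bob switches on $\varphi(S)$ together with the interiors of paths between two valid endpoints, and the case analysis relies on vertex isolation and edge disjointness exactly as you describe. The only deviation is cosmetic: you scale by a large $C$ and subtract integer penalties along paths, whereas the paper uses the convex combination $\frac{k}{l}f_G(u)+\frac{l-k}{l}f_G(v)$ for the $k$th vertex on a length-$l$ path — both achieve strict monotonicity along paths (under the implicit distinctness of values) and the same endpoint comparisons.
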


\begin{proof}

Alice's potential is defined as follows. For vertices $w\in \varphi(V_G)$ we define  $f_H(\varphi(v))=f_G(v)$. Consider a vertex $w\in \chi(E_G)$ that belongs to an edge $\{u,v\}\in E_G $. Suppose that $w$ is the $k$'th element in the path $\chi(\{u,v\})$ and $l$ is the total length of this path. Define:
\begin{align}\label{eq:edge}
f_H(w)=\frac{k}{l} f_G(u) + \frac{l-k}{l}f_G(v)
\end{align}

In all other vertices Alice's potential will not play a role because these vertices will not be valid, thus we can simply set $f_H(w)\equiv 0$ for all other vertices.

We recall that Bob's input in $\textsc{VetoLS}(G)$ is $S_G\subset V_G$.
We denote by $E_G(S_G)\subset E_G$ the set of internal edges of $S_G$. 
Bob's subset of valid vertices in $H$ is defined by\footnote{By $\chi(E_G(S_G))$ we obviously mean the corresponding \emph{vertices} in these paths.} $S_H=\varphi(S_G) \cup \chi(E_G(S_G))$.

If $v\in V_G$ is a valid local maximum, then $\varphi(v)\in V_H$ is a valid local maximum because all its valid neighbours are valid edges in which $v$ participates (here we use the isolation property), and the value along these edges is a weighted average of $f_G(v)$ and $f_G(u)\leq f_G(v)$, where $u$ is a valid neighbour of $v$.

We argue that there are no additional valid local maxima in $H$. Indeed, if $v\in V_G$ is not a local maximum then $\varphi(v)\in V_H$ is not a local maximum because there is a valid edge where the potential increases. If $w\in \chi(E_G(S_G))$, by distinctness, $f_G(u)\neq f_G(v)$ therefore in one of the directions of the path $\chi(\{u,v\})$ the potential increases. All other vertices are invalid.
\end{proof}

\subsubsection{An Explicit Description of the Graph G}

In the embeddings we use the specifics of the DAG $D$ for which the hardness of pebbling games is proved in \cite{GP14}. We now explicitly describe the replication graph $G$ that is obtained from $D$ so that Proposition \ref{pro:peb-ls} can be applied.

Let $G'$ be the undirected version of the DAG $D$ for which the hardness of pebbling games is proved with additional edges that connect the sources and sinks of $D$ in a same way other vertices in $D$ are connected. Formally, the vertices of $G'$ are $V=[M^3]\times [M]\times [M] \times [M]$, and the edges are:
$$E=\{(u,v):u-v\in \{(\pm 1,\pm 1,0,0),(\pm 1,0,\pm 1,0),(\pm 1,0,0,\pm 1)\}\}$$

Let $G$ be the graph that is obtained from $G'$ by replacing each vertex in $G$ with three new vertices and duplicating the edges so that each new vertex is connected to all the copies of its neighbors in $G'$. Formally, the vertices of $G$ are $\{(v,i):v\in V,i\in [3]\}$ and the edges are $\{((u,i),(v,j)):(u,v)\in E, i,j\in [3]\}$. Note that $G$ is a graph with $3M^6$ vertices and (constant) degree $d=36$.

\subsubsection{Embedding into the Hypercube}

In this section we show how to embed the replication graph $G$ obtained in the previous step into the hypercube. Moreover, the embedding is such that the number of vertices in the hypercube increases only by a constant factor. This small blowup is crucial for obtaining an optimal $2^{n/2}$ bound.

\begin{lemma}\label{lem:hyp}
The graph $G$ (with $3M^6$ vertices) can be VIED-embedded into the $n$'th-dimensional hypercube $\Hyp_n$ for $n=6\lceil \log M \rceil + 111$. As a corollary, $CC(\ValLS(\Hyp_n))=\Omega(2^{n/2})$.
\end{lemma}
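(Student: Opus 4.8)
\textbf{Proof plan for Lemma \ref{lem:hyp}.}

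The plan is to construct the two maps $\varphi$ and $\chi$ of a VIED embedding of $G$ into $\Hyp_n$ with $n = 6\lceil \log M\rceil + 111$, and then invoke Lemma \ref{lem:emb} together with Proposition \ref{pro:peb-ls} and Theorem \ref{theo:gp} to get the lower bound; note $2^{n/2} = \Theta(M^3)$ precisely because the additive constant $111$ contributes only a constant factor. First I would split the hypercube coordinates into blocks. Recall the vertices of $G$ are $(v,i)$ with $v = (k_1,k_2,k_3,k_4) \in [M^3]\times[M]^3$ and $i\in[3]$. I would reserve about $6\lceil\log M\rceil$ bits to encode $v$ (roughly $3\lceil\log M\rceil$ for $k_1$ and $\lceil\log M\rceil$ for each of $k_2,k_3,k_4$, after padding each coordinate's range up to a power of two — this padding is one source of the constant blowup), a constant number of bits (say $2$) to encode the replica index $i\in[3]$, and the remaining constant number of bits as ``scratch'' or ``routing'' space used to carve out vertex-disjoint paths for the many parallel edges. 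The key point is that each vertex of $G$ has degree $36$, so $\varphi(v,i)$ must be the endpoint of $36$ internally-disjoint paths, and moreover (vertex isolation) none of the paths for edges not incident to $(v,i)$ may come within distance $2$ of $\varphi(v,i)$; both constraints will be satisfied by dedicating disjoint scratch-bit patterns to the $36$ edge-directions and by keeping $\varphi(V_G)$ inside a subcube where all scratch bits are $0$, so that any path leaving a vertex must first flip a scratch bit, buying the distance-$2$ slack.

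The heart of the argument is the routing: for each edge $\{(u,i),(v,j)\}$ of $G$, with $u-v$ one of the $\pm 1$ steps described for $D$ (with the last three coordinates mod $M$, plus the source–sink edges), I need a simple path in $\Hyp_n$ from $\varphi(u,i)$ to $\varphi(v,j)$, all such paths internally vertex-disjoint. The standard device is to route through a ``canonical'' sequence of bit-flips but to first move into a private region of the scratch coordinates indexed by (a hash of) the edge, so that two different edges never collide in their interiors. Concretely: from $\varphi(u,i)$, flip scratch bits to a pattern determined by the edge label (which of the $36$ directions, seen from $u$'s side, so that $u$ and $v$ agree on which private track to use); then flip the $v$-encoding bits and the $i$-vs-$j$ bits in a fixed canonical order to reach the scratch-shifted copy of $\varphi(v,j)$; then flip the scratch bits back to $0$. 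The $+1$ in the first coordinate $k_1$ (which is genuine integer addition, not mod $M$, and ranges over $[M^3]$) needs care: incrementing a binary counter can flip many bits, so the canonical path for that flip is a multi-bit path, and I must ensure these counter-increment paths for different edges stay disjoint — again handled by the per-edge private scratch track, since the whole canonical walk happens inside that track. The mod-$M$ wraparound in coordinates $k_2,k_3,k_4$ is the same issue on a cyclic counter and is handled identically. I would verify edge-disjointness by checking that from the scratch pattern alone one can read off which edge a non-$\varphi$ vertex belongs to, and verify vertex isolation by the subcube argument above.

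The main obstacle I anticipate is the \emph{constant budget}: making all of this fit into exactly $111$ extra bits (independent of $M$) while simultaneously (a) padding four coordinate ranges to powers of two, (b) encoding the $3$ replicas, (c) providing $36$ disjoint private routing tracks per vertex, and (d) leaving enough slack to route the $k_1$-counter increments without collisions. The degree-$36$ and the non-modular counter in the first coordinate are what force the constant to be as large as it is; the delicate part of the write-up is a careful accounting showing $111$ suffices, and that the distance-$2$ isolation condition genuinely holds for \emph{every} non-incident edge-path, not just generic ones. Once the embedding is in hand, the corollary is immediate: $CC(\ValLS(\Hyp_n)) \ge CC(\ValLS(G)) \ge CC(\textsc{Pebb}(D)) = \Omega(M^3) = \Omega(2^{n/2})$.
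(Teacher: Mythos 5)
Your high-level plan (reduce to $\ValLS$ via Proposition~\ref{pro:peb-ls} and Theorem~\ref{theo:gp}, then build a VIED embedding by splitting hypercube bits into index blocks plus constant scratch and routing edges through private tracks) matches the paper's architecture, but your encoding of the index coordinates is where the proof actually breaks.

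The paper encodes each coordinate block using a Hamiltonian path of a sub-hypercube (a Gray code), so that a $\pm 1$ step in $k_1$ (and likewise $k_2,k_3,k_4$ modulo $M$) corresponds to flipping \emph{exactly one} bit. Consequently every edge of $G$ is realized by a path of constant length, and a $3$-bit counter block plus a $108$-bit edge block (one bit per color in a proper edge coloring of $G$) plus one parity bit suffice to make the paths simple, mutually edge-disjoint, and vertex-isolated. You instead propose standard binary with padding to powers of two. Then an increment like $k_2 = 0111 \to 1000$ flips $\Theta(\log M)$ bits, so the path length is $\Theta(\log M)$ rather than constant. Your proposed fix --- a ``per-edge private scratch track'' so that ``from the scratch pattern alone one can read off which edge'' --- cannot work: there are only constantly many scratch patterns (your budget is $111$ bits, of which most already go to the $36$ directions and the replica indices) but $\Omega(M^6)$ edges, so the scratch pattern carries essentially no information about which edge you are on. You would then need to disambiguate using the partially-flipped index bits together with a counter that tracks the position along an $\Theta(\log M)$-step walk; that counter needs $\Omega(\log\log M)$ bits, which contradicts the stated constant $111$. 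And without such a counter, two edges sharing the same direction and replica labels can genuinely collide in their interiors (e.g.\ incrementing $0111\to 1000$ and $0011\to 0100$ by flipping bits low-to-high both pass through $0000$), so edge-disjointness fails. The missing idea is precisely the Gray-code encoding; once you have it, the counter is constant and the proper edge coloring plus parity bit give a clean reversibility argument (given an interior vertex, the edge color, counter value, and parity jointly determine the previous vertex), which is what the paper actually verifies.

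A smaller point: your stated vertex-isolation argument (``keep $\varphi(V_G)$ in a subcube where all scratch bits are $0$, so any path leaving a vertex must first flip a scratch bit, buying the distance-$2$ slack'') only gives distance $\ge 1$ from a single scratch flip; you still need a second differing bit. In the paper this comes from the fact that all interior path vertices have an edge-block bit set \emph{and} encode a vertex of $G$ different from $v$ (or have a nonzero counter), and the parity bit is what makes the backward-recovery at the ambiguous steps unambiguous. Your plan does not mention either the edge coloring or the parity bit, both of which are load-bearing.
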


\begin{proof}
For clarity of exposition we assume that $M=2^c$ is a power of 2. 
We start with some notations and properties of the graph $G$. Recall that the vertices of $G$ are $V=[M^3]\times [M] \times [M] \times [M]\times [3]$.
For a vertex $v=(k_1,k_2,k_3,k_4,i)$, $k_1$ is called the \emph{layer of $v$}. Note that all edges connect $k$ layer vertices to $k+1$ layer vertices. $k_1+k_2+k_3+k_4 \mod 2$ is called the \emph{parity of $v$}. $i$ is called the \emph{replication index of $v$}.
We present an edge coloring of $G$ with $108$ colors in which no two adjacent edges are colored the same (a ``valid'' coloring).
We first color all edges from layer 1 to layer 2 with $54$ colors. Given a vertex $v=(k_2,k_3,k_4)$, edges are specified by a displacement $d\in \{\pm 1,0,0),(0,\pm 1,0),(0,0,\pm 1)\}$ that operates on $(k_2,k_3,k_4)$ and pair of replication indices $i,j\in [3]$ ($i$ is the replication index of the vertex at layer $1$ and $j$ is the replication index of the vertex at layer $2$). Note that we have $6\cdot 9=54$ such specifications. It is easy to verify that coloring these edges in $54$ different colors is a valid edge coloring.
We proceed by coloring all edges between layers $2$ and $3$ with \emph{different} $54$ colors using a similar coloring method. Similarly, all edges from layer $2k-1$ to layer $2k$ are colored as edges between layers $1$ and $2$ and all edges from layer $2k$ to layer $2k+1$ are colored as edges between layers $2$ and $3$.
This defines an edge coloring of $G$.

Now we present some notation. The vertices of the hypercube are partitioned into blocks as follows:

\begin{itemize}
\item For $i=1,...,5$ the \emph{$i$'th index} block consists of bits that represent the $i$'th index. The sizes of the blocks are $(3c,c,c,c,2)$ for $i=1,2,3,4,5$ correspondingly.
\item A \emph{parity bit} memorizes the parity of a vertex. 
\item The \emph{edge} block consists of $108$ bits.
\item The \emph{counter} block consists of $3$ bits that serves as a counter to keep track of the block on which we currently apply the changes along the embedding path (see below).
\end{itemize}

\paragraph{Embedding the vertices.} 
Let $(h_1,...,h_{M^3})$ be a Hamiltonian path of the $3c$-dimensional hypercube. Let $(h'_1,...,h'_{M})$ be a Hamiltonian path of the $c$-dimensional hypercube and $(h''_1,...,h''_{4})$ be a Hamiltonian path of the $2$-dimensional hypercube. To define $\phi(v)$, we embed a vertex $v=(k_1,k_2,k_3,k_4,i)$ into the vertex of the hypercube whose first block is the bits of $h_{k_1}$, the second block is $h'_{k_2}$, then $h'_{k_3}$, $h'_{k_4}$ and $h''_{k_5}$. We set the parity bit to be the parity of $v$, the edge block to $\0$, and the counter block to $\0$.


\paragraph{Embedding the edges.} 
Note that the coloring of $G$ in $108$ colors naturally induces an order on the edges. Every vertex has at most one $m$'th edge, and two adjacent vertices agree on the index of this edge.  
The $m$'th edge of $v$, from $v$ in layer $k_1$ to $u$ in layer $k_1+1$, is defined by the following sequence of bit flipping.
\begin{enumerate}
\item The $m$'th bit in the edge block is flipped to 1.
\item A single bit in the counter block is flipped to encode the integer 1.
\item A single bit in the first index block is flipped to encode the integer $k_1+1$.
\item A single bit in the counter block is flipped to encode the integer 2.
\item If the displacement of the edge is $(\pm 1,0,0)$, a single bit in the second index block is flipped to encode the integer $k_2 \pm 1$. If the displacement of the edge is $(0,\pm 1,0)$, a single bit in the third index block is flipped to encode the integer $k_3 \pm 1$. If the displacement of the edge is $(0,0,\pm 1)$, a single bit in the fourth index block is flipped to encode the integer $k_3 \pm 1$.
\item A single bit in the counter block is flipped to encode the integer 3.
\item The two bits of the fifth index block are flipped (one by one in a fixed order) to encode the integer $j$ (the replication index of $u$).
\item The counter block returns back to $\0$.
\item The $m$'th bit in the edge block is flipped back to $\0$.
\end{enumerate}

It is easy to see that this path ends up at $\phi(u)$ (note that the parity of $v$ and $u$ is the same, and indeed we did not flip the parity bit). We argue that the defined paths are disjoint. It is sufficient to prove that given a node on the path one can recover the previous node. Given the color of the edge and the counter, it is immediate to recover the previous node in all intermediate steps excluding steps (3) and (5). In steps (3) and (5) it is unclear whether we should flip the corresponding index block or the counter block. To determine this we use the parity bit: In step (3), if the parity bit is equal to the parity of the encoded vertices, then it means that we did not flip yet a bit, and to get the previous vertex we set the counter block to encode 0. If the parity bit differs from the parity of the encoded indices, then it means that we have flip a bit, and to get the previous vertex we should flip a bit in the index block. In step (5) we do the opposite. If the parity bit differs from the parity of the encoded indices, then we flip the counter. If the parity bit is equal to the parity of the encoded indices, then we flip the index block.

It is easy to check that the embedding is vertex isolated because of the parity bit.

\end{proof}

\subsubsection{Embedding into the Grid}

\begin{lemma}\label{lem:grid1}
The graph $G$ (with $3M^6$ vertices) can be VIED-embedded in a constant-dimension grid with $O(M^6)$ vertices. As a corollary, $CC(\ValLS(\Grid_d))=\Omega(\sqrt{N})$ for some constant-dimension grid with $N$ vertices.
\end{lemma}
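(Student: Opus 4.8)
The plan is to mimic the hypercube embedding (Lemma~\ref{lem:hyp}) by replacing each ``index block'' of the hypercube with a \emph{segment of a grid path} and each ``bit flip'' with a \emph{move along one coordinate axis}. Concretely, I would take a constant-dimension grid whose coordinates are partitioned into groups matching the five index blocks of $G$, a parity coordinate, an edge coordinate, and a counter coordinate. For each index group I would fix a Hamiltonian-type path through the corresponding sub-grid that visits the $M^3$ (resp. $M$, resp. $3$) values; such paths exist in a grid of constant dimension once the dimension is large enough, since a line (a $1\times\cdots\times1\times L$ grid) already is such a path, but to keep the total vertex count $O(M^6)$ one must instead use a snake-like Hamiltonian path of a $d'$-dimensional sub-grid of side $\approx M^{3/d'}$, exactly as one does when proving that constant-dimension grids simulate the hypercube up to constant blow-up. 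The vertex $v=(k_1,k_2,k_3,k_4,i)$ of $G$ then maps to the grid point whose first group of coordinates is the $k_1$-th point of the first snake, second group the $k_2$-th point of the second snake, and so on, with parity/edge/counter coordinates all set to their base value.

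Next I would define $\chi$ on edges exactly as in the hypercube proof: color the edges of $G$ with $108$ colors (the coloring is identical — it depends only on $D$, not on the target graph), and for the $m$-th edge of $v$ realize the path as the concatenation of (i) a move in the edge coordinate to value $m$, (ii) advancing the counter, (iii) walking along the first snake from $h_{k_1}$ to $h_{k_1+1}$, which is a \emph{single} step because consecutive points of a Hamiltonian path are grid-neighbours, (iv) advancing the counter, (v) walking along the relevant second/third/fourth snake by one step (the displacement $\pm1$ in $G$ becomes one step along that snake), (vi) advancing the counter, (vii) walking the fifth snake to encode the new replication index $j$, (viii) returning the counter to $\0$, (ix) returning the edge coordinate to $0$. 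Since each ``walk along a snake by one adjacency'' is literally one grid edge, and the edge/counter/parity coordinates are separate, the whole path is simple and stays inside a region of size $O(M^6)$.

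The verification of the VIED properties is the same as before. \emph{Injectivity} of $\varphi$ is clear. \emph{Edge-disjointness}: given an interior point of a path one recovers the previous point from the edge coordinate (which color $m$) plus the counter value, and in the two ambiguous phases — advancing the first index coordinate, and advancing a middle index coordinate — one disambiguates ``did the index coordinate move yet?'' using the parity coordinate, just as in Lemma~\ref{lem:hyp}; here ``parity'' of a grid point means the parity of the concatenated tuple $(k_1,k_2,k_3,k_4,k_5)$ read off the snakes, which changes by exactly $1$ whenever we take one step along any index snake. \emph{Vertex isolation} again follows because every $\varphi(v)$ has its edge and counter coordinates equal to $0$ while every interior path vertex has either a nonzero edge coordinate or a nonzero counter coordinate, and a vertex with counter coordinate $0$ but edge coordinate $0$ that lies on a path must have just left $\varphi(v)$ in step (i) (edge coordinate about to become nonzero) — so no interior vertex is within distance $1$ of any $\varphi(v)$ other than its own two endpoints' neighbourhoods, which are handled by the parity argument exactly as in the hypercube case. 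Having established the VIED embedding, Lemma~\ref{lem:emb} gives $CC(\ValLS(\Grid_d)) \geq CC(\ValLS(G)) = \Omega(M^3) = \Omega(\sqrt N)$ since the grid has $N = O(M^6)$ vertices.

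The main obstacle I anticipate is book-keeping the vertex count: using a plain line for each index block would blow $N$ up to a polynomial of higher degree and destroy the $\Omega(\sqrt N)$ bound, so one must use genuinely multi-dimensional snake paths and check that (a) their side lengths multiply out to $O(M^6)$ total and (b) the extra ``scaffolding'' coordinates (parity, edge $=108$ values, counter $=8$ values) only contribute a constant factor — which is why the grid dimension comes out large ($119$). A secondary subtlety is making sure that when the snake path is multidimensional the ``parity'' used for disambiguation still flips by exactly one per step; this is true for any grid's natural bipartition, but one must confirm that the snake visits vertices in strict bipartite-alternating order, which any Hamiltonian path of a grid automatically does.
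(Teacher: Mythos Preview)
Your proposal is essentially the paper's own (sketched) proof: replace each hypercube index block by a constant-dimensional sub-grid equipped with a Hamiltonian cycle and otherwise copy the argument of Lemma~\ref{lem:hyp} verbatim, arriving at the grid $[M^{1.5}]^2 \times [M^{0.5}]^6 \times [2]^{111}$. The one point to tighten is that for index blocks $2,3,4$ the $\pm1$ displacement in $G$ is taken modulo $M$, so you need Hamiltonian \emph{cycles} (not merely snake paths) in those sub-grids so that the wrap-around step $M\to 1$ is still a single grid edge and your one-flip parity disambiguation survives --- the paper invokes cycles for exactly this reason.
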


\begin{proof}(sketch)
The embedding is very similar to the one we presented in Lemma \ref{lem:hyp} for embedding into the hypercube. In the proof of Lemma \ref{lem:hyp} we only used the fact that the hypercube has an Hamiltonian cycle. For the grid, we will take advantage of the observation that the two-dimensional grid has an Hamiltonian cycle.

Specifically, a vertex $v=(k_1,k_2,k_3,k_4,i)$ is embedded into the vertex of the grid whose first block is the bits of that correspond to a Hamiltonian cycle on $\Grid_{M^{1.5}\times M^{1.5}}$, blocks $2-4$ are specified using the Hamiltonian cycle on $\Grid_{M^{0.5}\times M^{0.5}}$, and block $5$ using the Hamiltonian cycle on $\Grid_{2}\times \Grid_2$. We set the parity bit to be the parity of $v$, the edge block to $\0$, and the counter block to $\0$.
Applying very similar arguments to the proof of Lemma \ref{lem:hyp} we establish the embedding of $G$ into the grid $[M^{1.5}]^2\times [M^{0.5}]^6 \times [2]^{111}$.
\end{proof}

\subsubsection{Embedding into the Odd Graph}

\begin{lemma}\label{lem:odd}
There exists a VIED embedding of $\Hyp_n$ in $\Odd_{n+2}$. As a corollary,
$CC(\ValLS(\Odd_n))=\Omega(2^{n/2})$.
\end{lemma}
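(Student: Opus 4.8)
The plan is to exhibit an explicit VIED embedding of $\Hyp_n$ into $\Odd_{n+2}$ and then invoke Lemma~\ref{lem:emb} together with the hypercube bound of Lemma~\ref{lem:hyp}. Recall that the vertices of $\Odd_{n+2}$ are the $(n+1)$-subsets of a ground set $U$ of size $2n+3$, two subsets being adjacent iff they are disjoint. I would partition $U$ into $n$ pairs $\{p_1,q_1\},\dots,\{p_n,q_n\}$ together with three leftover elements $r_1,r_2,r_3$ (the count $2n+3$ forces exactly three leftovers, and this is precisely what makes the construction tight). Map a cube vertex $x\in\{0,1\}^n$ to
\[ \varphi(x)=\{r_1\}\cup\{p_i : x_i=0\}\cup\{q_i : x_i=1\}, \]
which is an $(n+1)$-subset of $U$; $\varphi$ is plainly injective, and since every image contains $r_1$ no two images are disjoint, hence no two are adjacent in $\Odd_{n+2}$ — which is fine, as we want cube edges realized by genuine paths rather than single edges.

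For a cube edge $e=\{x,x'\}$ flipping coordinate $i$, I would take $\chi(e)$ to be the length-two path $\varphi(x)\sim T_e\sim\varphi(x')$, where $T_e:=U\setminus(\varphi(x)\cup\varphi(x'))$. A short count gives $|\varphi(x)\cup\varphi(x')|=n+2$ (the two images agree outside pair $i$ and between them contain both $p_i$ and $q_i$), so $|T_e|=n+1$ and $T_e$ is disjoint from both endpoints; thus $\chi(e)$ is a bona fide simple path in $\Odd_{n+2}$. Explicitly, $T_e$ contains $r_2,r_3$ but not $r_1$, contains for each $j\neq i$ the element of pair $j$ not used by $x$, and contains neither $p_i$ nor $q_i$. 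For edge-disjointness note that each $\chi(e)$ has the single interior vertex $T_e$, and from $T_e$ one recovers $i$ (the unique pair it meets in zero elements) and the restriction of $x$ to $[n]\setminus\{i\}$ (from the used/unused pattern on the other pairs); but this data is exactly the edge $e$, so $T_e=T_{e'}$ forces $e=e'$.

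It remains to check vertex isolation. Fix an edge $e=\{x,x'\}$ and a cube vertex $v\notin\{x,x'\}$; we must show $d(\varphi(v),\varphi(x))$, $d(\varphi(v),\varphi(x'))$ and $d(\varphi(v),T_e)$ are all at least $2$ in $\Odd_{n+2}$. The first two hold because $\varphi$ is injective and no two $\varphi$-images are adjacent. For the third, $\varphi(v)\neq T_e$ since only $\varphi(v)$ contains $r_1$, and $\varphi(v)$ is disjoint from $T_e$ only if $v$ agrees with $x$ on every coordinate $\neq i$ (if $v_j\neq x_j$ for some $j\neq i$ then $\varphi(v)$ contains the pair-$j$ element that $T_e$ also contains), i.e.\ only if $v\in\{x,x'\}$; so for $v\notin\{x,x'\}$ the two are non-adjacent, giving distance $\geq 2$. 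This verifies all four VIED conditions. Finally, Lemma~\ref{lem:emb} gives $CC(\ValLS(\Hyp_n))\leq CC(\ValLS(\Odd_{n+2}))$, and combining with $CC(\ValLS(\Hyp_n))=\Omega(2^{n/2})$ from Lemma~\ref{lem:hyp} and reindexing ($m=n+2$, so $2^{n/2}=\Omega(2^{m/2})$) yields $CC(\ValLS(\Odd_n))=\Omega(2^{n/2})$.

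The main obstacle, I expect, is the vertex-isolation check: one must ensure that no image $\varphi(v)$ of an unrelated cube vertex accidentally becomes adjacent (i.e.\ disjoint) to some path's interior vertex $T_e$. Reserving $r_1$ for vertex images and $r_2,r_3$ for the $T_e$'s is exactly the device that forces the required separation, and it is the only place the precise ground-set size $2n+3$ is used; everything else (injectivity, path validity, edge-disjointness) is routine bookkeeping with the pair structure.
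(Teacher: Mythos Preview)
Your proof is correct and in fact produces exactly the same embedding as the paper's: identifying the pairs $\{p_i,q_i\}$ with $\{i,i+n+1\}$ and $(r_1,r_2,r_3)$ with $(2n+2,\,n+1,\,2n+3)$ turns your $\varphi$ and $T_e$ into the paper's $\phi_2\circ\phi_1$ and its intermediate vertex $(S^c\setminus\{i\})\cup(S+n+1)\cup\{2n+3\}$. The only difference is presentational---the paper obtains the map as a composition $\Hyp_n\hookrightarrow\Hyp_{n+1}\hookrightarrow\Odd_{n+2}$ and checks vertex isolation via the extra hypercube coordinate, whereas you describe the same map directly through the pair structure and the reserved element $r_1$; your route is arguably cleaner.
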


\begin{proof}
We first embed $\Hyp_n$ in $\Hyp_{n+1}$ simply by $\phi_1(v)=(v,0)$ and $\chi_1(\{v,w\})=\{(v,0),(w,0)\}.$ Obviously this embedding is edge disjoint (but not vertex isolated).

We now embed $\Hyp_{n+1}$ in $\Odd_{n+2}$. We refer to each vertex of $\Hyp_{n+1}$ as a subset $S\subset [n+1]$. We denote $S+n+1=\{i+n+1:i\in S\}$. We denote $T^c=[n+1]\setminus T$ (this notation will be relevant for subsets of $[n+1]$ rather than subsets of $[2n+3]$ as the vertices of $\Odd_{n+2}$). 
The embedding is defined by 
\begin{align*}
\phi_2(S)=&S\cup (S^c + n+1). \\
\chi_2(S,S\cup \{i\})=&S\cup  (S^c + n+1) \ra (S^c\setminus \{i\}) \cup (S+n+1) \cup \{2n+3\} \ra \\
& S \cup \{i\} \cup ((S\cup \{i\})^c + n+1)
\end{align*}
It is easy to check that this indeed defines a valid path on $\Odd_{n+2}$. All the defined paths are disjoint because given a vertex on a path $T\cup (T'+n)\cup \{2n+3\}$ we can identify the edge: $S=T'$ and $i$ is the unique element that is missing from both sets $T$ and $T'$.

Now we define the embedding of $\Hyp_n$ in $\Odd_{n+2}$ to be the decomposition of these two embeddings; I.e., $\phi(v)=\phi_2(\phi_1(v))$ and $\chi(e)=\chi_2(\chi_1(e))$. The embedding $(\phi,\chi)$ is edge disjoint because both embeddings $(\phi_1,\chi_1)$ and $(\phi_2,\chi_2)$ are edge disjoint. Now we prove that $(\phi,\chi)$ is vertex isolated. A vertex $\phi_2(\phi_1(v))=S\cup (S^c + 2n)$ has $n+2$ neighbours in $\Odd_{n+2}$. Among these neighbours, $n+1$ participate in an embedding of the outgoing edges of $S\in \Hyp_{n+1}$. So there is a single neighbour, $S^c \cup (S+n+1)$, who is suspected to belong to an embedding of an independent edge. Note that $S^c \cup (S+n+1)=\phi_2(S^c)$ and $S^c\in \Hyp_{n+1}$ \emph{does not} belong to the embedding of $\Hyp_n$ in $\Hyp_{n+1}$: indeed, for every vertex $v\in \Hyp_n$ the complementary vertex $\overline{(v,0)}=(\overline{v},1)\in \Hyp_{n+1}$ does not belong to the embedding of $\Hyp_n$ in $\Hyp_{n+1}$ (neither to $\phi_1(V_{\Hyp_n})$ nor to $\chi_1(E_{\Hyp_n})$). 
\end{proof}

\subsection{Step 3: From $\ValLS$ to $\SumLS$}\label{sec:veto-sum}
First, recall that the potential function gets values in $[W]$. We reduce the problem $\ValLS(G)$ to $\SumLS(G)$.
Alice's potential remains unchanged (i.e., $f_A(v):=f_G(v)$). Bob fixes some valid vertex $v^*\in S$ and sets his potential as follows: $f_B(v):=0$ if $v\in S$, otherwise he sets $f_B(v)=-d(v,v^*)\cdot (W+1)$, where $d$ is the distance in $G$. Indeed every valid local maximum $v$ is a local maximum of the sum because all the valid neighbours have lower sum of potentials $f_A(v)+f_B(v)=f_G(v)\geq f_G(w)=f_A(v)+f_B(v)$ and all invalid neighbours have negative sum of potentials $f_A(w)+f_B(w)\leq W - (W+1)<0$. It is easy to check that every valid vertex that is not a local maximum is not a local maximum of the sum. Finally, every invalid vertex $v$ is not a local maximum of the sum because the neighbour $w$ in the direction of the shortest path to $v^*$ has higher sum of potentials: 
\begin{align*}
f_A(v)+f_B(v) &\leq W - d(v,v^*)(W+1) < -(d(v,v^*)-1)(W+1) \\ &= -(d(w,v^*)-1)(W+1)\leq f_A(w)+f_B(w).
\end{align*}
We apply this reduction on the graphs considered in Lemmas \ref{lem:grid1}, \ref{lem:hyp} and \ref{lem:odd} to deduce the theorem.

\section{Proof of Theorem \ref{theo:grid}}\label{sec:grid-pr}

The overall structure of the proof is similar to that of Theorem \ref{theo:opt}. 
\paragraph{Step 0} We start with a local-search-related communicationally-hard problem over some graph $H$.
\paragraph{Step 1} We use the intermediate problem $\ValLS(G)$, where $G$ is constructed from $H$.
\paragraph{Step 2} We embed $G$ in the three-dimensional grid.
\paragraph{Step 3} We reduce $\ValLS(\Grid)$ to $\SumLS(\Grid)$.

\vspace{0.1in} \noindent However, in order to be able to embed $G$ in the three-dimensional grid, the degree of $G$ should be very low; at most 6.
The pebbling game result of \citep{GP14} does not serve our purposes because the degree of the graph $G$ is 36.
Hence, our starting point is some different local-search-related communicationally hard problem over some degree 3 graph $H$. In Step 1, we carefully modify $H$ to $G$ by increasing the degree only by 1; i.e., $G$ is degree 4 graph. Now, in Step 2 we are able to embed $G$ in the three-dimensional grid. Step 3 is identical to that in the proof of Theorem \ref{theo:opt}.

\subsection{Step 0: The Query Complexity of Local Search and its Simulated Variant}

In the problem $\QLS(H)$ there is a graph $H$ and a function $h$ that gives a value $h(v)$ for every vertex. The function $h$ can only be accessed via queries $h(v)$. Furthermore, for each two vertices $v,u$ are distinct: $h(v)\neq h(u)$. The goal is to find a local maximum of $h$ while minimizing the number of queries.

Santha and Szegedy \citep{SS} introduced a general connection between the query complexity of the local search problem and the expansion of a graph. Since random $3$-regular graphs are expanders with high probability, we have that there exists a degree 3 graph $H$ with $N$ vertices for which finding a local maximum requires $\poly(N)$ queries. However, their construction does not assume that $h(v)\neq h(u)$ for every two vertices $v$ and $u$. This is easy to fix: let $h'(v)=2N\cdot h(v) +v$ (where $v\in [N]$ denotes the index of $v$). Observe that each local maximum of $h'$ is also a local maximum of $h$ and that the query $h'(v)$ can be computed by one query $h(v)$, so the number of queries required to find a local maximum of $h'$ is at least the number of queries required to find a local maximum of $h$. We therefore have:

\begin{lemma}[essentially \citep{SS}]
There exists a degree 3 graph $H$ with $N$ vertices and a function $h'$ such that every vertex has a distinct value for which finding a local maximum requires $\poly(N)$ queries.
\end{lemma}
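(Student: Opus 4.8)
The plan is to transfer the known query-complexity hardness of local search on bounded-degree expanders into the exact form stated, where the subtlety is only the requirement that $h'$ take distinct values on distinct vertices. First I would recall the Santha--Szegedy framework \citep{SS}: they show that if a graph $H$ on $N$ vertices has good vertex expansion, then any (deterministic, and in fact randomized) algorithm solving $\QLS(H)$ must make $\poly(N)$ queries. The precise quantitative bound follows from their separator/expansion argument, so I would simply invoke it as a black box. The second ingredient is that a random $3$-regular graph on $N$ vertices is, with high probability, an expander with expansion constant bounded away from zero (this is a standard fact about random regular graphs, due to Bollob\'as and others). Hence there \emph{exists} a fixed degree-$3$ graph $H$ on $N$ vertices whose query complexity for local search is $\poly(N)$; fix such an $H$ once and for all.

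The remaining work is the distinctness fix, which is the short calculation already sketched in the excerpt. Given the value function $h\colon V \ra \R$ guaranteed by the expander argument, define $h'(v) := 2N\cdot h(v) + v$, identifying $V$ with $[N]$ so that ``$+v$'' makes sense. I would verify two claims. (i) Every local maximum of $h'$ is a local maximum of $h$: if $h'(v)\ge h'(u)$ for all neighbours $u$, then $2N\cdot h(v) + v \ge 2N\cdot h(u) + u$, so $2N(h(v)-h(u)) \ge u - v > -2N$, which forces $h(v)\ge h(u)$ once we note $h$ is integer-valued (or, if $h$ is not integer-valued, one first composes with the rank function of $h$'s values so that the gaps between distinct values of $h$ are at least $1$; this preserves the set of local maxima). (ii) $h'$ is injective: if $h'(v)=h'(u)$ then $2N(h(v)-h(u)) = u-v$, and since $|u-v|<2N$ this is only possible if $h(v)=h(u)$ and $u=v$. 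Finally, a single query to $h$ suffices to compute $h'(v)$ (the index $v$ is known to the algorithm), so any algorithm for $\QLS$ with value function $h'$ yields one with the same query cost for $h$; therefore $\QLS$ with $h'$ also requires $\poly(N)$ queries. This establishes the lemma.

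There is no real obstacle here: the lemma is essentially a citation plus a two-line reduction. The only point that needs a little care is the implicit assumption that the values of $h$ are sufficiently separated for the perturbation $+v \in \{1,\dots,N\}$ to be harmless; if the Santha--Szegedy construction is stated with real-valued $h$, one should first replace $h$ by the rank of $h(v)$ among all values $\{h(u):u\in V\}$ (an order-isomorphic relabelling into $[N]$, which changes neither the set of local maxima nor the query complexity up to a constant), and only then apply $h'(v)=2N\cdot h(v)+v$. With that normalization in place, claims (i) and (ii) above go through verbatim, and the lemma follows.
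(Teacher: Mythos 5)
Your proposal follows the paper's argument exactly: cite Santha--Szegedy for the expansion-based query lower bound, use the existence of degree-3 expanders (random 3-regular graphs), and then apply the perturbation $h'(v)=2N\cdot h(v)+v$ to enforce distinctness while preserving local maxima and query cost. The extra care you take about normalizing $h$ to integer values via ranks is a sensible detail the paper leaves implicit, but it does not change the approach.
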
   

The simulation theorems provides us a recipe to produce problems with high communication complexity, given a problem with high query complexity. In particular, \citep{GPW,AGJKM} suggest the \emph{index-gadget} recipe, which starting from $\QLS(H)$ is translated to the following communication problem $\SimLS(H)$: for each vertex $v\in H$, Alice holds an array of valuations $(f(v,i))_{i\in [M]}$ where $f(v,I(v))=h(v)$ and\footnote{E.g., $M=N^{256}$ in \citep{GPW}.} $M=\poly(N)$. Bob holds the correct index $I(v)\in [M]$. Their goal is to compute a local maximum of the function $f(v,I(v))$. Direct application of the simulation theorems to our setting gives that:
$$CC(\SimLS(H))=\Theta(\log N)QC(\QLS(H))=\poly(N)$$

\subsection{Step 1: The Communication Complexity of $\ValLS$}

In this step we prove the communication hardness of $\ValLS$ on a certain bounded degree graph. We recall the definition of $\ValLS(G)$. Alice's input is a function $f_G:V\ra [W]$. Bob's input is a non-empty subset $S\subset V$. The output is a vertex $v\in S$ such that $f_G(v)\geq f_G(w)$ for every $w\in S$ such that $\{v,w\}\in E$ (i.e., for every valid neighbour).

Unlike the communication pebbling game problem that uses index gadgets of size 3, the simulated $\QLS$ problem uses gadgets of size $M=\poly N$ ($N$ is the number of vertices of $G$). The idea in the proof of Theorem \ref{theo:opt} is to replicate each vertex according to the  gadget size, and connect every vertex with all its replicated neighbours. This idea is impractical here, because the degree of the resulting graph will be huge. Instead, we replace each replicated vertex with degree $3M$ by a carefully chosen binary tree structure in order to reduce the degree.

\begin{figure}[h]
\caption{The graph $G$. The replacement of a vertex by $M$ pairs of binary trees, and the neighbours of the leaves of $T^{out}$.}\label{fig:g}
\centering
\vspace*{3mm}
\begin{tikzpicture}
\node[circle, draw, minimum size=0.8cm] (v) at (-1,0) {$v$};
\node[circle, draw] (w1) at (-2,1) {$w_1$};
\node[circle, draw] (w2) at (-1,1) {$w_2$};
\node[circle, draw] (w3) at (0,1) {$w_3$};

\draw (v) -- (w1);
\draw (v) -- (w2);
\draw (v) -- (w3);

\draw[<->,ultra thick] (0.5,0) -- (1.5,0);

\filldraw[black] (3,0) circle (0.1);
\draw (3,0) -- (1.7,1.3) -- (4.3,1.3) -- (3,0);
\draw (3,0) -- (1.7,-1.3) -- (4.3,-1.3) -- (3,0);

\node at (3,1) {$T^{out}(v,1)$};
\node at (3,-1) {$T^{in}(v,1)$};

\filldraw[black] (6,0) circle (0.1);
\draw (6,0) -- (4.7,1.3) -- (7.3,1.3) -- (6,0);
\draw (6,0) -- (4.7,-1.3) -- (7.3,-1.3) -- (6,0);

\node at (6,1) {$T^{out}(v,2)$};
\node at (6,-1) {$T^{in}(v,2)$};

\filldraw[black] (7.8,0) circle (0.05);
\filldraw[black] (8,0) circle (0.05);
\filldraw[black] (8.2,0) circle (0.05);

\filldraw[black] (10,0) circle (0.1);
\draw (10,0) -- (8.7,1.3) -- (11.3,1.3) -- (10,0);
\draw (10,0) -- (8.7,-1.3) -- (11.3,-1.3) -- (10,0);

\node at (10,1) {$T^{out}(v,M)$};
\node at (10,-1) {$T^{in}(v,M)$};

\draw[decorate,decoration={brace,amplitude=6pt,mirror,raise=4pt},yshift=0pt]
(11.3,0.1) -- (11.3,1.3) node [black,midway,xshift=0.7cm] {
$3a$};

\draw[decorate,decoration={brace,amplitude=6pt,mirror,raise=4pt},yshift=0pt]
(11.3,-1.3) -- (11.3,-0.1) node [black,midway,xshift=0.7cm] {
$3a$};

\draw[decorate,decoration={brace,amplitude=6pt,mirror,raise=4pt},yshift=0pt]
(1.7,-1.3) -- (4.3,-1.3) node [black,midway,yshift=-0.6cm] {
$M^3$};

\filldraw[black] (3.8,1.3) circle (0.1);

\node at (3.8,1.6) {$t_{(i,j,k)}(v,1)$};

\filldraw[black] (3.5,4.3) circle (0.1);
\draw (3.5,4.3) -- (2.2,3) -- (4.8,3) -- (3.5,4.3);

\filldraw[black] (6.5,4.3) circle (0.1);
\draw (6.5,4.3) -- (5.2,3) -- (7.8,3) -- (6.5,4.3);

\filldraw[black] (9.5,4.3) circle (0.1);
\draw (9.5,4.3) -- (8.2,3) -- (10.8,3) -- (9.5,4.3);

\node at (3.5,3.3) {$T^{in}(w_1,i)$};
\node at (6.5,3.3) {$T^{in}(w_2,j)$};
\node at (9.5,3.3) {$T^{in}(w_3,k)$};

\draw (3.8,1.9) -- (2.5,3);
\draw (3.8,1.9) -- (5.5,3);
\draw (3.8,1.9) -- (8.5,3);

\end{tikzpicture}
\end{figure}

\paragraph{The Graph $G$.} Without loss of generality we assume that $M=2^a$ is a power of 2.
We obtain our graph $G$ by replacing every vertex $v\in H$ by a tuple of $M$ graphs $(T^{out}(v,i)\cup T^{in}(v,i))_{i\in M}$, 
where $T^{out}(v,i)\cup T^{in}(v,i)$ denotes two binary trees with an overlapping root, both of depth $\log (M^3)=3a$ (see Figure \ref{fig:g}). 
Roughly speaking, the role of $T^{out}(v,i)$ is to decode the correct indices of the three neighbours, 
and in parallel to split the outgoing edges from $v_i$. 
The role of $T^{in}(v,i)$ is simply to gather the incoming edges into $v_i$.

More formally, the vertices of $T^{out}(v,i)$ at depth $d$ are denoted by $(t_s(v,i))_{s\in \{0,1\}^d}$.
The vertices of $T^{in}(v,i)$ at depth $d$ are denoted by $(t'_s(v,i))_{s\in \{0,1\}^d}$. The vertices at depth $3a$ will be called \emph{leaves}\footnote{Note that they are leaves only with respect to the tree. In the graph $G$ they will not be leaves.}.
As was mentioned above, the vertex at depth 0 of these two trees coincides (i.e., $t_\emptyset (v,i) = t'_\emptyset (v,i)$). 
Now we describe how the leaves of $T^{out}(v,i)$ connect to the leaves of $T^{in}(w,j)$ for $w\neq v$.
For a leaf $t_s(v,i)\in T^{out}(v,i)$ we denote $s=(j_1,j_2,j_3)$ where $j_1,j_2,j_3\in [M]$ are the indices of the three neighbors of $v$, $w_1,w_2,w_3$. The leaf $t_s(v,i)\in G$ has a single edge to the tree $T^{in}(w_1,j_1)$, a single edge to the tree $T^{in}(w_2,j_2)$, and a single edge to the tree $T^{in}(w_3,j_3)$ (see Figure \ref{fig:g}).
In principle, we should specify which leaf exactly in $T^{in}(w_1,j_1)$ is connected to $t_s(v,i)$. However, 
since it will not play any role in our arguments, we just implement a counting argument to ensure that the 
number of neighbours from other trees of every leaf $t'_{s'}(w,j)$ is at most 3. If $w$ has a neighbour $v$, then for every $i\in [M]$
exactly $M^2$ vertices $t_s(v,i)$ will encode the index $j$. So from $T^{out}(v,i)$ we have $M\cdot M^2=M^3$ incoming edges. 
Summing over the 3 neighbours we get $3M^3$ incoming edges. If we distribute them equally among the $M^3$ vertices, we get 3 neighbours for each.

\paragraph{Alice's Potential. }
Alice's potential function is defined by $f_G(t'_s(v,i))=7af(v,i)+3a-|s|$ and $f_G(t_s(v,i))=7af(v,i)+3a+|s|$. Namely the potential in the tree $T'_s(v,i)$ starts at a value of $7af(v,i)$ in the leaves of $T^{in}(v,i)$. It increases by $1$ after every edge
until it gets to the root. At the root we move to the tree $T^{out}(v,i)$ where it proceeds to increase by $1$ until it gets to the leaves of $T^{out}(v,i)$ where the value of the potential is $7af(v,i)+6a$.

\paragraph{Bob's valid Vertices. }Now we define the subset of valid vertices $S$ held by Bob. 
Let $bin(i)\in \{0,1\}^a$ denote the binary representation of an index $i\in [M]$.
We denote by $nbin(v)=(bin(I(w_i)))_{i=1,2,3}$ the binary 
representation of the triple of $v$'s neighbours. For a binary string $b$ we denote by $b_{[k]}$ its first $k$ elements.
A vertex $t_s(v,i)\in S$ iff $i=I(v)$ and $s=nbin(v)_{[|s|]}$ (recall that $I(v)$ is Bob's input in $\SimLS$). Informally speaking the valid vertices are those where the tree $T^{out}(v,i)$ (or $T^{in}(v,i)$) has the correct index, and if the vertex is in $T^{out}(v,i)$ we require, in addition, that the prefix of the encoding of the neighbours' indices will be correct.

\paragraph{Local Maxima in $G$.} Since the potential of Alice increases starting from the leaves of $T^{in}(v,i)$ and ending at the leaves of $T^{out}(v,i)$, and in addition for every valid vertex there exists a valid neighbour with higher (lower) depth in $T^{out}(v,i)$ (in $T^{in}(v,i)$) the valid local maxima appear only on the leaves of $T^{out}(v,i)$. Every valid leaf of $T^{out}(v,i)$ has a potential of $7af(v,I(v))+6a$ (i.e., the correct potential) and is connected to leaves of $T^{in}(w_j,I(w_j))$ for $j=1,2,3$ with a potential of $7af(w_j,I(w_j))$ (i.e., the correct potential of the neighbours). Note that the potential values are integers. Therefore, $7af(v,I(v))+6a \geq 7af(w,I(w))$ if and only if $f(v,I(v))\geq f(w,I(w))$. Hence, there is a one-to-one correspondence between valid local maxima of $f_G$ with respects to the set if valid vertices $S$ and local maxima of $h$ over $H$.

This completes the proof item \ref{theo:deg-bounded} of the Theorem.

\subsection{Step 2: Embedding the Degree 4 Graph Into the Grid}
We VIED embed (see Definition \ref{def:vied}) the degree 4 graph $G$ obtained in the previous step into the grid. We use Lemma \ref{lem:emb} to deduce hardness of $\ValLS$ over the grid.

\begin{lemma}\label{lem:grid}
Every degree 4 graph $G$ with $N$ vertices can be VIED-embedded in $\Grid_{4N\times (2N+2) \times 2}$. As a corollary, $CC(\ValLS(\Grid_{N\times N \times 2}))=\poly(N)$.
\end{lemma}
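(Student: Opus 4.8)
\textbf{Proof proposal for Lemma \ref{lem:grid}.}

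The plan is to mimic the hypercube embedding of Lemma \ref{lem:hyp}, but with only three dimensions available, so I would lay out the copies of $G$'s vertices along one long axis and reserve the other two thin axes ($O(N)$ and $2$) for routing the edge-paths and for the parity/counter bookkeeping that makes the VIED conditions hold. Concretely, since $G$ has degree $4$, orient things so that each vertex $v\in V_G$ gets a dedicated ``column'' at a fixed first coordinate $x_v\in\{1,\dots,4N\}$ — spacing out the columns (one used column every few columns) guarantees the vertex-isolation condition automatically, since no routing path ever needs to enter a neighbouring column of an occupied one. Within the second coordinate of length $2N+2$ I would give each vertex an interval of rows that it ``owns'', and the third coordinate of length $2$ is the analogue of the parity/counter bit from Lemma \ref{lem:hyp}, used to make the routing paths self-identifying (so that from any interior node of a path one can recover the previous node, hence the paths are edge-disjoint).

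The key steps, in order: (1) Fix an ordering $v_1,\dots,v_N$ of $V_G$ and place $\varphi(v_t)$ at grid point $(4t, \text{(row reserved for }v_t), 1)$; the factor $4$ in the first coordinate leaves empty columns that serve as the ``buffer'' realizing vertex isolation. (2) For each edge $\{v_s,v_t\}\in E_G$ with $s<t$, route a simple path $\chi(\{v_s,v_t\})$ that leaves the column of $v_s$, travels along a horizontal ``highway'' in the second dimension at a first-coordinate value strictly between the two columns, then climbs into the column of $v_t$; assign to each edge incident to a given vertex a distinct ``lane'' (there are at most $4$ per vertex) so that two paths touching the same column use disjoint rows there. (3) Use the third coordinate (the two layers) exactly as the counter/parity device of Lemma \ref{lem:hyp}: a path toggles the layer bit when it is in a ``turning'' phase, which lets us decode, at any interior vertex, whether the last move was along the highway or a turn, and thus recover the predecessor — this is what forces the interiors of distinct $\chi$-paths to be disjoint (edge-disjointness) and keeps each $\chi(\{v,w\})$ a simple path. (4) Verify the VIED conditions: $\varphi$ injective by construction; each $\chi(e)$ connects the right endpoints by construction; interior-disjointness from the lane assignment plus the self-identifying layer bit; and vertex-isolation because $\varphi(v_t)$ sits alone in column $4t$, with the adjacent columns $4t\pm 1$ (and hence everything within distance $1$ of $\varphi(v_t)$ except its own edge-paths) left empty of any routing. (5) Count: $4N$ in the first dimension, and the second dimension needs to accommodate $N$ vertex-rows plus the $O(1)$ lanes per vertex's highway segment — one checks $2N+2$ suffices with a careful interleaving of the $\le 4$ lanes per vertex — and $2$ in the third. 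Then apply Lemma \ref{lem:emb} to get $CC(\ValLS(\Grid_{4N\times(2N+2)\times2}))\ge CC(\ValLS(G))$, and combine with the degree-$4$ hardness of $\ValLS(G)$ from Step 1 (item \ref{theo:deg-bounded}) and a relabelling $4N\times(2N+2)\times2 \subseteq \Grid_{N'\times N'\times 2}$ with $N'=\Theta(N)$ to conclude $CC(\ValLS(\Grid_{N\times N\times2}))=\poly(N)$.

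The main obstacle I expect is the \emph{routing with a bounded second dimension}: naively each of the $\Theta(N)$ edges would want its own row for its highway segment, which would blow the second coordinate up to $\Theta(N)$ with a large constant or even force it super-linear; the trick is that the edges incident to a fixed vertex are only $4$, so a path only needs a private lane \emph{locally}, near the two columns it connects, and these local lane-demands can be packed into the $2N+2$ rows by assigning lanes cyclically/greedily and exploiting the freedom to choose at which intermediate first-coordinate a given path does its horizontal run. Making this packing precise while simultaneously preserving edge-disjointness and the simple-path property (no path crossing itself, no two paths sharing an interior vertex even transversally) is the delicate part, and it is exactly there that the two-layer third coordinate earns its keep, just as the parity and counter blocks did in the proof of Lemma \ref{lem:hyp}.
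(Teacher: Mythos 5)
Your overall plan — vertex columns along the first axis spaced by $4$, horizontal ``highways'' in the second axis, and the thickness-$2$ third axis used to make paths self-identifying — is in the right direction, but you misdiagnose both the main mechanism and the main ``obstacle,'' and the fix you propose for the obstacle is left as a hand-wave.

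First, the role of the third coordinate. In the paper it is \emph{not} a parity/counter bit in the style of Lemma \ref{lem:hyp}: it is used much more simply, as a horizontal/vertical separator. Every segment of an edge-path that varies the \emph{first} coordinate lives in layer $1$, and every segment that varies the \emph{second} coordinate lives in layer $0$ (with a single layer-flip at each turn). This is exactly what makes the paths interior-disjoint, with no need for a ``decode the predecessor'' argument at all, and also what keeps $\varphi(v_i)=(4i,0,0)$ isolated, since no highway ever enters layer $0$ near the vertex columns except through the dedicated ports. Your parity-toggling decoding scheme is both more complicated and, as written, not actually specified enough to verify disjointness.

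Second, and more importantly, the ``obstacle'' you flag — that each of the $\Theta(N)$ edges wanting its own row would blow up the second dimension — is not an obstacle. For a degree-$4$ graph the number of edges is $m\le 4N/2=2N$, so assigning a fresh row $i\in\{2,\dots,2N+1\}$ to the $i$-th edge fits precisely in the $(2N+2)$-wide second dimension with no packing argument needed. That is what the paper does: the path for $e_i=(v_j,v_k)$ goes vertically up column $4j+r_j$ in layer $0$ to row $i$, flips to layer $1$, runs horizontally along row $i$ to column $4k+r_k$, flips back, and descends in layer $0$. Your alternative ``local lanes, one checks $2N+2$ suffices'' scheme is the delicate part you never supply: two highways running at the same row with overlapping first-coordinate ranges would collide, so lanes are \emph{not} purely local, and a correct allocation argument is needed. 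The clean counting fact $m\le 2N$ makes that whole apparatus unnecessary, and not noticing it is the real gap. Your placement $\varphi(v_t)=(4t,\text{row reserved for }v_t,1)$ also spends $N$ rows on vertices for no reason; the columns already distinguish them, and putting all vertices at row $0$ (in layer $0$, with a fixed $4$-port gadget at rows $-1,0,1$ as in the paper's Figure~\ref{fig:paths}) frees the entire remaining row budget for edges.
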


\begin{proof}
We embed the graph $G$ in the grid whose vertices are $\{3,4,...,4N+2\}\times \{-1,0,...,2N\} \times \{0,1\}$. We denote the vertices of $G$ by $\{v_i\}_{i\in [N]}$ and we embed $\phi(v_i)=(4i,0,0)$. We use (for instance) the structure of Figure \ref{fig:paths} to place the four outgoing edges of $(4i,0,0)$ at the points $(4i-1,1,0),(4i,1,0),(4i+1,1,0)$ and $(4i+2,1,0)$.

\begin{figure}[h]
\caption{The outgoing edges of the embedded vertices.}\label{fig:paths}
\centering
\begin{tikzpicture}[scale=0.7]
\draw[step=1,gray,thin] (0,0) grid (16,3);
\draw[line width=3] (1,2) -- (1,0);
\draw[line width=3] (2,1) -- (0,1);
\draw[line width=3] (0,1) -- (0,2);
\draw[line width=3] (2,1) -- (2,2);
\draw[line width=3] (1,0) -- (3,0);
\draw[line width=3] (3,2) -- (3,0);

\draw[line width=3] (5,2) -- (5,0);
\draw[line width=3] (6,1) -- (4,1);
\draw[line width=3] (4,1) -- (4,2);
\draw[line width=3] (6,1) -- (6,2);
\draw[line width=3] (5,0) -- (7,0);
\draw[line width=3] (7,2) -- (7,0);

\draw[line width=3] (14,2) -- (14,0);
\draw[line width=3] (15,1) -- (13,1);
\draw[line width=3] (13,1) -- (13,2);
\draw[line width=3] (15,1) -- (15,2);
\draw[line width=3] (14,0) -- (16,0);
\draw[line width=3] (16,2) -- (16,0);

\filldraw (1,1) circle (0.3);
\filldraw (5,1) circle (0.3);
\filldraw (14,1) circle (0.3);

\node at (0,-0.5) {3};
\node at (1,-0.5) {4};
\node at (5,-0.5) {8};
\node at (14,-0.5) {$4N$};

\node at (-0.5,0) {-1};
\node at (-0.5,1) {0};
\node at (-0.5,2) {1};

\end{tikzpicture}
\end{figure}

We denote by $\{e_i\}_{i\in [m]}$ the edges in the graph $G$. Note that $m\leq 4N/2=2N$ because the graph degree is 4. The embedding of the edges is by an increasing order $e_1,...,e_m$. For an edge $e_i=(v_j,v_k)$ let $r_j\in \{-1,0,1,2\}$ be the minimal index such that the vertex $(4j+r_j,1,0)$ is not yet used by previous edges $\{e_{i'}\}_{i'<i}$. Similarly we define $r_k$. The edge $e_i=(v_j,v_k)$ is embedded to the path: 
$$(4j+r_j,1,0)\leftrightsquigarrow (4j+r_j,i,0) \leftrightarrow (4j+r_j,i,1) \leftrightsquigarrow (4k+r_k,i,1) \leftrightarrow (4k+r_k,i,0) \leftrightsquigarrow (4k+r_k,1,0)$$
where $(x,y,z) \leftrightsquigarrow (x,y',z)$ denotes a straight line that consistently changes  the second coordinate (similarly for $(x,y,z) \leftrightsquigarrow (x',y,z)$).

The embedding is VIED because all horizontal lines appear at $(\cdot,\cdot, 1)$ while all vertical lines appear at $(\cdot,\cdot, 0)$. The embedding is vertex isolated by the construction of Figure \ref{fig:paths}.
\end{proof}

Finally Step 3 is identical to Section \ref{sec:veto-sum}. We use the reduction from $\ValLS$ to $\SumLS$ to deduce the Theorem.

\section{The Communication Complexity of Exact Potential Games}

Recall that a game is an exact potential game if  there exists a potential function $\phi:A^n\rightarrow \mathbb R$, such that $\phi(a_i,a_{-i})-\phi(a'_i,a_{-i})=u_i(a_i,a_{-i})-u_i(a'_i,a_{-i})$ for every player $i$, every pair of actions $a_i,a'_i\in A_i$, and every profile of the opponents $a_{-i}\in A_{-i}$.
In this section we study the communication complexity of exact potential games. We assume that each of the players knows only his own utility function and the goal is to compute a pure Nash equilibrium in the game. In game theoretic settings this form of information distribution is called \emph{uncoupledness} \citep{HMas,HMan}. It is known that the communication complexity of computing an equilibrium captures (up to a logarithmic factor) the rate of convergence of uncoupled dynamics to equilibrium \citep{CS,HMan}.

As a preliminary result, we demonstrate that \emph{determining} whether a game is an exact potential games (under the uncoupled distribution of information) requires low communication. This result is in contrast to ordinal potential games (see Appendix \ref{ap:ident-ord}).

%
%

\begin{proposition}\label{pro:epd}
Consider a game with $n$ players and $N$ actions. There exists a randomized communication protocol that determines whether the game is an exact potential game or not that uses only $\poly(\log(N),n)$ bits of communication.
\end{proposition}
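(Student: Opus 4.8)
The plan is to reduce the problem of recognizing exact potential games to a small number of local consistency checks, each of which can be verified with low communication using standard equality/fingerprinting protocols. The characterization of \citet{MS} states that a game is an exact potential game if and only if, around every ``square'' of the strategy space — i.e. fixing $a_{-\{i,j\}}$ and choosing two actions $a_i,a_i'$ for player $i$ and $a_j,a_j'$ for player $j$ — the sum of utility differences around the $4$-cycle vanishes:
\[
\bigl(u_i(a_i',a_j,a_{-ij})-u_i(a_i,a_j,a_{-ij})\bigr)+\bigl(u_j(a_i',a_j',a_{-ij})-u_j(a_i',a_j,a_{-ij})\bigr)
\]
\[
+\bigl(u_i(a_i,a_j',a_{-ij})-u_i(a_i',a_j',a_{-ij})\bigr)+\bigl(u_j(a_i,a_j,a_{-ij})-u_j(a_i,a_j',a_{-ij})\bigr)=0.
\]
The key observation is that when $i$ and $j$ are distinct players, the first and third differences depend only on $u_i$ (Alice's data, say, if $i\le n$) while the second and fourth depend only on $u_j$; so each square condition has the form ``(a quantity computable by Alice) $=$ (a quantity computable by Bob).'' When $i$ and $j$ are the \emph{same} player the condition is automatically satisfied, so no check is needed there.

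First I would have each party locally define, for every relevant square, the value of its own side of the equation; this produces two long vectors $x$ (held by Alice) and $y$ (held by Bob), each entry in $\{-W',\dots,W'\}$ for $W'=\mathrm{poly}$ in the utility magnitudes, indexed by squares, of total length $\mathrm{poly}(N,n)$ — note each entry has only $O(\log N + \log n)$ bits. The game is an exact potential game iff $x = -y$ entrywise. Running the standard randomized equality protocol (hash both vectors with a random prime / use the public-coin fingerprinting protocol of \citet{KN97}-style constructions) tests $x=-y$ with error $1/3$ using $O(\log(\text{length of the vectors}) + \log(\text{entry range})) = \mathrm{poly}(\log N, n)$ bits, possibly amplified by repetition. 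That settles the case of distinct players; the within-player squares and the cross-player squares together cover all of the \citet{MS} conditions, so correctness follows.

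One subtlety I would need to handle carefully: strictly speaking the \citet{MS} theorem requires checking all squares, and a priori their number is $\Theta(n^2 N^4 N^{n-2})$, which is not polynomial in $N$ and $n$. The standard fix is that it suffices to check a spanning set of ``elementary'' squares — those in which $a_{-ij}$ is arbitrary but $a_i,a_i',a_j,a_j'$ range over a fixed generating set — or, more cleanly, to reduce to checking that each player's utility, as a function of the others' profiles, has the form $u_i(a_i,a_{-i}) = \phi_i(a_{-i}) + g(a)$ for a common $g$; this is again a collection of equality tests between Alice-side and Bob-side quantities. The real work, and the step I expect to be the main obstacle, is bookkeeping: showing that a $\mathrm{poly}(\log N, n)$-sized family of squares (or equivalent linear functionals on the utility tables) is both \emph{sufficient} (spans all \citet{MS} constraints) and \emph{splittable} (each functional separates cleanly into an Alice-part and a Bob-part). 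Once that combinatorial reduction is in place, the communication is just one invocation of randomized equality on vectors of polylogarithmic bit-length per coordinate, giving the claimed $\mathrm{poly}(\log N, n)$ bound.
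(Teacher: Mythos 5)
Your core idea is right and matches the paper's: reduce to Monderer--Shapley local consistency conditions and test them with a randomized fingerprinting protocol. The paper's proof differs in two minor ways: for $n$-player games it invokes MS's characterization via permutation-indexed closed paths (Eq.~\eqref{eq:n-cyc}) rather than the $4$-cycle squares you cite (both are valid MS characterizations), and it has all $n$ players (each holding his own $u_i$) contribute a vector $v_i$ and run a single multiparty ``zero-sum'' test $\sum_i v_i = \mathbf{0}$, rather than the pairwise Alice/Bob equalities you describe. Your pairwise version would also work, at the cost of running $\binom{n}{2}$ equality tests, still within the $\poly(n,\log N)$ budget.

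However, the ``subtlety'' you flag at the end is a red herring, and your proposed ``fix'' is both unnecessary and, as stated, incomplete. The number of squares being $\Theta(n^2 N^4 N^{n-2})$, i.e.\ exponential, poses no problem for communication: the randomized equality/zero-sum protocol communicates only $O(\log(\text{vector length}) + \log(\text{entry range}))$ bits, and $\log\bigl(n^2 N^4 N^{n-2}\bigr) = O(n\log N)$, which is $\poly(n,\log N)$. You even wrote this correctly earlier in your own proposal, so the worry contradicts your own analysis. (The paper's vector has length $N^{2n}(n!)^2$ and this is handled the same way.) Moreover, your ``fix'' still lets $a_{-ij}$ range over $N^{n-2}$ profiles, so it would not reduce the number of checks to polynomial anyway. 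The direct approach you started with already closes the proof; the ``main obstacle'' you identify does not exist. One smaller imprecision: when you say the condition is ``automatically satisfied'' for $i=j$, you presumably mean squares where both $i$ and $j$ lie on the same side of the Alice/Bob split (those are checked locally for free); for genuinely $i=j$ the $4$-cycle degenerates and there is nothing to say.
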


The proof is quite simple, and we demonstrate it here for $2$-player games. Monderer and Shapley \citep{MS} show that a two-player game $(A,B,u_A,u_B)$ is an exact potential game if and only if for every four actions $a,a'\in A$, and $b,b'\in B$ we have
\begin{align}\label{eq:cycle}
\begin{split}
&(u_A(a',b)-u_A(a,b))+(u_B(a',b')-u_B(a',b)) \\
&+(u_A(a,b')-u_A(a',b'))+(u_B(a,b)-u_B(a,b'))=0
\end{split}
\end{align}
Namely, the sum of gains/losses from unilateral divinations over every cycle of size four should sum up to zero. 
Now each player checks, for every possible four-action cycle, whether the sum of changes in his utility equals the negative of the change in utility of the other player for the same cycle. Verifying this simultaneously for all cycles can be done by applying any efficient protocol for the equality problem (we recall that we focus on \emph{randomized} communication protocols). For a general number of players, a similar characterization exists and we have to use protocols based on the ``equal sum'' problem as demonstrated below. 
\begin{proof}[Proof of Proposition \ref{pro:epd}]
By \citep{MS}, an $n$-player game $(A,u)$ is an exact potential game if and only if for every pair of permutations $\opi,\upi$ over $[n]$ and for every pair of action profiles $a,b\in A$ we have
\begin{align}\label{eq:n-cyc}
\begin{split}
&\sum_{k=1}^n u_{\opi(k)}(b,a,\opi([k]))-u_{\opi(k)}(b,a,\opi([k-1]))+ \\ 
&\sum_{k=1}^n u_{\upi(k)}(a,b,\upi([k]))-u_{\opi(k)}(a,b,\upi([k-1]))=0
\end{split}
\end{align}
Simply speaking, for every sequence of unilateral deviations that starts at $a$ goes back and forth to $b$, where each player changes his strategy from $a_i$ to $b_i$ once and from $b_i$ to $a_i$ once, the sum in the gains/losses of all players from the unilateral divinations should sum up to 0.

The players should check whether Equation \eqref{eq:n-cyc} holds for all possible pairs of profiles $a,b\in [N]^n$ and pairs of permutations $\opi,\upi$ over $[n]$. The number of these equations is $c=m^{2n} (n!)^2$. Each player can generate from his private input a vector in $\{-2W,...,0,...,2W\}^c$ which captures the sum of changes in his utility for each one of the tuples $(a,b,\opi,\upi)$. So the problem can be reduced to the following: 
Each player $i$ holds a vector $v_i\in \{-2W,...,0,...,2W\}^c$ and the goal of the players is to determine whether $\sum_{i\in [n]} v_i = \0_c$. This variant of the equality problem has a $\poly(\log W,\log c)=\poly(n,\log N)$ randomized communication protocol \citep{nisan1993communication,viola2015communication}. 
\end{proof}

In contrast, identifying whether a game is an \emph{ordinal} potential game is hard, even for randomized communication protocols. Identification of the ordinal potential property has a reduction to the disjointness problem. We relegate these reductions (for two-player and for $n$-player games) to Appendix \ref{ap:ident-ord}. 
The contrast between the hardness of identifying whether a game is an ordinal potential game and the easiness of identifying whether a game is an exact potential game might give some hope that computing an equilibrium in exact potential games is much easier than in ordinal potential games. Unfortunately, our main results for this section show that finding a Nash equilibrium remains hard even for exact potential games. 

\begin{theorem}\label{theo:2pot}
Consider the two-party promise communication problem where Alice holds the utility $u_A:[N]\times [N] \ra \R$, and Bob holds the utility $u_B$ of an exact potential game. The goal is to output a pure Nash equilibrium of the game. The problem requires $\poly(N)$ bits of communication, even for randomized protocols.
\end{theorem}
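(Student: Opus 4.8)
The plan is to reduce from $\SumLS$ on the specific degree-4 graph $G$ of Theorem~\ref{theo:grid}, item~\ref{theo:deg-bounded}, for which $CC(\SumLS(G)) \geq N^c$. The classical correspondence is that local maxima of a potential $\phi$ on a graph correspond to pure Nash equilibria of the associated (exact) potential game in which each player controls one ``coordinate'' of a vertex; the subtlety, as the introduction flags, is twofold. First, the degree-4 graph $G$ is not literally a product/grid graph, so its vertices do not naturally decompose into two players' action sets with the right neighbourhood structure. Second, and more seriously, in our communication setting Alice and Bob each hold only an \emph{additive piece} $f_A,f_B$ of the potential, so we must turn $(f_A,f_B)$ into a pair of utility matrices $(u_A,u_B)$ that (a) together \emph{form an exact potential game}, which is a highly constrained object, and (b) whose pure Nash equilibria map back to local maxima of $f_A+f_B$ on $G$.

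First I would handle the graph structure. I would take the two-player action set to be $[N]\times[N]$ where a profile $(x,y)$ encodes a vertex of $G$ together with auxiliary bookkeeping: think of Alice's action as selecting a vertex $v$ (plus possibly a ``move token'') and Bob's action as confirming/vetoing it, so that the ``live'' region of the game is a path-like gadget realizing each edge of $G$ — essentially re-using the grid embedding of Lemma~\ref{lem:grid} to lay $G$ out inside the $[N]\times[N]$ bimatrix in a way where a single-player deviation moves you one step along an embedding path. The potential on this bimatrix is then just $f_A+f_B$ pulled back along the embedding (with the usual large penalties for leaving the live region, exactly as in the $\ValLS \to \SumLS$ reduction of Section~\ref{sec:veto-sum}), so that pure Nash equilibria of the bimatrix game are precisely local maxima of $f_A+f_B$ on $G$, modulo trivial gadget vertices that can be ruled out by the penalty structure.

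The heart of the matter is making the game an \emph{exact} potential game while keeping information uncoupled. A two-player game with potential $\phi$ is exact iff each $u_i$ differs from $\phi$ by a function of the opponent's action alone: $u_A(x,y) = \phi(x,y) + g_A(y)$ and $u_B(x,y) = \phi(x,y) + g_B(x)$. So once we fix the potential to be $\phi = F_A + F_B$ where $F_A,F_B$ are the pulled-back copies of Alice's and Bob's pieces, Alice \emph{cannot} in general compute $u_A = F_A + F_B + g_A(y)$ from her own input, because $F_B$ depends on Bob's data. The trick I would use is to arrange the embedding so that along any line $\{x\}\times[N]$ (Alice's action fixed) or $[N]\times\{y\}$ (Bob's action fixed), the dependence of the potential on the \emph{other} player's coordinate is something the fixed player already knows: concretely, make Bob's potential piece $f_B$ (which in the $\SumLS$ instance we are free to choose as the reduction designer — recall from Section~\ref{sec:veto-sum} that $f_B$ is only the valid/penalty indicator) a function that depends on the vertex in a way that decomposes across the two players' coordinates. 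Even more cleanly: run the reduction from $\ValLS(G)$ rather than $\SumLS(G)$ — Alice holds the genuine function $f_G$ and Bob holds only a veto set $S$. Then I set the potential to encode $f_G$ on Alice's side and realize Bob's veto via his utility being constant on the valid region and steeply decreasing off it; crucially, ``$v\in S$ vs $v\notin S$'' and the distance-to-$v^*$ penalty can be made to factor through the grid coordinates so that the ``$g_A(y)$'' correction term Alice needs is exactly the part of the penalty depending only on $y$, which she can be told as part of the (commonly known) gadget, while the genuinely input-dependent part of her utility is just $f_G$ restricted to her coordinate.

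The main obstacle I anticipate is precisely reconciling the exact-potential constraint $u_A = \phi + g_A(y)$ with uncoupledness: it forces $\phi$'s ``cross term'' (the part of $\phi$ not of the form $a(x)+b(y)$) to be common knowledge, so all the actual hidden information must sit in the separable parts $a(x)$ (Alice's) and $b(y)$ (Bob's) — but then a naive encoding would make the potential separable and hence trivially optimizable coordinate-wise with no communication. The resolution has to be that the \emph{graph} $G$ is not a grid: the live region is a thin, winding path inside $[N]\times[N]$, so single-coordinate moves are heavily constrained by the penalty structure (which is common knowledge), and the hidden information, though formally separable across $x$ and $y$, only becomes meaningful along the live path where the two players' moves are coupled through which path-vertex is currently ``active.'' Getting this bookkeeping exactly right — ensuring no spurious equilibria on gadget vertices, ensuring the embedding paths don't create shortcuts that let a player escape a non-local-max, and ensuring the required $g_A,g_B$ corrections are genuinely opponent-coordinate-only — is the delicate part; once it checks out, Theorem~\ref{theo:grid}\ref{theo:deg-bounded} gives the $\poly(N) = N^c$ lower bound immediately, since any protocol for the equilibrium problem would solve $\SumLS(G)$ (equivalently $\ValLS(G)$).
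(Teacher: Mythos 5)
You correctly zero in on the central obstacle: for a two-player exact potential game, $u_A - u_B$ must be separable as (function of Bob's action) minus (function of Alice's action), yet each $u_i$ must be computable from the corresponding $f_i$ alone. However, the route you choose — Alice picks a row $x\in[N]$, Bob a column $y\in[N]$, and the cell $(x,y)$ encodes a vertex of $G$ via a grid embedding — is a dead end, and you come close to proving it yourself. If the live cell $(x,y)$ maps to vertex $\psi(x,y)$ and the potential there is to be $f_A(\psi(x,y))+f_B(\psi(x,y))$ plus common-knowledge penalty terms, then $u_A(x,y)=\phi(x,y)+g_A(y)$ contains $f_B(\psi(x,y))$; for Alice to compute $u_A$ this must be a function of $y$ alone, i.e., $\psi(x,y)$ must depend only on $y$ on the live set — which collapses the embedding. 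Your suggested escapes do not resolve this: switching to $\ValLS$ still leaves genuine hidden information in Bob's veto set, and the ``thin winding path'' observation does not change the algebraic fact that a potential separable into $a(x)+b(y)$ plus a commonly-known term has only trivial equilibrium structure. The missing idea is different and is not a refinement of your setup: in the paper's reduction \emph{both} players' action sets are $V\times[W]^5$, i.e., each player independently names a vertex $v$ (resp.\ $w$) \emph{together with a report $\overrightarrow{x}$ of her own $f_A$-values on $v$ and its neighbours}. The potential's cross-term — the part coupling the two players' information — is then realized by $val^{(v,\overrightarrow{x})}(w)+val^{(w,\overrightarrow{y})}(v)$, which depends only on the two players' \emph{actions} (not on either player's private $f$) and hence is common knowledge; a large bonus $4W\cdot\1_{\overrightarrow{x}=n(v)}$ for truthful reporting forces the reports to be correct in equilibrium. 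The game then decomposes cleanly as (identical-interest, commonly known) $+$ (opponent-independent, with $u''_A$ a function of $f_A$ and Alice's action and $u''_B$ a function of $f_B$ and Bob's action), which is manifestly an exact potential game, and a further distance bonus $4W\cdot\1_{d(v,w)\le 1}$ pins the equilibria to the case $v=w$ a local maximum of $f_A+f_B$. Your write-up never proposes having actions carry valuation reports, so it does not reach a construction that simultaneously satisfies the exact-potential constraint and preserves the hardness of $\SumLS(G)$; this is a genuine gap, not a detail.
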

We can also show hardness for the $2n$-player $2$-action case.

\begin{theorem}\label{theo:n-pot}
Consider the two-party promise communication problem where Alice holds the utilities of $(u_i)_{i\in [n]}$ and Bob holds the utilities $(u_i)_{i\in [2n]\setminus [n]}$ of an exact potential game, and they should output a pure Nash equilibrium of the game.
The problem requires $2^{\Omega(n)}$ communication, even for randomized protocols.
\end{theorem}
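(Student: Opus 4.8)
The plan is to reduce $\SumLS(\Hyp_n)$ to the problem of finding a pure Nash equilibrium in a $2n$-player $2$-action exact potential game, and then invoke item~\ref{theo:opt-hypercube} of Theorem~\ref{theo:opt}, which gives a randomized lower bound of $\Omega(2^{n/2})=2^{\Omega(n)}$. The natural dictionary is: identify a vertex of $\Hyp_n$ with a strategy profile $x\in\{0,1\}^n$, where player $i$ (for $i\in[n]$) controls the $i$-th coordinate. A local maximum of a potential $\phi\colon\{0,1\}^n\to\R$ with respect to the hypercube is exactly a pure Nash equilibrium of the game whose common potential function is $\phi$ (each of the first $n$ players best-responds iff flipping her bit cannot increase $\phi$). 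So the ``inner'' $n$ players realize the hypercube-flipping structure directly; the problem is the input split.

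The first obstacle — and the main one — is that in $\SumLS(\Hyp_n)$ the potential is $\phi = f_A + f_B$ with $f_A$ held by Alice and $f_B$ held by Bob, whereas in the communication game for potential games the players are split as ``first $n$'' versus ``last $n$''. I would resolve this with the extra $n$ players: let players $n+1,\dots,2n$ be Bob's players, and have the game's potential function be something like $\phi(x,y)=f_A(x)+f_B(y)-P\cdot\hamd(x,y)$ for a large penalty $P$, where $x\in\{0,1\}^n$ is the profile of the first $n$ players and $y\in\{0,1\}^n$ that of the last $n$. Then Alice can write down the utilities of players $1,\dots,n$ (these depend on $f_A$ and on the coordinates of $x$ versus $y$, but the $\hamd$ term is coordinate-wise so each player's marginal is locally computable) and Bob the utilities of players $n+1,\dots,2n$ from $f_B$. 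The large $P$ forces $x=y$ in any equilibrium, so equilibria are exactly pairs $(x,x)$ with $x$ a local maximum of $f_A+f_B$; one then reads off $x$ as the answer to $\SumLS$. I must check that this $\phi$ is a valid exact potential for the utilities I assign — since I am \emph{defining} each $u_i$ as the appropriate restriction/marginal of $\phi$, the exact potential property holds essentially by construction, and the game genuinely lies in the promised class (this is the subtlety flagged in the introduction: only very specific matrix pairs give an exact potential game, but here the reduction produces such a pair directly).

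Having set this up, the remaining steps are routine. Second, verify the equilibrium characterization: no unilateral deviation helps any of the $2n$ players at $(x,x)$ where $x$ is a local max, because a first-$n$ player flipping $x_i$ either keeps $x=y$ (and then $f_A+f_B$ does not increase, by local optimality) or breaks it (incurring the penalty $P$, chosen larger than the total range of $f_A+f_B$); symmetrically for the last $n$ players; and conversely any equilibrium must have $x=y$ and $x$ a local maximum. Third, confirm the input bound: $f_A,f_B$ take values in $\{1,\dots,W\}$ with $W=\poly(2^n)$ suffices for the hypercube lower bound, so $P$ and all utilities are representable with $\poly(n)$ bits, and the reduction adds no communication. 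Finally, note that the reduction is deterministic and local (Alice's and Bob's utilities are computed from their own inputs only), so any randomized protocol solving the equilibrium problem with $o(2^{cn})$ bits would solve $\SumLS(\Hyp_n)$ with $o(2^{cn})$ bits, contradicting Theorem~\ref{theo:opt}; this yields the claimed $2^{\Omega(n)}$ bound. I expect the only genuinely delicate point to be making the $\hamd$-penalty compatible with the \emph{exact} (not merely ordinal) potential requirement while keeping Alice's players' utilities computable from $f_A$ alone — writing the penalty as a sum of per-coordinate terms $-P\sum_i \1_{x_i\neq y_i}$ makes player $i$'s marginal depend only on $x_i,y_i$, which keeps everything consistent.
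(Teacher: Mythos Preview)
Your construction does not work: with $\phi(x,y)=f_A(x)+f_B(y)-P\sum_i \1_{x_i\neq y_i}$, every diagonal profile $(x,x)$ is a pure Nash equilibrium, not just those where $x$ is a local maximum of $f_A+f_B$. At $(x,x)$, when player $i\in[n]$ flips her bit the potential changes by $f_A(x^{(i)})-f_A(x)-P<0$ for large $P$, regardless of $f_B$; symmetrically for Bob's players. Your case split ``either keeps $x=y$ \ldots\ or breaks it'' is vacuous: a unilateral flip of $x_i$ with $y$ fixed always breaks $x=y$. So the equilibrium set is all of $\{(x,x):x\in\{0,1\}^n\}$ and carries no information about $\SumLS(\Hyp_n)$.

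The obstruction is structural, not a matter of tuning $P$. In an exact potential game, player $i$'s best-response condition is governed by the $i$-marginal of $\phi$. If Alice is to compute $u_1,\dots,u_n$ from $f_A$ alone, the $i$-marginals of $\phi$ for $i\in[n]$ cannot depend on $f_B$; but then no equilibrium condition for those players can encode ``$x$ is a local maximum of $f_A+f_B$''. The paper's proof (Section~\ref{sec:pr-n}) gets around this by introducing additional \emph{reporting} players: Alice's group reports $f_A$-values on vertices near $v$, Bob's group reports $f_B$-values near $w$, and each side's utility contains the \emph{other} side's reported value at its own vertex. This cross term is what makes the fourth-priority part of the potential equal $f_A(w)+f_B(v)+f_A(v)+f_B(w)$ at truthful adjacent profiles, so that the better of $v,w$ under $f_A+f_B$ is preferred. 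Getting this to work with binary actions---where reports are built bit by bit and a single flip of $v$ can invalidate one's own report---is exactly the obstacle the paper discusses, and it forces the ball-of-radius-$2$ reporting and the six-level priority scheme; the naive penalty idea cannot substitute for it.
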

This problem is obviously requires at least as much communication as the $2n$-party communication problem where each player holds his own utility function. 

In both theorems, we reduce from the problem of finding a local maximum (on a bounded degree graph in the two player case and on the hypercube in the $n$ player case) and show that the set of pure Nash equilibria corresponds exactly to the set of local maxima. The proofs of the Theorems appear in Sections \ref{sec:pr-2} and \ref{sec:pr-n}.

\subsection{Total variants of Pure Nash Equilibrium Search}\label{sec:tot}

In Theorems \ref{theo:2pot} and \ref{theo:n-pot} we have demonstrated communicational hardness of two \emph{promise} problems. Such hardness results are not rare in the literature. For instance, finding a pure Nash equilibrium in a game when it is promised that such an equilibrium exists.

To appreciate the novelty of our results we focus on a \emph{total} variant of equilibrium search problem $\textsc{TotExPot}$: either find a Nash equilibrium or provide a succinct evidence that the game is not an exact potential game. By \citep{MS} such a succinct evidence, in the form of a violating cycle (see Equations \eqref{eq:cycle},\eqref{eq:n-cyc}), necessarily exists. 
More formally, in the problem $\textsc{TotExPot}(2,N)$ Alice holds the utility $u_A$, Bob holds a utility $u_B$ of an $N\times N$ game, and the output is either a pure Nash equilibrium or a cycle of actions of size 4 that violates Equation \eqref{eq:cycle}. Similarly in the problem $\textsc{TotExPot}(2n,2)$ Alice holds the utilities $(u_i)_{i\in n}$, Bob holds the utilities $(u_i)_{i\in [2n]\setminus [n]}$ of an $2n$-player 2-action game, and the output is either a pure Nash equilibrium or a cycle of actions of size $4n$ that violates Equation \eqref{eq:n-cyc}.

In Proposition \ref{pro:epd} we showed that low communication is needed to determine whether a game is an exact potential game or not (accompanied with an evidence in case it is not). From these observation along with Theorem \ref{theo:2pot} we deduce that

\begin{corollary}
The total search problem $\textsc{TotExPot}(2,N)$ requires $\poly(N)$ communication.  
\end{corollary}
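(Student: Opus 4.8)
The plan is to reduce the total problem $\textsc{TotExPot}(2,N)$ to the promise problem of Theorem~\ref{theo:2pot} by showing that any protocol solving the former can be used to solve the latter with only an extra $\poly(\log N)$ overhead. First I would recall that Proposition~\ref{pro:epd} gives a randomized protocol, using $\poly(\log N)$ bits, that decides whether a two-player $N\times N$ game is an exact potential game; moreover, inspecting that protocol, when the game is \emph{not} an exact potential game it is produced by a failure of an equality test, and a standard augmentation of the equality protocol (or, if one prefers, a binary-search post-processing step costing another $\poly(\log N)$ bits) lets the players jointly output a concrete violating $4$-cycle $(a,a',b,b')$ witnessing that Equation~\eqref{eq:cycle} fails. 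So with $\poly(\log N)$ communication the players can always either certify ``exact potential game'' or output a valid violation certificate.

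Given this, suppose for contradiction there were a protocol $\Pi$ for $\textsc{TotExPot}(2,N)$ using $o(N^c)$ communication (for the constant $c$ of Theorem~\ref{theo:2pot}). I would build a protocol for the promise problem of Theorem~\ref{theo:2pot} as follows: on an input $(u_A,u_B)$ that is \emph{promised} to be an exact potential game, the players simply run $\Pi$. Since the input is an exact potential game, no size-$4$ cycle violates Equation~\eqref{eq:cycle}, so the only admissible output of $\textsc{TotExPot}(2,N)$ on this input is a pure Nash equilibrium. Hence $\Pi$ must output a pure Nash equilibrium, which is exactly what the promise problem requires. This solves the promise problem with the same $o(N^c)$ communication, contradicting Theorem~\ref{theo:2pot}. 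Therefore $\textsc{TotExPot}(2,N)$ requires $\Omega(N^c) = \poly(N)$ communication. (The preliminary Proposition~\ref{pro:epd} step is not even needed for this direction; it is only relevant if one also wants the matching statement that the problem is \emph{in} a low-communication class modulo the search part, i.e.\ to contrast totality with tractability.)

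The only mild subtlety — and the step I expect to need the most care — is making sure the reduction is genuinely ``free'': the protocol $\Pi$ is allowed to behave arbitrarily on inputs that are not exact potential games, but on promised inputs its behavior is forced, so correctness of the reduction does not depend on $\Pi$ ever exercising the ``output a violating cycle'' branch. One should also note that randomization is handled transparently: if $\Pi$ is correct with probability $\ge 2/3$ on every input, then in particular it is correct with probability $\ge 2/3$ on every exact-potential-game input, which is all the promise problem asks. Thus the $\poly(N)$ lower bound transfers directly, and the corollary follows.
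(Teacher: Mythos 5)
Your proof is correct and takes the same route as the paper: since on an input promised to be an exact potential game the only admissible output of $\textsc{TotExPot}(2,N)$ is a pure Nash equilibrium, any protocol for the total problem directly solves the promise problem of Theorem~\ref{theo:2pot}, so the lower bound transfers. Your parenthetical observation that Proposition~\ref{pro:epd} is not actually needed for the lower bound direction is also accurate; the paper invokes it mainly to emphasize that the certificate side of the total problem is cheap, not as an ingredient of the lower-bound argument.
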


Similarly for the $2n$-player 2-action case we have

\begin{corollary}
The total search problem $\textsc{TotExPot}(2n,2)$ requires $2^{\Omega(n)}$ communication.  
\end{corollary}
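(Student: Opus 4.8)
The plan is to derive the corollary about $\textsc{TotExPot}(2n,2)$ as an immediate consequence of the $2n$-player promise hardness result (Theorem \ref{theo:n-pot}) combined with the efficient exact-potential-identification protocol (Proposition \ref{pro:epd}). The overall strategy is a standard ``promise-to-total'' reduction: any protocol solving the total problem cheaply can be converted into a protocol solving the promise problem cheaply, contradicting Theorem \ref{theo:n-pot}.

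Concretely, suppose toward a contradiction that there is a randomized protocol $\Pi$ for $\textsc{TotExPot}(2n,2)$ using $o(2^{cn})$ communication for the constant $c$ from Theorem \ref{theo:n-pot}. I would build a protocol for the promise problem of Theorem \ref{theo:n-pot} as follows. Given inputs $(u_i)_{i\in[n]}$ to Alice and $(u_i)_{i\in[2n]\setminus[n]}$ to Bob that are \emph{promised} to form an exact potential game, the players simply run $\Pi$. Because the game is an exact potential game, no violating cycle of the form \eqref{eq:n-cyc} exists, so any valid output of $\textsc{TotExPot}(2n,2)$ must be of the first type, namely a pure Nash equilibrium — which is exactly what the promise problem of Theorem \ref{theo:n-pot} demands. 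Hence $\Pi$ already solves the promise problem with $o(2^{cn})$ communication, contradicting Theorem \ref{theo:n-pot}. Therefore $\textsc{TotExPot}(2n,2)$ requires $2^{\Omega(n)}$ communication. Note that this direction does not even need Proposition \ref{pro:epd}: a total-problem solver restricted to promise instances automatically returns an equilibrium. Proposition \ref{pro:epd} is what makes the total problem genuinely interesting (it shows the ``certify non-potential-game'' escape hatch is cheap to invoke when applicable, so the hardness is really in the equilibrium-finding), and one may remark on this, but it is not logically required for the lower bound itself.

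I would also point out, for completeness, that the same template with Theorem \ref{theo:2pot} in place of Theorem \ref{theo:n-pot} and Equation \eqref{eq:cycle} in place of \eqref{eq:n-cyc} yields the companion statement that $\textsc{TotExPot}(2,N)$ requires $\poly(N)$ communication; the two corollaries are structurally identical.

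There is no serious obstacle here — the argument is essentially a one-line observation. The only point requiring a modicum of care is confirming that the output format of $\textsc{TotExPot}$ is set up so that, on a promised exact-potential-game instance, the ``violating cycle'' output option is simply vacuous (no such cycle exists, by the Monderer–Shapley characterization recalled above), so that every correct execution of the total-problem protocol must output a genuine pure Nash equilibrium. Once that is noted, the contradiction with Theorem \ref{theo:n-pot} is immediate.
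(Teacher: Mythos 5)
Your proof is correct and matches the paper's intent: the lower bound comes from observing that any protocol for $\textsc{TotExPot}(2n,2)$, when run on a promise instance (a genuine exact potential game), must output a pure Nash equilibrium since no violating cycle of the form \eqref{eq:n-cyc} exists, so it also solves the promise problem of Theorem \ref{theo:n-pot}. Your additional remark is well taken: although the paper introduces the corollary with the phrase ``from these observations along with Theorem \ref{theo:n-pot},'' implying a role for Proposition \ref{pro:epd}, that proposition is only needed to show the total problem is well-posed and interesting (cheap verification/detection), not for the lower bound itself, which follows entirely from the promise hardness.
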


Note that the non-deterministic complexity of $\textsc{TotExPot}(2,N)$ is $\log(N)$. Indeed a Nash equilibrium can be described by single action profile ($\Theta(\log N)$ bits), and a violating cycle can be described by $4$ action profiles. Each player can verify his best-reply condition and communicate a single bit to the opponent. Also verification of violating cycle can be done by communicating 4 valuations of utility.
Similarly, we can show that the non-deterministic complexity of $\textsc{TotExPot}(2n,n)$ is $\poly(n)$. Thus again, our results demonstrate an exponential separation between the non-deterministic and the randomized communication complexity of a total search problem.




\section{Proof of Theorem \ref{theo:2pot}}\label{sec:pr-2}
We reduce the problem of finding a local maximum on a graph $G$ with degree $4$ to finding a Nash equilibrium in an exact potential game with two players and $N$ actions. We then apply Theorem \ref{theo:grid}(\ref{theo:deg-bounded}) to get our communication bound. 

We construct the following exact potential game. For a vertex $v\in V$ we denote by $n_i(v)$ the $i$'th neighbour of $v$ for $i=1,2,3,4$. The strategy set of both players is $A=B=V\times [W]^5$ (recall that the potentials in $\SumLS(G)$ get values in $[W]$ and that $W=\poly(N)$). The interpretation of a strategy $(v,x)\in A$ where $\overrightarrow x=(x_0,x_1,...,x_4)\in [W]^5$ is (Alice's reported) potential for $v$ and its four neighbours. This report induces a valuation for all vertices $w\in V$ by 
\begin{align*}\label{eq:rep-val}
val^{(v,\overrightarrow x)}(w)=
\begin{cases}
x_0 & \text{ if } w=v; \\
x_i & \text{ if } w=n_i(v); \\
0   & \text{ otherwise.}
\end{cases}
\end{align*}
A strategy $(v,\overrightarrow x)$ is \emph{truthful} if and only if $x_0=f_A(v)$ and $x_i=f_A(n_i(v))$ for all neighbours of $v$ (in short, $\overrightarrow x=n(v)$). Similarly Bob's strategy $(w,\overrightarrow y)$ induces a valuation $val^{(w,\overrightarrow y)}(v)$ on all vertices $v\in V$, and a truthful report is similarly defined.
  
The utilities of Alice and Bob are given by (recall that $d(v,w)$ is the distance in the graph between two vertices $v$ and $w$):
\begin{equation*}
\begin{split}
& u_A((v,\overrightarrow x),(w,\overrightarrow y))=  4W\cdot \1_{d(v,w)\leq 1}+4W\cdot \1_{x=n(v)}  + val^{(w,\overrightarrow y)}(v) + val^{(v,\overrightarrow x)}(w)+ f_A(v) \\
& u_B((v,\overrightarrow x),(w,\overrightarrow y))=  4W\cdot \1_{d(v,w)\leq 1}+4W\cdot \1_{y=n(w)} 
 + val^{(w,\overrightarrow y)}(v) + val^{(v,\overrightarrow x)}(w)+ f_B(w)
\end{split}  
\end{equation*}
Namely, both players get large reward of $4W$ if they choose adjacent vertices, or the same vertex. Both players get large reward of $4W$ if they report truthfully their own valuations in the neighbourhood of their vertex. Both players get the sum of valuations of the two chosen vertices $v,w$ according to the report of the opponent. In addition Alice gets the (partial) potential of her vertex according to $f_A$, and Bob gets the potential of his vertex according to $f_B$.

\begin{lemma}\label{lem:exact}
The game is an exact potential game.
\end{lemma}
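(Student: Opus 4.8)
\textbf{Proof plan for Lemma \ref{lem:exact}.}

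The plan is to exhibit an explicit potential function $\phi$ for the game and verify the exact-potential identity $\phi((v,\overrightarrow x),(w,\overrightarrow y)) - \phi((v',\overrightarrow x'),(w,\overrightarrow y)) = u_A((v,\overrightarrow x),(w,\overrightarrow y)) - u_A((v',\overrightarrow x'),(w,\overrightarrow y))$ for every deviation of Alice, and the symmetric identity for Bob. The natural candidate is obtained by summing all the terms in the two utilities that are \emph{common} to both players and adding $f_A$ and $f_B$ once each:
\begin{align*}
\phi((v,\overrightarrow x),(w,\overrightarrow y)) = 4W\cdot \1_{d(v,w)\leq 1} + 4W\cdot \1_{x=n(v)} + 4W\cdot \1_{y=n(w)} + val^{(w,\overrightarrow y)}(v) + val^{(v,\overrightarrow x)}(w) + f_A(v) + f_B(w).
\end{align*}
The point is that $u_A = \phi - 4W\cdot \1_{y=n(w)} - f_B(w)$, and the two subtracted terms depend only on Bob's strategy $(w,\overrightarrow y)$, hence are unaffected by any unilateral deviation of Alice; symmetrically $u_B = \phi - 4W\cdot \1_{x=n(v)} - f_A(v)$, and these terms are constant under Bob's deviations. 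This is exactly the standard ``separable game'' argument for why a game is an exact potential game.

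First I would write down $\phi$ as above. Second, for a deviation $(v,\overrightarrow x) \to (v',\overrightarrow x')$ by Alice, I would compute $\phi((v,\overrightarrow x),(w,\overrightarrow y)) - \phi((v',\overrightarrow x'),(w,\overrightarrow y))$ term by term, noting that the terms $4W\cdot \1_{y=n(w)}$ and $f_B(w)$ cancel, and observe that what remains is precisely $u_A((v,\overrightarrow x),(w,\overrightarrow y)) - u_A((v',\overrightarrow x'),(w,\overrightarrow y))$ since $u_A$ equals $\phi$ minus exactly those two canceled terms. Third, I would repeat the symmetric computation for Bob. Since the game has only two players, these two checks suffice.

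The one genuinely substantive point — rather than an obstacle — is the observation that makes the whole construction tick: the utility of each player is designed so that every term in it is either symmetric between the two players (the adjacency bonus $4W\cdot\1_{d(v,w)\le 1}$ and the two valuation-transfer terms $val^{(w,\overrightarrow y)}(v)$ and $val^{(v,\overrightarrow x)}(w)$) or depends only on that player's own strategy (the truthfulness bonus $4W\cdot\1_{x=n(v)}$ and the potential term $f_A(v)$ for Alice). A term depending only on the deviating player's strategy contributes identically to $u$ and to $\phi$, while a term depending only on the other player's strategy contributes nothing to either difference; this is the only structural fact needed. I do not expect any real difficulty here — the verification is a short bookkeeping exercise once $\phi$ is written down correctly.
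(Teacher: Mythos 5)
Your proposal is correct and arrives at exactly the same potential function $\phi$ as the paper. The paper packages the argument as a decomposition of the game into an identical-interest game plus an opponent-independent game and then sums the corresponding potentials, while you verify directly that $u_A - \phi$ and $u_B - \phi$ each depend only on the opponent's strategy; these are two presentations of the same underlying observation, so this is essentially the same proof.
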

\begin{proof}
We will see that the game can be ``decomposed'' to two exact potential games, and will use this ``decomposition'' to provide a potential function for our game. We will use the following basic properties of potential games. We recall the notation of $(A_1,A_2,u_1,u_2)=(A,u)$ for a two-player game, where each $A_i$ is the action space of player $i$ and $u_i$ is the utility function of player $i$.
\begin{itemize}
\item An \emph{identical interest game} $(A,u)$ is a game in which $u_1=u_2$. An identical interest game is an exact potential game with potential function $\varphi=u_1$.
\item An \emph{opponent independent game} is a game in which the utility of each player $i$ depends only on his own actions: $u_i(a_1,a_2)=u_i(a_i)$ for every $(a_1,a_2)\in A$. Every opponent independent game is an exact potential game where the potential function is simply the sum of the utilities of the players.
\item For every pair of exact potential games $(A,u'),(A,u'')$ with potentials $\varphi',\varphi''$, the game $(A,u'+u'')$ is an exact potential game with potential $\varphi=\varphi'+\varphi''$.
\end{itemize}
Note that our game can be written as a sum of an identical interest game: 
$$u'_A=u'_B= 4W\cdot \1_{d(v,w)\leq 1} + val^{(w,\overrightarrow y)}(v) + val^{(v,\overrightarrow x)}(w)$$
and an opponent independent game:
$$u''_A=4W\cdot \1_{\overrightarrow x=n(v)} + f_A(v), \ u''_B=4W\cdot \1_{\overrightarrow y=n(w)} + f_B(w)$$
Therefore their sum is a potential game with potential:
\begin{align}\label{eq:pot}
\begin{split}
 \phi((v,\overrightarrow x),(y,\overrightarrow w))= & 4W\cdot \1_{d(v,w)\leq 1}+4W\cdot \1_{\overrightarrow x=n(v)}+4W\cdot \1_{\overrightarrow y=n(w)} \\
  & +  val^{(w,\overrightarrow y)}(v) + val^{(v,\overrightarrow x)}(w)+ f_A(v) +f_B(w)
\end{split}  
\end{align}
\end{proof}
%
%
%
%
%
%

\begin{lemma}\label{lem:pne}
The pure Nash equilibria of the game are precisely  $((v,\overrightarrow x),(v,\overrightarrow x'))$ such that $v$ is a local maximum of $f_A+f_B$ and $\overrightarrow x$ and $\overrightarrow x'$ are truth reports of the values of $v$ and its neighbours according to $f_A$ and $f_B$, respectively. 
\end{lemma}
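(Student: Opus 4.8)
The plan is to characterize the pure Nash equilibria by a sequence of necessary conditions, each one cutting down the space of candidate profiles, and then verify that the surviving profiles are exactly the ones in the statement. Throughout I will work with the utilities $u_A,u_B$ as given, keeping in mind that the three ``$4W$'' terms dominate everything else, since $val$-contributions and the $f_A,f_B$ terms lie in $\{0,\ldots,5W\}$ wait — more carefully, $val^{(w,\overrightarrow y)}(v)+val^{(v,\overrightarrow x)}(w)\le 2W$ and $f_A(v)\le W$, so each player's ``small'' terms are at most $3W < 4W$. This gap is what forces the dominant indicators to be satisfied at equilibrium.

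First I would show that in any pure Nash equilibrium $((v,\overrightarrow x),(w,\overrightarrow y))$ we must have $d(v,w)\le 1$: if not, Alice (say) can deviate to some $(w,\overrightarrow x')$ matching Bob's vertex, gaining the $4W$ bonus from $\1_{d(v,w)\le 1}$, which outweighs any loss in her remaining terms (all of which are bounded by $3W$, and the truthfulness bonus she might lose is only regained or not, but $4W > 3W$ handles it — I'd be slightly careful here and note she can also set $\overrightarrow x' = n(w)$ to keep the truthfulness bonus, so in fact she strictly gains). Second, given $d(v,w)\le 1$, I would show $\overrightarrow x = n(v)$ and $\overrightarrow y = n(w)$ at equilibrium: if Alice's report is not truthful she can switch to $(v,n(v))$, keeping the same vertex $v$ (hence the same $d(v,w)\le 1$ bonus and, crucially, the same $val^{(w,\overrightarrow y)}(v)$ term since that depends only on Bob's report and on $v$), gaining $4W$ from the truthfulness indicator while possibly losing at most the $val^{(v,\overrightarrow x)}(w)\le W$ term — again $4W$ dominates. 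The same argument applies to Bob by symmetry.

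Third, with both reports truthful and $d(v,w)\le 1$, I would show $v=w$. When reports are truthful, $val^{(v,n(v))}(w) = f_A(w)$ if $w\in\{v\}\cup N(v)$, which holds since $d(v,w)\le 1$; similarly $val^{(w,n(w))}(v)=f_B(v)$. So Alice's utility becomes $4W+4W+f_B(v)+f_A(w)+f_A(v)$ and she is effectively choosing $v$ to maximize $f_B(v)+f_A(v)$ subject to staying adjacent to $w$ and keeping her report truthful (which she can always do by also updating $\overrightarrow x$). Thus at equilibrium $v$ maximizes $(f_A+f_B)$ over $\{w\}\cup N(w)$, and symmetrically $w$ maximizes $(f_A+f_B)$ over $\{v\}\cup N(v)$; combined with $d(v,w)\le 1$ this forces $(f_A+f_B)(v)=(f_A+f_B)(w)$ and, since $v$ is weakly better than all of $w$'s neighbours and $w$ is one of $v$'s neighbours (or equals $v$), a short argument gives $v=w$ and that this common vertex is a local maximum. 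Conversely, I would check that any profile $((v,n(v)),(v,n(v)))$ — more precisely with Alice's truthful report and Bob's truthful report — with $v$ a local maximum of $f_A+f_B$ is indeed a Nash equilibrium: both dominant indicators are already maxed out, and deviating to any vertex $v'\neq v$ either breaks adjacency (losing $4W$) or, if $v'\in N(v)$, changes the payoff by $(f_A+f_B)(v')-(f_A+f_B)(v)\le 0$ by local optimality, while deviating the report only loses the truthfulness bonus.

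The main obstacle I expect is the third step, pinning down $v=w$ and the local-maximum property simultaneously: the subtlety is that Alice and Bob optimize over different ``small'' objectives ($f_A(v)$ versus $f_B(w)$ as the private bonus) even though they share the $val$-terms, so one has to combine Alice's optimality condition and Bob's optimality condition and use the symmetry carefully — in particular to rule out a ``split'' equilibrium where $v\neq w$ but $d(v,w)=1$ and the two vertices have equal $f_A+f_B$ value. Handling the edge case where a neighbour of $v$ equals $w$ but $\overrightarrow x$ (truthful for $v$) does not directly report $f_A(w)$ unless $w\in N(v)$ — which is exactly guaranteed by $d(v,w)\le1$ — requires keeping the bookkeeping on which vertices $val^{(v,\overrightarrow x)}$ is nonzero straight, but this is routine once the adjacency constraint is in place.
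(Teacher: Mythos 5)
Your approach is essentially the paper's: first force $d(v,w)\le 1$ and truthfulness via the dominating $4W$ indicator bonuses, then analyze the remaining lower-order terms. The only cosmetic difference is that you argue directly on the utilities $u_A,u_B$ while the paper argues on the potential $\phi$ of Equation~\eqref{eq:pot} — but since it is an exact potential game the sign of any unilateral improvement is the same either way, so the two presentations are interchangeable. Your bookkeeping of the $val$-terms once $d(v,w)\le 1$ holds (so that $val^{(w,\overrightarrow y)}(v)=f_B(v)$ and $val^{(v,\overrightarrow x)}(w)=f_A(w)$) and your converse direction are both correct.

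The one genuine gap is exactly the ``split equilibrium'' case you flag at the end and leave unresolved. Your third step correctly derives that at a surviving candidate with $v\neq w$ and $d(v,w)=1$ one has $(f_A+f_B)(v)=(f_A+f_B)(w)$, with $v$ weakly beating $N(w)$ and $w$ weakly beating $N(v)$. But from there the ``short argument'' giving $v=w$ does not exist in general: if $f_A+f_B$ could tie on two adjacent vertices that are both local maxima, the split profile really would be a Nash equilibrium, and the lemma as stated would be false. What closes this is the standing \emph{distinctness} assumption carried through the reduction chain (it is imposed back in the $\QLS$ construction via $h'(v)=2Nh(v)+v$, and the paper's proof of this lemma explicitly cites ``the distinctness assumption'' at this exact spot): for the instances fed into the game, $f_A+f_B$ takes distinct values on adjacent vertices, so $(f_A+f_B)(v)=(f_A+f_B)(w)$ forces $v=w$. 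You identified the obstacle but did not name the hypothesis that kills it; adding that one invocation of distinctness completes the proof, after which the local-maximum characterization follows exactly as you describe.
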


\begin{proof}
Pure Nash equilibria are the local  maxima (with respect to a unilateral deviation) of the potential. It is easy to check that in a local maximum $x$ and $y$ are truth reports, because the gain in a truthful report is $4W$ whereas if the players do not report truthfully they lose this reward. However, Alice can gain at most $val^{(v,\overrightarrow x)}(w)+ f_A(v)\leq 2W$ from misreporting the value, and Bob's loss is similar. Similarly, in a local maximum $v$ and $w$ are neighbours (or the same vertex), because the gain of $4W$ is lost if $v$ and $w$ are not neighbors, in which case Alice's gain from the terms $val^{(w,\overrightarrow y)}(v) + val^{(v,\overrightarrow x)}(w)+ f_A(v)$ is at most $3W$. A similar argument holds for Bob. For a profile of strategies that satisfies the above the potential is equal to (see Equation \eqref{eq:pot}):
\begin{align}\label{eq:tpot}
\phi((v,\overrightarrow x),(w,\overrightarrow y))=12W+f_B(v)+f_A(w) +f_A(v)+ f_A(w)
\end{align}
A profile where $v\neq w$ is not a Nash equilibrium because by the distinctness assumption, $f_A(v)+ f_B(v) \neq f_A(w)+ f_B(w)$, so if $f_A(v)+ f_B(v) < f_A(w)+ f_B(w)$ Alice can deviate to $(w,\overrightarrow y)$ and increase the potential; Otherwise Bob can deviate to $(v,\overrightarrow x)$ and increase the potential. Finally, a profile $((v,\overrightarrow x),(v,\overrightarrow x'))$ with truth reporting is clearly a Nash equilibrium if it is a local maximum of $f_A+f_B$. If $v$ is not a local maximum of $f_A+f_B$, then Alice will increase the potential (given in Equation \eqref{eq:tpot}) if she deviates to the action $(w,n(w))$ where $w$ is a neighbour of $v$ with $f_A(w)+f_B(w)>f_A(v)+f_B(v)$.
\end{proof}

Lemmas \ref{lem:exact} and \ref{lem:pne} complete the proof of the theorem.

\section{Proof of Theorem \ref{theo:n-pot}}\label{sec:pr-n}

The proof of Theorem \ref{theo:n-pot} is done in two steps. First, we show a $2^{\Omega(\sqrt[3]{n})}$ bound. This is the significant part, in terms of the deduced result and also in terms of the techniques. Thereafter, in Section \ref{sec:2n} we improve the bound to $2^{\Omega(n)}$ building upon the arguments of this Section.

We start with proving the $2^{\Omega(\sqrt[3]{n})}$ bound. Our starting point is the proof of the hardness of $2$-player $n$-actions exact potential games (Theorem \ref{theo:2pot}). However, since we consider $n$-player binary-action games, it is convenient to reduce the problem $\SumLS(\Hyp_n)$ (local search on the $n$-th hypercube). We will get an exact potential game with $\Theta(n^3)$ players, where each player has only two actions.

\paragraph{A naive approach and an obstacle. } The simplest idea that comes to mind is to consider a \emph{group} of $n$-players who will choose $v\in \Hyp_n$, and a \emph{group} of $(n+1) \lceil\log W \rceil$ players who will report the valuation vector $\overrightarrow x$ of the vertex itself and its $n$ neighbours, and similarly for Bob. We would like to set the group of Alice's players an \emph{identical utility} that is similar to the utility of Alice in the two-player game. An obstacle that arises with this approach is that if the groups of Alice's and Bob's players are playing two adjacent vertices $v,w\in \Hyp_n$ with truthful valuations, none of them will want to switch to the opponent's vertex, even if at the adjacent vertex the sum of $f_A+f_B$ is higher. This follows from the fact if $(v,\overrightarrow x)$ is a truthful valuation, then $(w,\overrightarrow x)$ is not necessarily a truthful valuation (because the relevant vertices and their order is different with respect to $v$ and with respect to $w$). Thus, players in Alice's group will gain the difference in the potentials (at most $3W$) but lose $4W$ because now the group report is not truthful. Note that the same obstacle does not arise in the two-player case. In the two-player case Alice could change the vertex $v$ and the report $\overrightarrow x$ \emph{simultaneously}. In the $n$-player case we consider unilateral deviations that correspond to changes of single bits and thus such simultaneous deviations are impossible.

\paragraph{The solution to the obstacle. } To resolve the above problematic issue, we modify the form of the report $\overrightarrow x$ in the game.
\begin{itemize}
\item Instead of reporting the values in the ball of radius 1 around $v$ (i.e., the neighbors of $v$), each player reports the values in the ball of radius 2 around $v$. In the hypercube, this means that the report consists of $m=1+n+\frac{n(n-1)}{2}$ valuations.
\item Instead of reporting the values in a fixed order (namely $(v,n_1(v),...,n_4(v))$), the players jointly report pairs, where each pair consists of an index of a vertex $v$ and $f_A(v)$ (or $f_B(v)$).
\end{itemize}

\paragraph{The construction. }More formally, for Alice, we have a group of $n$ players with binary actions who jointly choose the vertex $v\in \{0,1\}^n$. In other words, the action of the $i$'th player in the group corresponds to the $i$'th bit in the index of the vertex. We have a group of $mn$ players with binary actions who jointly choose a list of $m$ vertices $\overrightarrow{xv}= (xv_1,...,xv_m)\in (\{0,1\}^n)^m$. Finally, we have a group of $m b:=m\lceil \log W \rceil$ players with binary actions who jointly choose a list of $m$ valuations $\overrightarrow{xf}=(xf_1,...,xf_m)\in (\{0,1\}^b)^m$. We denote $\overrightarrow{x}=(\overrightarrow{xv},\overrightarrow{xf})$. 
Similarly to the two-player case, a report $\overrightarrow{x}=(\overrightarrow{xv},\overrightarrow{xf})$ defines a valuation function over all vertices. 
For a list $\overrightarrow{xv}$ we denote $I_{\min}(\overrightarrow{xv}):=\{i\in [m]: xv_i \neq xv_j \text{ for all } j<i\}$ the set of indices with \emph{first} appearance of a vertex. The valuation is defined by  
\begin{align*}
val^{\overrightarrow{x}}(w)=\begin{cases}
val(xf_i) &\text{if } w=xv_i \text{ for } i\in I_{\min}(\overrightarrow{xv}); \\
0 &\text{otherwise.}
 \end{cases}
\end{align*}
where $val(\cdot)\in [W]$ denotes the numerical value of the binary string. Note that in case of multiple appearances of $w$ in the list we choose the value at the first appearance.
Similarly for Bob, we have three groups who jointly choose $w$, $\overrightarrow{yw}$, and $\overrightarrow{yf}$. The report $\overrightarrow{y}$ defines a valuation function $val^{\overrightarrow{y}}$ over all vertices. Note that the total number of players in the game is $2(n+m(n+b))=O(n^3)$.

Before we present the actual utilities we informally describe the prioritization according to which we set the utilities. In the two-player case there were only two levels of prioritization: the top level priority included the distance $d(v,w)$ (the $\1_{d(v,w)\leq 1}$ term in the utility functions) and the truthfulness of the report (the $\1_{x=n(v)}$ term in the utility functions). The bottom level priority included the remaining potential related terms ($val^{(w,\overrightarrow y)}(v), val^{(v,\overrightarrow x)}(w), f_A(v)$). More formally by \emph{prioritization} we mean that improving the higher priority term by $1$ should increase the utility \emph{irrespective} of how the lower priority terms change. Indeed the multiplier $4W$ was set in such a way. In the current construction, the prioritization levels are more involved, and we sketch them here from the highest priority to the lowest.
\begin{enumerate}
\item The distance $d(v,w)$.
\item The list $\overrightarrow{xv}$ should contain $v$ and its neighbours.
\item The valuations $\overrightarrow{xf}$ should be correct for $v$ and its neighbours.
\item The potential related terms (the core of the proof).
\item The list $\overrightarrow{xv}$ should contain the vertices within a distance 2 from $v$.
\item The valuations $\overrightarrow{xf}$ should be correct for vertices within a distance 2 from $v$.
\end{enumerate}

Now we describe what is the analogue of each one of these priorities in the $n$-player case. Hereafter, $d(\cdot,\cdot)$ will denote the \emph{hamming distance} (in the corresponding dimension). We denote by $B_r(v)$ the ball of radius $r$ around $v$ with respect to the hamming distance.

\begin{enumerate}
\item $\1_{d(v,w)\leq 1}$ is translated to $-d(v,w)\cdot \1_{d(v,w)\geq 2}$. Namely the loss is 0 in case the players choose the same vertex or adjacent vertices. Otherwise the loss increases with the distance.
\item Given $v$, we denote by $N_1(v):=\{(v_1,...,v_m): \{v_1,...,v_m\} \supset B_1(v)\}\subset \{0,1\}^{mn}$. Namely, $N_1(v)$ specifies $v$ and its neighbours. At the second priority we have $-d(\overrightarrow{xv},N_1(v))$.
\item Given $v$ and $\overrightarrow{xv}$, for an index $i\in I_{\min}(\overrightarrow{xv})$ such that $xv_i\in B_1(v)$ we have at the third priority the term $-d(xf,bin(f_A(xv_i))$ when we recall that $bin(z)\in \{0,1\}^b$ represents the binary representation of the potential value $z\in [W]$. Note that this definition takes into account only the \emph{first} appearance of every neighbour, which is consistent with the definition of $val^{\overrightarrow{x}}$. For other indices $i\in [m]$ the term will be identical but it will appear at the lowest sixth priority.
\item The profile $(v,\overrightarrow{x}),(w,\overrightarrow{y})$ defines a natural analogue of the two-player potential terms: $val^{\overrightarrow y}(v), val^{\overrightarrow y}(w), f_A(v), f_B(w)$. These terms are at the forth priority.
\item Given $v$, we denote by $N_2(v):=\{(v_1,...,v_m): \{v_1,...,v_m\} = B_2(v)\}\subset \{0,1\}^{mn}$ the lists that include precisely the set of all vertices within a radius 2 from $v$. At the fifth priority we have $-d(\overrightarrow{xv},N_2(v))$.
\item Finally, similarly to item 3, given $v$ and $\overrightarrow{xv}$, for every index $i\in [m]$ we have at the sixth priority the term $-d(xf,bin(f_A(xv_i))$.
\end{enumerate} 

Now we are ready to define the utilities. As was mentioned above all the players in Alice's groups have identical utilities which is equal to:

\begin{align*}
u^A_i(v,\overrightarrow{x},w,\overrightarrow{y})= 
& - k_1 \cdot d(v,w)\1_{d(v,w)\geq 2} \\
& - k_2 \cdot d(\overrightarrow{xv},N_1(v)) \\
& - k_3 \cdot \sum_{i\in I_{\min}(\overrightarrow{xv}) \text{ s.t. } xv_i\in B_1(v)} d(xf,bin(f_A(xv_i)) \\
& + k_4 [val^{\overrightarrow y}(v) + val^{\overrightarrow x}(w) + f_A(v)] \\
& - k_5 \cdot d(\overrightarrow{xv},N_2(v)) \\
& - k_6 \cdot \sum_{i\in [m]} d(xf,bin(f_A(xv_i)),
\end{align*}
when we set $k_1,...,k_6$ as follows. We set $k_6=1$. Now we set $k_5$ to be greater than the maximal difference of sixth priority terms, e.g., $k_5=2n^2 b>mb$. Now we set $k_4$ to be the greater than the maximal total difference of sixth and fifth priority terms, e.g., $k_4=2n^3 b>mb+k_5(nm)$. Similarly we may proceed with $k_3 = 8W n^3 b$, $k_2 = 8Wn^5 b^2$, and $k_1=8Wn^8 b^2$.

Similarly we define each member in Bob's group to have the following identical utility function:
\begin{align*}
u^B_i(v,\overrightarrow{x},w,\overrightarrow{y})= 
& - k_1 \cdot d(v,w)\1_{d(v,w)\geq 2} \\
& - k_2 \cdot d(\overrightarrow{yw},N_1(w)) \\ 
& - k_3 \cdot \sum_{i\in I_{\min}(\overrightarrow{yw}) \text{ s.t. } yw_i\in B_1(w)} d(yf,bin(f_A(yw_i)) \\
& + k_4 [val^{\overrightarrow y}(v) + val^{\overrightarrow x}(w) + f_B(w)] \\
& - k_5 \cdot d(\overrightarrow{wy},N_2(w))\\ 
& - k_6 \cdot \sum_{i\in [m]} d(yf,bin(f_A(yw_i)).
\end{align*}

\begin{lemma}\label{lem:potential}
The defined $(2n+2m(n+b))$-player binary action game is an exact potential game. 
\end{lemma}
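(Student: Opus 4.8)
The plan is to mimic the decomposition argument used in the two-player case (Lemma~\ref{lem:exact}), but now with six priority levels rather than two. The key observation is that the utility function $u^A_i$ (and similarly $u^B_i$) is a sum of six terms, weighted by the constants $k_1,\dots,k_6$, and each of these six terms, viewed in isolation, gives rise to a game that is an exact potential game. The first step is therefore to check, term by term, that each summand is either an \emph{identical interest} component (the same for all players of a given group, or even across both groups) or an \emph{opponent independent} component (depending only on one group's own actions), possibly after further splitting. Concretely: the $k_1$-term $-d(v,w)\1_{d(v,w)\geq 2}$ depends only on $v$ and $w$ and is identical for every player in the game, so it is an identical interest game on the joint action space; the $k_4$-bracket splits as $val^{\overrightarrow y}(v)+val^{\overrightarrow x}(w)$ (identical across all players) plus $f_A(v)$ for Alice's players and $f_B(w)$ for Bob's players (each opponent independent for the respective group); and the terms weighted by $k_2,k_3,k_5,k_6$ each depend only on the actions controlled by one group (Alice's group or Bob's group) and are identical within that group, hence opponent independent at the level of the whole game. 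I would lay this out as an explicit list matching the six priorities.

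Second, I would invoke the three elementary closure facts already stated in the proof of Lemma~\ref{lem:exact}: an identical interest game is an exact potential game with $\varphi=u_1$; an opponent independent game is an exact potential game with potential equal to the sum of the per-player utilities; and the sum of exact potential games is an exact potential game with potential the sum of the potentials. One subtlety to handle carefully is that here a ``group'' of players all share the same utility, so within a group the game is identical interest and contributes its common utility once to the potential; across groups the shared terms ($k_1$ distance term, $val$ terms) must be counted only once, while the group-private terms ($f_A(v)$, $f_B(w)$, and the $k_2,k_3,k_5,k_6$ penalty terms which differ between Alice's and Bob's groups in which variables they read) get summed. Assembling these pieces yields an explicit potential function
\begin{align*}
\phi(v,\overrightarrow{x},w,\overrightarrow{y}) =\ & -k_1\, d(v,w)\1_{d(v,w)\geq 2} + k_4\big[ val^{\overrightarrow y}(v) + val^{\overrightarrow x}(w) + f_A(v) + f_B(w)\big] \\
& - k_2\big[d(\overrightarrow{xv},N_1(v)) + d(\overrightarrow{yw},N_1(w))\big] - k_3\big[\Sigma^A_3 + \Sigma^B_3\big] \\
& - k_5\big[d(\overrightarrow{xv},N_2(v)) + d(\overrightarrow{yw},N_2(w))\big] - k_6\big[\Sigma^A_6 + \Sigma^B_6\big],
\end{align*}
where $\Sigma^A_3,\Sigma^B_3,\Sigma^A_6,\Sigma^B_6$ abbreviate the four $d(xf,\cdot)$-type sums. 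Then I would verify the exact potential identity directly: fix any single player and any change of that player's bit. That player belongs to exactly one group (Alice's $v$-group, Alice's $\overrightarrow{xv}$-group, Alice's $\overrightarrow{xf}$-group, or the three Bob analogues), so only the variables $v$, or $\overrightarrow{xv}$, etc., change. One checks that $\phi$ changes by exactly the same amount as that player's utility, using that the shared terms appear with coefficient $1$ in both $\phi$ and $u_i$, and that the group-private terms in $u_i$ appear identically in $\phi$.

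The main obstacle, and the part deserving the most care, is bookkeeping the ``shared versus private'' split correctly across the six priority terms: for instance the $k_3$ and $k_6$ sums involve $f_A$ evaluated at vertices from Alice's list and $f_A$ (as written in the Bob utility --- note the excerpt writes $bin(f_A(yw_i))$, which I would read as the intended $f_B$) at vertices from Bob's list, so these are genuinely group-private and must be summed, not identified; whereas $val^{\overrightarrow x}(w)$ and $val^{\overrightarrow y}(v)$ are read by \emph{both} groups and must be identified. A second point to get right is that the magnitudes of $k_1,\dots,k_6$ play no role whatsoever in the potential-game property --- it holds for \emph{any} positive coefficients --- so I would remark that the specific cascade $k_6=1 < k_5 < k_4 < \cdots < k_1$ chosen earlier is needed only for the \emph{equilibrium-characterization} lemma that follows, not for Lemma~\ref{lem:potential}. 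With these observations the verification is routine term-matching, exactly parallel to Lemma~\ref{lem:exact}.
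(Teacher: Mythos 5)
Your proposal is correct and takes essentially the same approach as the paper: decompose the utilities into an identical-interest component plus an opponent-independent component (with respect to the Alice/Bob split), invoke closure of exact potential games under sums, write down the resulting potential, and observe that replacing each side by a group of players sharing that side's utility only restricts the set of unilateral deviations, so the same $\phi$ certifies the exact-potential property. You also correctly note that the $k_i$ magnitudes are irrelevant here and that the $f_A(yw_i)$ appearances in Bob's utility are typos for $f_B$ (the paper's displayed potential likewise has $N_1(w)$ where $N_2(w)$ is meant in the $k_5$ term).
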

\begin{proof}
If we view the game as a \emph{two}-player game where Alice chooses $(s,\hat{x})$ and Bob chooses $(r,\hat{y})$ the game is an exact potential game by similar arguments to those in Lemma \ref{lem:exact}. Namely it is the sum of two games where one is identical interest game and the other is opponent independent game. The potential function of the game is given by: 
\begin{align*}
\phi(v,\overrightarrow{x},
& w,\overrightarrow{y})= - k_1  \cdot d(v,w)\1_{d(v,w)\geq 2} \\
& - k_2 [d(\overrightarrow{xv},N_1(v))+d(\overrightarrow{yw},N_1(w))] \\
& - k_3 [\sum_{i\in I_{\min}(\overrightarrow{xv}) \text{ s.t. } xv_i\in B_1(v)} d(xf,bin(f_A(xv_i)) + \sum_{i\in I_{\min}(\overrightarrow{yw}) \text{ s.t. } yw_i\in B_1(w)} d(yf,bin(f_A(yw_i))] \\
& + k_4 [val^{\overrightarrow y}(v) + val^{\overrightarrow x}(w) + f_A(v)+f_B(w)] \\
& - k_5 [d(\overrightarrow{xv},N_2(v))+d(\overrightarrow{yw},N_1(w))] \\ 
& - k_6 [\sum_{i\in [m]} d(xf,bin(f_A(xv_i))+\sum_{i\in [m]} d(yf,bin(f_B(yw_i))]
\end{align*}
Note that by replacing Alice (Bob) by a group of $n+m(n+b)$ players all with the same utility we only reduced the set of possible unilateral deviations. For each one of these unilateral deviation by the two-player result the change is the utility is equal to the change in the potential. 
\end{proof}

\begin{lemma}\label{lem:npne}
Every pure Nash equilibrium of the defined $(2n+2m(n+b))$-player binary action game is of the form $(v,\overrightarrow{x},v,\overrightarrow{y})$ where $v$ is a local maximum of $f_A+f_B$ over the hypercube.
\end{lemma}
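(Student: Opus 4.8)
\textbf{Proof plan for Lemma \ref{lem:npne}.}

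The plan is to mirror the structure of Lemma \ref{lem:pne}, but to argue priority level by priority level using the hierarchy of constants $k_1 \gg k_2 \gg \cdots \gg k_6$. Recall that pure Nash equilibria are exactly the local maxima (with respect to single-bit unilateral deviations) of the potential $\phi$ from Lemma \ref{lem:potential}. First I would establish that in any such local maximum $d(v,w)\leq 1$: if $d(v,w)\geq 2$, then flipping one bit of $v$ (or $w$) toward the other changes the top-priority term $-k_1 d(v,w)\1_{d(v,w)\geq 2}$ by at least $k_1$, and since $k_1$ was chosen to dominate the maximum possible swing of all lower-priority terms under a single bit flip, the potential strictly increases — contradiction. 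I would then note that once $d(v,w)\leq 1$, the term $-k_1 d(v,w)\1_{d(v,w)\geq 2}$ is identically $0$ and plays no further role, so we may turn to the next priority.

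Next, peeling off priorities $2$ and $3$ in the same fashion: at a local maximum with $d(v,w)\leq 1$, the list $\overrightarrow{xv}$ must contain $B_1(v)$ (else a single bit flip in the $\overrightarrow{xv}$-block decreases $d(\overrightarrow{xv},N_1(v))$, and $k_2$ dominates everything below it), and similarly $\overrightarrow{yw}$ must contain $B_1(w)$; then, given that, the valuations $\overrightarrow{xf}$ must be correct on the first appearances of the vertices of $B_1(v)$ (a single bit flip decreasing $d(xf,bin(f_A(xv_i)))$ gains $k_3$, which dominates priorities $4$–$6$), and symmetrically for Bob. At this point, for a local-maximum profile, the valuation functions $val^{\overrightarrow{x}}$ and $val^{\overrightarrow{y}}$ agree with $f_A$ and with $f_B$ respectively on the balls $B_1(v)$ and $B_1(w)$ — which, since $w\in B_1(v)$ and $v\in B_1(w)$, means $val^{\overrightarrow{y}}(v)=f_B(v)$, $val^{\overrightarrow{y}}(w)=f_B(w)$ (and likewise $val^{\overrightarrow{x}}(v)=f_A(v)$, $val^{\overrightarrow{x}}(w)=f_A(w)$), provided the first appearance of $v$ in $\overrightarrow{yw}$ carries the value $f_B(v)$; I would need to check that the first-appearance convention in the definition of $val^{\overrightarrow{x}}$ together with priority $3$ forces exactly this.

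With priorities $1$–$3$ resolved, the priority-$4$ term of $\phi$ restricted to such profiles becomes $k_4[val^{\overrightarrow{y}}(v)+val^{\overrightarrow{x}}(w)+f_A(v)+f_B(w)] = k_4[f_B(v)+f_A(w)+f_A(v)+f_B(w)] = k_4[(f_A+f_B)(v)+(f_A+f_B)(w)]$, exactly the two-player expression of Equation \eqref{eq:tpot} up to the additive constants absorbed by priorities $1$–$3$ being at their maxima. Now I would argue $v=w$: if $v\neq w$ (so $d(v,w)=1$), then by distinctness of $f_A+f_B$ one of the two vertices has strictly larger $(f_A+f_B)$-value, and the corresponding group (Alice's if $(f_A+f_B)(w)>(f_A+f_B)(v)$, Bob's otherwise) has a single-bit deviation flipping one coordinate of its chosen vertex to move from the smaller to the larger — but here is the subtlety, exactly the ``naive approach'' obstacle flagged in the text: flipping that one bit moves, say, $v$ to $w$, which may make $\overrightarrow{xv}$ no longer contain $B_1$ of the new vertex. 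This is why the report was enlarged to radius $2$: since $\overrightarrow{xv}$ now contains $B_2(v)\supseteq B_1(w)$ (as $w\in B_1(v)$), after the bit flip the list still contains all of $B_1(w)$, so priorities $2$ and $3$ are not violated, and — using that priorities $5,6$ may degrade but are dominated by $k_4$ — the net change in $\phi$ is $k_4\cdot((f_A+f_B)(w)-(f_A+f_B)(v))>0$ minus a $k_5,k_6$-order loss, which is still positive by the choice of constants; contradiction. Hence $v=w$. Finally, given $v=w$, if $v$ were not a local maximum of $f_A+f_B$ over the hypercube, the neighbour $u$ of $v$ with $(f_A+f_B)(u)>(f_A+f_B)(v)$ yields, again using the radius-$2$ slack, a profitable single-bit deviation for Alice's group moving her vertex from $v$ to $u$, increasing $\phi$; contradiction. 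Therefore every pure Nash equilibrium has the claimed form.

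\textbf{Main obstacle.} The delicate point is the $v=w$ / local-maximality step: one must verify carefully that the enlargement of the report to radius $2$ genuinely permits a \emph{single-bit} deviation that changes the chosen vertex without crossing a higher-priority boundary, and that the constants $k_1,\dots,k_6$ are spaced widely enough that the guaranteed $k_4$-order gain strictly dominates any simultaneous $k_5,k_6$-order loss — as well as handling the first-appearance bookkeeping in $val^{\overrightarrow{x}}$ so that the potential really does collapse to the clean two-player expression.
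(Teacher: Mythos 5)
Your overall structure is right and you correctly identified the ``naive approach'' obstacle and the role of the radius-$2$ report. But there is a genuine gap in the order of the argument. You peel off priorities $1$--$3$, then go straight to the priority-$4$ deviation ($v\to w$) and write ``since $\overrightarrow{xv}$ now contains $B_2(v)\supseteq B_1(w)$, after the bit flip the list still contains all of $B_1(w)$.'' That assertion is never established: priorities $2$ and $3$ only force $\overrightarrow{xv}$ to contain $B_1(v)$ and the valuations to be correct on $B_1(v)$. Nothing so far says anything about the entries of $\overrightarrow{xv}$ beyond $B_1(v)$, so after the flip the priority-$2$ term $-k_2\, d(\overrightarrow{xv},N_1(w))$ could drop from $0$ to a large negative, swamping the $k_4$-order gain. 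The construction merely makes room in the list for $B_2(v)$; it does not by itself force the list to actually \emph{be} $B_2(v)$ at equilibrium.

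The paper closes this gap by handling priorities $5$ and $6$ \emph{before} returning to priority $4$: given $d(v,w)\le 1$ and that $B_1(v)$ is already present with correct valuations, the priority-$4$ term is insensitive to the entries $xv_i$ with $i\notin I_{\min}(\overrightarrow{xv})$ or $xv_i\notin B_1(v)$, so a single-bit move of such an entry toward $N_2(v)$ strictly improves the priority-$5$ term by $k_5$ while leaving priorities $1$--$4$ unchanged; hence at equilibrium $\overrightarrow{xv}\in N_2(v)$ (and likewise correct valuations for \emph{all} $i\in[m]$ via priority $6$). Only with $\overrightarrow{xv}$ pinned down to exactly $B_2(v)$ does the $v\to w$ deviation preserve priorities $2$ and $3$ at zero. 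You gesture at priorities $5,6$ only as quantities that ``may degrade but are dominated by $k_4$,'' missing that their real job in the proof is to establish the precondition you assumed. Insert the $N_2(v)$ step (and the sixth-level analogue) between your priority-$3$ step and the $v=w$ step, and the rest of your argument — including the final local-maximality step — matches the paper.
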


\begin{proof}
The proof proceeds by narrowing the set of equilibria candidates according to the prioritization levels, with a twist at the fourth priority level.

First, in every equilibrium $d(v,w)\leq 1$ because otherwise there exists a player in Alice's $v$ group who can switch his strategy and decrease the distance by 1. Such a switch increases the first term in the utility of the group by $k_1$. By the choice of $k_1$, any change in the other terms of utilities is smaller.

Second, in every equilibrium $\overrightarrow{xv}\in N_1(v)$, because otherwise there exists a player in Alice's $\overrightarrow{xv}$ group who can switch his strategy and decrease the distance by 1. Such a switch does not effect the first term of the utility, and it increases the second term by $k_2$. By the choice of $k_2$, any change in the other terms of utilities is smaller. Similarly for Bob we have $\overrightarrow{yw}\in N_1(w)$.

Third, in every equilibrium for every $i\in I_{\min}(\overrightarrow{xv})$ such that $xv_i\in B_1(v)$ we have $xf_i=bin(f_A(xv_i))$. Simply speaking, all first appearances of elements in $B_1(v)$ (which indeed appear by the argument regarding the second priority level) have correct valuation. If it wasn't so, then there exists a player in Alice's $\overrightarrow{xf_i}$ group who can switch his strategy and decrease the distance by 1. Such a switch does not affect the first two terms of the utility, and it increases the third term by $k_3$. By the choice of $k_3$, any change in the other terms of utilities is smaller. Similarly for Bob, all first appearances of elements in $B_1(w)$ have correct valuation.

Now we jump to the fifth and the sixth priority levels. Given that $v,w$ are neighbours (or the same vertex) and their values already appear in the report $\overrightarrow{x}$ the terms of the utility in the fourth priority level are not affected by the vertices $xv_i$ such that $i\notin I_{\min}(\overrightarrow{xv})$ or $xv_i \notin B_1(v)$. Therefore, we can deduce that necessarily in equilibrium we have $\overrightarrow{xv}\in N_2(v)$ because otherwise some player in the $\overrightarrow{xv}$ group can decrease the distance by 1 without affecting any of the first four terms, and increase the fifth term by $k_5$. Any change in the last terms is smaller.
Similarly we can argue for the sixth priority level, that the values of $xf_i$ for the corresponding indices do not affect any other term.
From these arguments it follows that in any equilibrium both Alice (and Bob) report a list $\overrightarrow{xv}$ ($\overrightarrow{yw}$) that contains exactly all the vertices in the ball of radius 2 around $v$ ($w$), moreover all valuations of all these vertices are correct.

Now we go back to the fourth priority. Assume by way of contradiction that $v\neq w$. Similarly to the two-player case, the fourth term in the \emph{potential function} of the game is $val^{\overrightarrow{x}}(w)+val^{\overrightarrow{y}}(v)+f_A(v)+f_B(w)$, which under all the above restrictions of equilibria is equal to $f_A(w)+f_B(v)+f_A(v)+f_B(w)$. 

Assume by way of contradiction that $v\neq w$, then we may assume w.l.o.g. that $f_A(w)+f_B(w)\geq f_A(v)+f_B(v)+1$ (we recall that we may assume that the sum defers at adjacent vertices and has integer values), then there exists a player in Alice's $v$ group who can switch his bit and turn the vertex $v$ into $w$. Let us examine the effect of this change on the potential. The first priority level term remains 0. The key observation is that the second and third priority level terms also remain 0. Note that the list $\overrightarrow{xv}$ includes all the vertices within radius 2 from $v$, and in particular all the vertices within radius 1 from $w$. Similarly the valuations $\overrightarrow{xf}$ of these vertices remain correct. Therefore the potential increases by at least $k_4$ in the first four terms, and any change in the fifth and sixth terms is smaller.

Finally for the case of $v=w$ where $v$ is not a local maximum we apply very similar arguments: There exists a player in Alice's group who can increase the potential of the game by $k_3$ and change only the fifth and sixth terms of the potential.  
 
\end{proof}
Lemmas \ref{lem:potential}, and \ref{lem:npne} complete the proof of the $2^{\Omega(\sqrt[3]{n})}$ bound.

\subsection{Proof of Theorem \ref{theo:n-pot}: Improving the Bound to $2^{\Omega(n)}$}\label{sec:2n}
The presented above reduction has $\Theta(n^3)$ players, which yields a lower bound of $2^{\Omega(\sqrt[3]{n})}$ on the problem of finding a pure Nash equilibrium. Here we modify the reduction to have $\Theta(n)$ players, which implies a lower bound of $2^{\Omega(n)}$ on the problem of finding a pure Nash equilibrium. The idea is to reduce the unnecessary ``wasting" of players in the reduction. In the presented reduction Alice reports to Bob the valuations of \emph{all} vertices within radius 2 around $v$ (there are $\Theta(n^2)$ such vertices). However, the arguments of the proof of Theorem \ref{theo:grid} can be modified to show the existence of hard instances over the hypercube where for most of the neighbours within radius 2 from $v$, Alice and Bob \emph{know} the valuations of each other over these vertices. In fact, for these hard instances there exist only a \emph{constant} number of neighbours for which Alice does not know Bob's valuation, and Bob does not know Alice's. In the modified reduction, Alice's group will report only the valuation of the unknown vertices, which will require only $O(n)$ players for her group.

We start with a modification of Lemma \ref{lem:hyp}, which embeds the constant degree graph $G$ in $\Hyp_n$. We present an embedding of $G$ in $\Hyp_n$ with the additional property that every ball of radius 2 in $\Hyp_n$ contains at most \emph{constant} number of vertices of the embedding's image. Formally, given an embedding $(\varphi,\chi)$ where $\varphi:V_G \ra \{0,1\}^n$, $\chi:E_G \ra P(\Hyp_n)$, we denote the \emph{image of the embedding} by $Im(G)=\{w\in \{0,1\}^n: w\in \varphi(V_G) \cup \chi(E_G)\}$.

\begin{lemma}\label{lem:hyp-const}
Let $G$ be the graph with $N$ vertices that is defined in Section \ref{sec:veto} (the constant degree graph for which Theorem \ref{theo:opt}(\ref{theo:opt-bounded}) holds). 
The graph $G$ can be VIED-embedded in $\Hyp_n$ for $n=O(\log N)$, such that for every $w\in\{0,1\}^n$ we have\footnote{More concretely, $n=3\log N+333$ and $|B_2(w)\cap Im(G)|\leq 73$.} $|B_2(w)\cap Im(G)|=O(1)$.
\end{lemma}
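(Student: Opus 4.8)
The plan is to revisit the embedding of Lemma~\ref{lem:hyp} and re-engineer it so that the image of every vertex and every path lives in a controlled ``neighborhood'' of the hypercube, which will guarantee sparsity inside every radius-2 ball. Recall that in Lemma~\ref{lem:hyp} the vertices of $G$ are embedded using Hamiltonian-path codes on the index blocks and the edge paths are realized by a short, carefully ordered sequence of bit flips that touches only the edge block, the counter block, and a handful of index bits; the parity bit and the $108$-color edge block were the gadgets that made the paths disjoint and vertex-isolated. The key structural observation I would exploit is that every vertex in $Im(G)$ agrees with some $\varphi(v)$ on \emph{almost all} coordinates: a vertex on an edge path differs from the ``layer-$k_1$'' endpoint of that path in only $O(1)$ bits (the active edge-color bit, the $O(1)$ counter bits, and the at most a few index bits being rewritten), by construction of the $9$-step flipping sequence. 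So the whole image is covered by the balls of radius $O(1)$ around the $3M^6$ vertex-images $\varphi(v)$.

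The first step is therefore to fix a small constant $\rho$ (determined by the length of the flipping sequence — each edge path has length $O(1)$ since it performs a bounded number of bit flips, so $\rho$ can be taken around $10$) such that $Im(G)\subseteq \bigcup_{v\in V_G} B_\rho(\varphi(v))$. The second step is to bound, for a fixed $w\in\{0,1\}^n$, the number of $v\in V_G$ with $B_2(w)\cap B_\rho(\varphi(v))\neq\emptyset$, i.e. with $d(w,\varphi(v))\leq \rho+2$. Here is where I need the embedding $\varphi$ itself to be \emph{spread out}: distinct vertices of $G$ must map to hypercube points that are pairwise far apart. This should follow from the block structure — $\varphi(v)$ and $\varphi(v')$ for $v\neq v'$ differ in at least one \emph{entire} Hamiltonian-code block coordinate change, but a single index difference only forces Hamming distance $\geq 1$, which is not enough. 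The fix is to pad: replace each Hamiltonian-path code $h_{k}$ by a code with large minimum distance (e.g. compose the Hamiltonian path labeling with an error-correcting code, or simply use a sparser labeling on a slightly larger block), so that any two distinct vertex-images are at Hamming distance $\geq 2\rho+5$. Then the balls $B_{\rho+2}(\varphi(v))$ are pairwise disjoint, so at most one $v$ contributes, giving $|B_2(w)\cap Im(G)|\leq |B_{\rho+2}(\varphi(v))|=O(1)$ for the unique relevant $v$ (and $0$ otherwise). The dimension grows only by an additive constant times the number of blocks, i.e. $n=3\log N+O(1)$, preserving the $\Omega(2^{n/2})$-type conclusions downstream; the footnote's explicit $n=3\log N+333$ and bound $73$ would come from tracking these constants.

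The third step is bookkeeping: re-verify that the padded embedding is still a VIED embedding. The disjointness and vertex-isolation arguments of Lemma~\ref{lem:hyp} used only (a) the edge coloring inducing an order on edges, (b) the counter/parity gadgets allowing recovery of the previous node on a path, and (c) the parity bit being untouched. Padding the index blocks with an error-correcting layer changes only how an index is \emph{encoded}, not the flip-sequence logic, so all three properties survive essentially verbatim — one just re-runs the ``recover the previous node'' case analysis with the new codes. The main obstacle I anticipate is precisely the tension in step two: making $\varphi$ spread out (large pairwise distance of vertex-images) while \emph{simultaneously} keeping the edge paths short (so $\rho=O(1)$) and the dimension blow-up only additive-constant, since a naive distance-amplification of the codes would lengthen the edge paths or inflate $n$ multiplicatively. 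Resolving this requires checking that the amplified code can still be traversed by single-bit flips in a bounded number of steps — e.g. by keeping the Hamiltonian/Gray-code structure on the ``information'' bits and only appending a few ``check'' bits that are flipped in lockstep — which is the delicate part of the argument.
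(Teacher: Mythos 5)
Your first step — the covering $Im(G)\subseteq\bigcup_{v}B_\rho(\varphi(v))$ with $\rho=O(1)$, since each path $\chi(\{u,v\})$ performs only $O(1)$ bit flips — is correct and is also implicitly the starting observation of the paper. But your steps two and three do not hold up, and the tension you flag at the end is not merely ``delicate'': it is a genuine impossibility.

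The ball-disjointness condition $d(\varphi(v),\varphi(v'))\geq 2\rho+5$ is incompatible with the very covering you rely on. For any edge $\{v,u\}\in E_G$, the path $\chi(\{v,u\})$ is a connected sequence of hypercube vertices going from $\varphi(v)$ to $\varphi(u)$, and by your covering each of its vertices lies in $B_\rho(\varphi(v))\cup B_\rho(\varphi(u))$. Hence there exist two \emph{consecutive} path vertices, one in $B_\rho(\varphi(v))$ and the next in $B_\rho(\varphi(u))$, at Hamming distance $1$ from each other; the triangle inequality then gives $d(\varphi(v),\varphi(u))\leq 2\rho+1$. So no choice of index codes, check bits, or amplification factor can ever make all the $B_{\rho+2}(\varphi(v))$ pairwise disjoint — adjacent vertex-images always stay within $2\rho+1$, and any attempt to push them farther apart necessarily lengthens the paths and hence $\rho$ in lockstep. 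Separately, the final bound $|B_2(w)\cap Im(G)|\leq|B_{\rho+2}(\varphi(v))|=O(1)$ is also wrong as written: $|B_{\rho+2}(\varphi(v))|=\Theta(n^{\rho+2})$ as a subset of $\Hyp_n$, not a constant, and if you instead meant $|B_{\rho+2}(\varphi(v))\cap Im(G)|$ then you have just restated the quantity you set out to bound.

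The paper sidesteps both issues with a much blunter move: replace every coordinate of $\Hyp_{n'}$ by three copies, so $n=3n'$ and \emph{every} Hamming distance is multiplied by $3$, including path lengths and the separation between images. It makes no attempt to get disjoint balls (distinct $\varphi(v),\varphi(v')$ end up only at distance $\geq3$, far less than $2\rho$). Instead it directly counts $B_2(w)\cap Im(G)$ at the worst-case center $w=\varphi(v)$: the factor-$3$ blowup guarantees that every other vertex image is at distance $\geq3>2$, that paths of edges not incident to $v$ are at distance $\geq4$ (from vertex isolation, distance $\geq2$, tripled to $\geq6$, minus at most $2$ for partial-flip points), and that paths of independent edges are $\geq3$ apart; what remains in $B_2(\varphi(v))$ is $\varphi(v)$ itself plus the first two vertices of each of the $36$ incident paths, giving $1+2\cdot36=73$. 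In short, the paper needs only a modest, uniform $\times 3$ separation together with the existing VIED guarantees, whereas your argument requires a separation that exceeds $2\rho$, which is structurally impossible for a path embedding.
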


\begin{proof}
We ``sparse" the embedding of Lemma \ref{lem:hyp} to reach a situation where every pair of independent edges are embedded to paths that are within a distance of at least 3 one from the other. This can be done, for instance, by embedding $G$ in a hypercube of dimension $n=3(\log N+111)$ rather than dimension $n'=\log N +111$, when we replace every vertex in $\Hyp_{n'}$ by three copies of itself. Such a change multiplies the hamming distance by a factor of 3. For such an embedding the maximal number of vertices of $Im(G)$ in a ball or radius 2 is obtained at a vertex $w\in \phi(V_G)$ and is equal to $1+2\cdot 32$; the vertex and two vertices of every one of the 36 embedded edges. 
\end{proof}

We proceed with a short presentation of the arguments that prove Theorem \ref{theo:opt}(\ref{theo:opt-hypercube}) from Theorem \ref{theo:opt}(\ref{theo:opt-bounded}), followed by a Corollary that will be essential in our reduction. The arguments below are very similar, but yet slightly defer from the proof that is presented in Section  \ref{sec:main-pr}.

\paragraph{Proof of Theorem \ref{theo:opt}(\ref{theo:opt-hypercube}) from Theorem \ref{theo:opt}(\ref{theo:opt-bounded}). } We reduce $\SumLS(G)$ to $\SumLS(\Hyp_n)$ using the VIED embedding $(\varphi,\chi)$ of Lemma \ref{lem:hyp-const}. Let $Im(G)\subset \Hyp_n$ be the image of the embedding and let $w^*\in Im(G)$ be some fixed vertex. Given an instance $(f_A,f_B)$ of $\SumLS(G)$ we define an instance $(f'_A,f'_B)$ of $\SumLS(\Hyp_n)$ by
\begin{align*}
f'_A(w)=
\begin{cases}
f_A(v) 									  &\text{if } w=\varphi(v)\in \varphi(V_G) \\
\frac{k}{l} f_A(v) + \frac{l-k}{l}f_A(v') &\text{if } w\in \chi(\{v,v'\})\subset \chi(E_G) \\
-d(w,w^*)								  &\text{otherwise.}
\end{cases}\\
f'_B(w)=
\begin{cases}
f_B(v) 									  &\text{if } w=\varphi(v)\in \varphi(V_G) \\
\frac{k}{l} f_B(v) + \frac{l-k}{l}f_B(v') &\text{if } w\in \chi(\{v,v'\})\subset \chi(E_G) \\
-d(w,w^*)										  &\text{otherwise.}
\end{cases}
\end{align*}
where in the case $w\in \chi(\{v,v'\})$ we assume that $w$ is the $k$'th element in the path $\chi(\{v,v'\})$ and $l$ is the total length of this path. Simply speaking, we set the functions $f'_A,f'_B$ to have the values $f_A(v)$ on the embedded vertices $\varphi(v)$. On intermediate vertices along a path that embeds an edge we set the value to be a weighted average of the two extreme valuations. For vertices out of $Im(G)$ we set $f'_A(w)=f'_B(w)$ to be a negative constant that does not depend on the instance $(f_A,f_B)$.

It can be easily checked that the local maxima of $f'_A+f'_B$ over $\Hyp_n$ are precisely $\{\varphi(v): v \text{ is a local maximum of } f_A+f_B \text{ over } G\}$.
\begin{corollary}\label{cor:img-fixed}
Finding local maximum in $\Hyp_n$ with the promise of $f'_A(w)=f'_B(w)=-d(w,w^*)$ for all $w\notin Im(G)$, requires $2^{\Omega(n)}$ communication. 
\end{corollary}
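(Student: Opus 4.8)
The plan is to observe that the reduction used to derive Theorem~\ref{theo:opt}(\ref{theo:opt-hypercube}) from Theorem~\ref{theo:opt}(\ref{theo:opt-bounded}) already produces instances of $\SumLS(\Hyp_n)$ of exactly the promised form, so the corollary is essentially a restatement of what was just proved together with the blow-up in the number of vertices. More precisely, the VIED embedding $(\varphi,\chi)$ of Lemma~\ref{lem:hyp-const} takes the hard constant-degree graph $G$ with $N$ vertices and embeds it into $\Hyp_n$ for $n = 3\log N + O(1)$, i.e.\ $2^n = \Theta(N^3) = \poly(N)$. The map $(f_A,f_B) \mapsto (f'_A,f'_B)$ defined just before the corollary sends every instance of $\SumLS(G)$ to an instance of $\SumLS(\Hyp_n)$ that satisfies $f'_A(w) = f'_B(w) = -d(w,w^*)$ for all $w \notin Im(G)$ by construction, and — as noted in the text — its local maxima are exactly $\{\varphi(v) : v \text{ is a local maximum of } f_A + f_B \text{ over } G\}$.

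The key steps, in order, are: (i) recall that this transformation is computable with no communication, since $f'_A$ depends only on Alice's input $f_A$ and the fixed, commonly known data $(\varphi,\chi,w^*)$, and symmetrically for $f'_B$; (ii) recall the bijection between local maxima established above, together with the fact that $\varphi(v)$ is recoverable from $v$ and vice versa (injectivity of $\varphi$ is part of the VIED definition), so that any protocol solving $\SumLS(\Hyp_n)$ on these promised instances yields a protocol for $\SumLS(G)$ with the same communication cost up to additive $O(\log N)$ to translate the answer back; and (iii) invoke Theorem~\ref{theo:opt}(\ref{theo:opt-bounded}), which gives $CC(\SumLS(G)) = \Omega(\sqrt N)$. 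Since $N = \Theta(2^{n/3})$, this lower bound is $\Omega(2^{n/6}) = 2^{\Omega(n)}$, which is the claimed bound.

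There is essentially no serious obstacle here; the only point requiring a moment of care is checking that restricting the adversary's inputs to the promised form (all $w \notin Im(G)$ carry the value $-d(w,w^*)$) does not make the problem easier — but this is immediate because the reduction \emph{only ever produces} inputs of that form, so the promised problem is at least as hard as the image of the reduction, which is at least as hard as $\SumLS(G)$. One should also note that the value $-d(w,w^*)$ is a legitimate potential value after shifting by a constant so that all values lie in $\{1,\dots,W\}$ with $W = \poly(N)$; this is exactly the bookkeeping already done in Section~\ref{sec:veto-sum} and in the displayed definition of $f'_A, f'_B$, and it does not affect the argument. The corollary is then proved.
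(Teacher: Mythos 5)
Your proposal is correct and follows the paper's own argument: the corollary is an immediate by-product of the reduction displayed just above it (mapping an instance $(f_A,f_B)$ of $\SumLS(G)$ to the promised instance $(f'_A,f'_B)$ of $\SumLS(\Hyp_n)$ with no communication), combined with Lemma~\ref{lem:hyp-const} giving $n=3\log N+333$ so that $\Omega(\sqrt N)=\Omega(2^{n/6})=2^{\Omega(n)}$. The one small stylistic remark is that translating a local maximum $\varphi(v)$ back to $v$ costs no communication at all (the output is a function of the transcript), so your ``additive $O(\log N)$'' hedge is unnecessary, but it is harmless and does not affect the bound.
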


Now we construct a potential game with $\Theta(n)$-players that solves the \emph{promise} $\SumLS$ problem of Corollary \ref{cor:img-fixed}.

We mimic the arguments of the previous $2^{\sqrt[3]{n}}$ bound with one change: the reports $\overrightarrow{x}$ and $\overrightarrow{y}$ are done on vertices in $B_2(v)\cap Im(G)$ rather than $B_2(v)$. By Lemma \ref{lem:hyp-const} it is sufficient to report $73$ vertices (rather than $\Theta(n^2)$). A report consists of $\overrightarrow{x}=(\overrightarrow{xv},\overrightarrow{xf})$ where $\overrightarrow{xv}$ is a 73-tuple of vertices, and $\overrightarrow{xf}$ is a 73-tuple of valuations. The valuation function $val^{\overrightarrow{x}}(w)$ is modified to be $val^{\overrightarrow{x}}(w) = val(xf_i)$ if  $w=xv_i$ for  $i\in I_{\min}(\overrightarrow{xv})$; Otherwise, if $w\notin Im(G)$ we set $val^{\overrightarrow{x}}(w)=-d(w,w^*)$; Otherwise, we set $val^{\overrightarrow{x}}(w)=0$.
Similarly for Bob. 

We also modify the definition of the neighbour vertices of $v\in \Hyp_n$: 
\begin{align*}
N_1(v)&:=\{(v_1,...,v_73): \{v_1,...,v_73\} \supset B_1(v)\cap Im(G)\}\subset \{0,1\}^{73n} \\
N_2(v)&:=\{(v_1,...,v_9): \{v_1,...,v_73\} \supset B_2(v) \cap Im(G)\}\subset \{0,1\}^{73n}.
\end{align*}
Note that by the Lemma \ref{lem:hyp-const} $N_1(v),N_2(v)\neq \emptyset$ for all vertices $v$.

From here, we apply similar arguments to those in Section \ref{sec:npot-pr} to prove a reduction from the local search promise problem of Corollary \ref{cor:img-fixed} to pure Nash equilibrium in potential games. The only additional argument that is needed is that $val^{\overrightarrow{x}},val^{\overrightarrow{y}}$ have the correct valuation for all vertices $w\notin Im(G)$ (in particular those within radius 2).

\paragraph{Acknowledgement.} We are very grateful to Mika G{\"o}{\"o}s and Aviad Rubinstein for drawing our attention to pebbling games \cite{GP14}, which allowed us to prove \emph{optimal} communication bounds for several families of graphs.

\bibliography{CCPLS}

\appendix

\section{Identifying Ordinal Potential Games}\label{ap:ident-ord}

We will prove two results, one for two-player $N$-action games and one for $n$-player $2$ action games. In both we use the following two-player two-action game for $x,y\in \{0,2\}$:

\begin{center}
\begin{tabular}{|l|l|}
\hline
$2,1$ & $1,2$ \\ \hline
$1,y$ & $x,1$ \\ \hline
\end{tabular}
\end{center}
\noindent
This game has a better-reply cycle if and only if $x=y=2$.

\begin{proposition}\label{pro:2-ord}
Recognizing whether a two-player $N$-action game is an ordinal potential game requires $\poly(N)$ bits of communication, even for randomized protocols.
\end{proposition}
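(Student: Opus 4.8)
The plan is to reduce from set disjointness, following the standard template for showing that recognizing a structured promise class is hard. Alice and Bob each hold a subset of $[N]$, say $X$ and $Y$ respectively, and we must embed this data into a two-player $N$-action game so that the game is an ordinal potential game precisely when $X \cap Y = \emptyset$. The key building block is the little $2\times 2$ gadget displayed just above the statement: for parameters $x,y \in \{0,2\}$ it has a better-reply cycle (hence is \emph{not} an ordinal potential game) if and only if $x = y = 2$. So the idea is to tile the $N \times N$ payoff matrix with $N/2$ (or $\Theta(N)$) disjoint copies of this $2\times 2$ gadget along the diagonal, where the $k$-th copy uses $x_k = 2 \cdot \1_{k \in X}$ controlled by Alice and $y_k = 2 \cdot \1_{k \in Y}$ controlled by Bob; all off-diagonal entries are filled with payoffs that are uniformly dominated (e.g.\ very negative for both players, or strictly worse than anything inside the gadgets) so that no better-reply path ever leaves a single gadget block and no new cycles are created across blocks.

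First I would make precise the claim that a game is an ordinal potential game if and only if it has no better-reply cycle; this is the standard characterization and lets me work purely combinatorially with the better-reply graph on action profiles. Second, I would verify the gadget property: enumerating the four profiles of the $2\times 2$ game and the unilateral better-replies shows that a directed cycle through all four profiles exists exactly when both the "$x$" entry and the "$y$" entry equal $2$, and otherwise the better-reply graph on those four profiles is acyclic. Third, I would describe the block construction: put the gadgets on diagonal $2\times 2$ blocks $\{2k-1,2k\}\times\{2k-1,2k\}$, and on every off-block entry $(i,j)$ assign both players a payoff strictly smaller than every payoff appearing in any gadget (so that from any off-block profile there is always a better reply moving toward the diagonal block determined by, say, Alice's current row, and once inside a block both players' better replies keep them inside it). Then the better-reply graph of the whole game is a disjoint union, over blocks, of the gadget graphs together with some acyclic "funnel" edges from off-block profiles, so the whole game has a cycle iff some gadget does, iff some $k$ has $k \in X \cap Y$, iff $X \cap Y \neq \emptyset$. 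Finally, any $o(N)$-communication protocol recognizing ordinal potential games would decide disjointness on $\Theta(N)$-element sets with $o(N)$ communication, contradicting the $\Omega(N)$ randomized lower bound for disjointness; this gives the $\poly(N)$ (indeed $\Omega(N)$) bound, even for randomized protocols.

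The main obstacle I anticipate is the bookkeeping of the off-block entries: I must choose the filler payoffs so that (i) no spurious better-reply cycle is introduced anywhere in the full $N\times N$ game, (ii) the promise is clean, i.e.\ the constructed game is genuinely an ordinal potential game in the disjoint case (so I should exhibit an explicit ordinal potential, for instance one that is constant and large on the "good" profiles of each block and strictly decreasing along the funnel edges), and (iii) the reduction is \emph{local}, so that Alice can compute her entire payoff matrix from $X$ alone and Bob from $Y$ alone — which is immediate here since each gadget's $x_k$ depends only on $X$ and $y_k$ only on $Y$, and the fillers are input-independent. A secondary, purely cosmetic point is to handle $N$ not a multiple of $2$ by padding with a dummy action, which does not affect the asymptotics.
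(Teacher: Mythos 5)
Your reduction is correct and uses the same gadget and the same source problem (disjointness) as the paper, but the layout and the isolation mechanism differ in an interesting way. You place one gadget per $k\in[N/2]$ on the diagonal $2\times 2$ block $\{2k-1,2k\}\times\{2k-1,2k\}$ and pad every off-block entry with a uniformly low constant; the better-reply graph then funnels strictly into the blocks and never leaves one, so cycles live entirely inside some diagonal gadget, and the game has a cycle iff $X\cap Y\neq\emptyset$ for disjointness on $N/2$ elements, giving $\Omega(N)$. The paper instead tiles \emph{all} $N\times N$ pairs $(i,j)$ with gadgets carrying parameters $(x_{i,j},y_{i,j})$ and, rather than padding, adds a block-ranking side payment $u''(a,b)=(3\lceil a/2\rceil,3\lceil b/2\rceil)$ whose increments (at least $3$) dominate any gadget swing (at most $2$), so every cross-block better-reply is strictly monotone in the block index and hence acyclic; this lets it reduce from disjointness on $N^2$ elements in a $2N$-action game, i.e.\ it actually yields the quadratically stronger $\Omega(N^2)$ bound. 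Both satisfy the stated $\poly(N)$ claim, so your argument is a valid (and somewhat simpler-to-visualize) alternative; what you give up is the extra $N$ factor, since your diagonal placement uses only $\Theta(N)$ of the $\Theta(N^2)$ gadget slots. Two small points to tidy up when you write it out: (i) be careful that the ``no better-reply cycle'' characterization you invoke is for \emph{generalized} ordinal potential games, and that in your construction the off-block profiles introduce indifferent deviations, so you should check (as you anticipate) that a genuine \emph{ordinal} potential exists in the disjoint case --- assigning a low constant to all off-block profiles and the gadget potential inside each block works, since indifferent moves stay off-block and keep the potential constant; (ii) your sketch says the potential should be ``constant and large on the good profiles and strictly decreasing along the funnel edges,'' which has the monotonicity backwards --- it must strictly \emph{increase} along funnel edges, since those are the strict utility-improving deviations.
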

\begin{proof}
Denote by $u'$ the two-player $2N\times 2N$ table that contains $N\times N$ copies of this game with the parameters $(x_{i,j},y_{i,j})_{i,j\in [N]}$. We denote by $u''$ the two-player $2N\times 2N$ game with the payoffs $u''(a,b)=(3\lceil \frac{a}{2}\rceil,3\lceil \frac{b}{2}\rceil)$. And we denote $u=u'+u''$. The game $u$ has a better-reply cycle if and only if there exist $i,j\in [N]$ such that $x_{i,j}=y_{i,j}=2$. Indeed if $x_{i,j}=y_{i,j}=2$, since we have added a constant payoff of $3i$ to player $1$ ($3j$ to player 2) to the $(i,j)$ copy of the game, the better reply cycle remains a better reply cycle in $u$. If $(x_{i,j},y_{i,j})\neq (2,2)$ for all $i,j$ then we have no better reply cycle within the copies of the $2\times 2$ games, and we have no better reply cycles across the $2\times 2$ games because at least one player has dominant strategy. Therefore the determination of ordinal potential property is as hard as disjointness, which requires $\poly(N)$ communication, even with randomized communication.
\end{proof}

\begin{proposition}\label{pro:n-ord}
Recognizing whether an $n$-player $2$-action game is an ordinal potential game requires $2^{\Omega(n)}$ bits of communication, even for randomized protocols.
\end{proposition}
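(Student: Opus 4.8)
The plan is to reduce from set disjointness on a universe of size $2^{n-2}$, following the template of the proof of Proposition~\ref{pro:2-ord} but using the $n-2$ ``extra'' players to index the copies of the $2\times2$ gadget, rather than using extra actions. Given disjointness instances $S_A,S_B\subseteq\{0,1\}^{n-2}$ held by Alice and Bob, I build an $n$-player binary-action game in which players $1$ and $2$ play the gadget above while players $3,\dots,n$ jointly specify a copy index $c=(a_3,\dots,a_n)\in\{0,1\}^{n-2}$; in the copy indexed by $c$ the gadget parameters are $x_c=2$ if $c\in S_A$ (and $x_c=0$ otherwise) and $y_c=2$ if $c\in S_B$ (and $y_c=0$ otherwise). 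Then player $1$'s utility is a function of $a_1,a_2,c$ alone and encodes $S_A$, player $2$'s utility likewise encodes $S_B$, and the utilities of players $3,\dots,n$ are fixed regardless of the instance --- I set $u_i(a)=a_i$ for $i\ge3$, so each such player strictly prefers action $1$ independently of everyone else (the reason for this choice is discussed below). Hence Alice can hold player $1$'s utility and Bob player $2$'s, with the remaining utilities publicly known.

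Next I would verify the two directions. If $S_A\cap S_B\ne\emptyset$, fix $c^\ast$ in the intersection and fix the actions of players $3,\dots,n$ to encode $c^\ast$; then players $1$ and $2$ face the gadget with parameters $(2,2)$, which has a better-reply $4$-cycle, and this is an improvement cycle in the full game, so the game has no (generalized) ordinal potential and in particular is not an ordinal potential game. Conversely, if $S_A\cap S_B=\emptyset$, every copy has $(x_c,y_c)\ne(2,2)$, so (by a direct check of the three possible parameter pairs, exactly as in Proposition~\ref{pro:2-ord}) the gadget in copy $c$ admits an ordinal potential $\psi_c:\{0,1\}^2\to\{0,1,2,3\}$; I would then exhibit the global ordinal potential $\phi(a)=C\sum_{i\ge3}a_i+\psi_c(a_1,a_2)$ with $c=(a_3,\dots,a_n)$ and $C$ a constant exceeding the range of every $\psi_c$. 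One checks directly that a deviation of player $1$ or $2$ fixes $c$ and changes $\phi$ by $\psi_c$, matching the sign of the utility change, while a deviation of a player $i\ge3$ changes $\phi$ by $\pm C$ plus a bounded correction, matching the sign of the change in $u_i$. Thus the game is an ordinal potential game if and only if $S_A\cap S_B=\emptyset$, and the $\Omega(2^{n-2})$ randomized communication lower bound for disjointness yields the claimed $2^{\Omega(n)}$ bound.

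The step I expect to be the crux is producing the global ordinal potential in the disjoint case, and it is the reason for assigning the index players the utility $u_i(a)=a_i$ instead of a constant. If players $3,\dots,n$ had constant utilities, the defining sign condition of an ordinal potential would force $\phi$ to be independent of their actions, hence a function of $(a_1,a_2)$ only; but the three admissible parameter pairs $(0,0),(0,2),(2,0)$ impose incompatible orderings on the four profiles $(a_1,a_2)$ --- for instance $(0,2)$ requires $\psi(2,2)<\psi(1,2)$ while $(2,0)$ requires the reverse --- so when $S_A$ and $S_B$ are both nonempty no such $\phi$ exists and the construction would break. Giving each index player a strict dominant action makes the ordinal-potential requirement for its deviations non-vacuous in a way that permits $\phi$ to depend on $c$, which is what lets distinct copies use distinct internal potentials $\psi_c$; this is the analogue of the tie-breaking term $3\lceil a/2\rceil$ in the proof of Proposition~\ref{pro:2-ord}. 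The remaining case checks are routine.
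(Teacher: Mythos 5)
Your proof is correct and takes essentially the same approach as the paper: reduce from disjointness by using the non-gadget players to index copies of the $2\times2$ gadget, giving them the dominant-strategy utilities $u_i(a)=a_i$ so that the ordinal-potential constraint is non-vacuous and a global potential can depend on the index. Your argument is in fact somewhat more explicit than the paper's (which says only ``similarly to the previous arguments''), since you exhibit the global ordinal potential $\phi(a)=C\sum_{i\ge3}a_i+\psi_c(a_1,a_2)$ for the disjoint case and explain precisely why constant utilities for the index players would force $\phi$ to be independent of them and break the construction.
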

\begin{proof}
Consider an $(n+2)$-player game where for each profile $a\in \{0,1\}^n$ the last two players are playing the above $2\times 2$ game with parameters $x_a,y_a$. For the first $n$ players we set the utilities such that $1$ is dominant strategy (e.g., $u_i(a_i,a_{-i})=a_i$ for $i\in [n]$). Similarly to the previous arguments, the game contains a better reply cycle if and only if $x_a=y_a=2$ for some $a\in \{0,1\}^n$. Again, we obtain a reduction to disjointness. 
\end{proof}

\end{document}